\documentclass[12pt]{article}
\DeclareUnicodeCharacter{22C6}{\ensuremath{\star}}
\usepackage[left=1.0in,right=1.0in,top=1.0in,bottom=1.0in]{geometry}
\usepackage{setspace}
\onehalfspacing
\usepackage{fancyhdr}
\pagestyle{fancy}
\fancyhf{}
\fancyfoot[C]{\thepage}

\usepackage{lmodern}
\usepackage{microtype}
\usepackage[normalem]{ulem}
\usepackage{color}
\usepackage[T1]{fontenc}

\usepackage{titlesec}
\usepackage[title]{appendix}

\usepackage{amssymb, amsmath, amsfonts, mathtools}
\usepackage{breqn}
\usepackage{dsfont}

\usepackage{array, booktabs, makecell, cellspace}
\usepackage{siunitx}
\usepackage{threeparttable, rotating, tabularx}
\usepackage[linesnumbered,ruled,vlined]{algorithm2e}
\usepackage{algorithmic}
\usepackage{graphicx, caption, subcaption}
\usepackage{pdflscape, tikz}
\usetikzlibrary{arrows.meta, positioning, shapes, matrix, calc}
\captionsetup{font=small,labelfont=bf,justification=justified}
\captionsetup[figure]{labelfont=bf}

\usepackage{float}

\usepackage{enumitem}
\usepackage{chngcntr}

\usepackage{xurl}
\usepackage[backend=biber,
            style=authoryear,
            maxcitenames=3,
            mincitenames=1,
            maxbibnames=99,
            giveninits=true,
            uniquename=init,
            uniquelist=true,
            dashed=false,
            uniquename=false,
            urldate=long,
            url=true,
            natbib=true]{biblatex}

\usepackage{xcolor}
\definecolor{citationcolor}{RGB}{186, 12, 47}


\DeclareCiteCommand{\cite}
  {\usebibmacro{prenote}}
  {\usebibmacro{citeindex}%
   \printtext[bibhyperref]{\color{citationcolor}\usebibmacro{cite}}}
  {\multicitedelim}
  {\usebibmacro{postnote}}

\DeclareCiteCommand{\parencite}[\mkbibparens]
  {\usebibmacro{prenote}}
  {\usebibmacro{citeindex}%
   \printtext[bibhyperref]{\color{citationcolor}\usebibmacro{cite}}}
  {\multicitedelim}
  {\usebibmacro{postnote}}

\AtBeginDocument{%
  \renewbibmacro*{name:andothers}{%
    \ifboolexpr{
      test {\ifnumequal{\value{listcount}}{\value{liststop}}}
      and
      test \ifmorenames
    }
      {\ifnumgreater{\value{liststop}}{3}
        {\finalandcomma\addspace\bibstring{et\space al\adddot}}
        {\finalandcomma\addspace\bibstring{andothers}}}
      {}}%
}

\usepackage{eurosym, comment, footmisc, chngcntr, endnotes, afterpage, multicol, float}

\DeclareMathOperator*{\argmax}{arg\,max}
\DeclareMathOperator*{\argmin}{arg\,min}

\newtheorem{theorem}{Theorem}
\newtheorem{corollary}{Corollary}
\newtheorem{lemma}{Lemma}

\newtheorem{definition}{Definition}
\newenvironment{proof}[1][Proof]{\noindent\textbf{#1.} }{\ \rule{0.5em}{0.5em}}

\newcolumntype{L}[1]{>{\raggedright\let\newline\\arraybackslash\hspace{0pt}}m{#1}}
\newcolumntype{C}[1]{>{\centering\let\newline\\arraybackslash\hspace{0pt}}m{#1}}
\newcolumntype{R}[1]{>{\raggedleft\let\newline\\arraybackslash\hspace{0pt}}m{#1}}

\interfootnotelinepenalty=10000
\setlength{\footnotesep}{0.5cm}

\addbibresource{references.bib}

\usepackage[hidelinks, breaklinks, colorlinks=true, linkcolor=citationcolor, citecolor=citationcolor, urlcolor=citationcolor]{hyperref}

\makeatletter
\AtBeginDocument{%
  \toggletrue{blx@footnote}%
  \renewrobustcmd{\mkbibfootnote}[1]{%
    \iftoggle{blx@footnote}
      {\footnotetext{#1}}
      {\begin{quote}#1\end{quote}}}}
\makeatother

\usepackage{microtype}
\usepackage[breakall]{truncate}


\setcounter{biburllcpenalty}{7000}
\setcounter{biburlucpenalty}{8000}
\setcounter{biburlnumpenalty}{9000}

\begin{document}

\begin{titlepage}
\title{Gender Differences in Comparative Advantage Matches: Evidence from Linked Employer-Employee Data\thanks{This paper previous title was ``Assortative Matching and the Gender Wage Gap''. The author is grateful to Ian Schmutte, Gregorio Caetano, and Brantly Callaway for valuable feedback and discussions. This paper benefited from comments received at the Midwest Economic Association Annual Meeting, Econometric Society Summer Meeting, and Western Economic Association International Conference. The views expressed are those of the author and do not necessarily reflect those of any affiliated institutions. Any remaining errors are the author's own.}}
\author{Hugo Sant'Anna \footnote{University of Georgia, hsantanna@uga.edu}}
\date{\today}
\maketitle
\begin{abstract} 
In this paper, I introduce a novel decomposition method based on Gaussian mixtures and k-Means clustering, applied to a large Brazilian administrative dataset, to analyze the gender wage gap through the lens of worker-firm interactions shaped by comparative advantage. These interactions generate wage levels in logs that exceed the simple sum of worker and firm components, making them challenging for traditional linear models to capture effectively. I find that these ``complementarity effects'' account for approximately 17\% of the gender wage gap. Larger firms, high human capital, STEM degrees, and managerial roles are closely related to it. For instance, among managerial occupations, the match effect goes as high as one-third of the total gap. I also find women are less likely to be employed by firms offering higher returns to both human capital and firm-specific premiums, resulting in a significantly larger firm contribution to the gender wage gap than previously estimated. Combined, these factors explain nearly half of the overall gender wage gap, suggesting the importance of understanding firm-worker matches in addressing gender-based pay disparities.\\

\vspace{0in}
\noindent\textbf{Keywords:} gender wage gap, linked employer-employee data, Gaussian mixtures, assortative matching \\
\vspace{0in}
\noindent\textbf{JEL Codes:} J16, J31, J71

\bigskip
\end{abstract}
\setcounter{page}{0}
\thispagestyle{empty}
\end{titlepage}
\pagebreak \newpage

\doublespacing

\section{Introduction}

Human capital has been a crucial component in narrowing wage disparities between men and women. In the United States and similar developed and emerging markets, women are more likely than men to hold college degrees \parencite{oecd_oecd_2024}. However, there is still substantial gap, and several studies attempted to tackle the issue under different perspectives. Differences in productivity has been argued by \textcite{mulligan_selection_2008}. There is also evidence that non-monetary factors, such as preferences for flexible hours, play a major role in generating the gap \parencite{goldin_grand_2014}.

A recent strand of the literature focus primarily on the contribution of firm-specific pay policies that would be important in creating differences across genders. Building on a simple rent-sharing model, these papers break down the wage into two main components: a worker component, solely generated by human capital levels and other worker characteristics, and a firm component, arising from firm heterogeneity such as economic activity, market power, and size. Leveraging from administrative data, \textcite{card_bargaining_2016} (CCK) introduced a Kitagawa-Oaxaca-Blinder (KOB) decomposition \parencite{kitagawa_components_1955, oaxaca_male-female_1973, blinder_wage_1973} to measure the contribution of firm effects on the gender wage gap, finding that around a fifth of the gap arises from firm premiums. Even though analyses based on rent-sharing models are effective in providing a comprehensive overview of the impact of firm-specific pay premiums on the gender wage gap, these models typically assume that the value of worker characteristics remains constant across firms and vice-versa. Therefore, this ``additive separability'' assumption is restrictive as it constrains the ability to capture particular worker-firm interactions due to the rank condition. This precludes scenarios where comparative advantages\footnote{In this paper I use ``comparative advantage'' effects and ``complementarity effects'' interchangeably.} arise from specific worker-firm matches. These models may fail to account for an important source of wage variation when different classes of firms perceive similar workers differently, or when particular matches in the labor market are advantageous to certain workers, which could significantly contribute to explaining gender wage disparities. 

In this paper, I provide the first comprehensive analysis of worker-firm interaction effects on the gender wage gap, explicitly accounting for assortative matching in the labor market. Extending the two-step distributional framework of \textcite{bonhomme_distributional_2019}, I apply k-means clustering and a Gaussian mixture model to the log-hourly wage distribution from massive linked employer-employee data in Brazil. The data provides the universe of formal workers and firms, with a rich set of variables, such as extremely detailed economic activity and occupation codes, gender, race, education level, firm location and more.

My innovative approach groups workers and firms into ``types'' and ``classes'' respectively, reducing dimensionality to satisfy the rank condition necessary to explore worker-firm match effects in the wage structure. The model assumes that each group represents categories of workers and firms that are comparable and, when interacted, generate wages by drawing from a Gaussian distribution where parameters are specific to that match. This methodology allows for a wage generation process that deviate from the restrictive additive separable framework, enabling the identification of wage effects that arise solely from specific worker-firm interactions.

For firms, I employ k-means clustering to group them under similar payment distributions. To determine the optimal number of clusters, I utilize a gap statistics analysis \parencite{tibshirani_estimating_2001}, which identifies the point at which within-cluster variance is minimized. For workers, I model the probability density function (PDF) of wages within each firm class as a mixture of log-normal distributions. I demonstrate the robustness of my results across alternative specifications of these combinations. Furthermore, I provide evidence that estimated clusters can be mapped to observable characteristics, validating their economic significance. Since the identification of my model relies on job movers, I test the exogenous mobility assumption by showing job movement is not related to unobservables.

My model is flexible enough to allow for the identification of differential firm valuations of workers with similar unobserved characteristics. Through Monte Carlo simulations, I identify three distinct channels that contribute to gender wage disparities.

My key contribution is the identification of a ``match effect''. This component captures the wage effect of specific worker-firm interactions, revealing complementarities that arise when particular worker types are matched with certain firm classes. I simulate a labor market with no complementarity, therefore absent of match effects, and compare it with baseline estimates. While typical separable models struggle to identify this component, I find that women are less likely to benefit from positive complementarity effects in wages. Quantitatively, in a counterfactual world without complementarities, women's average log hourly wages increase by one log point (from 2.10 to 2.11), while men's decrease by three log points (from 2.33 to 2.30). Hence, transitioning from the observed labor market with complementarities to a simulated market without reduces the gender wage gap in log hourly wages from 0.24 to 0.20, a decrease of approximately 17 percent. This result suggests that female workers are more likely to be found in disadvantageous worker-firm interactions that yield negative complementarities in wages and even when they are present in interactions that yield positive complementarities, women tend to benefit less than men.

My results also indicate that the complementarity effect grows with higher levels of human capital and the complexity of occupations. These contributions increase with both education and age. Notably, workers in occupations typically associated with the hospitality industry, such as cleaners and waiters, show no evidence of complementarity effects. However, for individuals in occupations requiring STEM degrees, such as engineers and economists, complementarity contribution becomes positive. For managers, it accounts for as much as one-third of the gender wage gap.

In the spirit of \textcite{card_bargaining_2016}, I also explore the overall contribution of firms to the gender wage gap by assuming the labor market is under assortative matching\footnote{For this paper, I follow \parencite{becker_theory_1973} to consider assortative matching as the propensity of high quality firms to match with high quality workers.} and firms evaluate worker characteristics on top of offering premiums. To perform this analysis, I hold constant the distribution of worker clusters across men and women, measuring the counterfactual wage difference when the distributions of firms and their expected payments vary. This approach reveals two components, in addition to the match component, that mirror the established literature: sorting (women's under-representation at higher-paying firms) and bargaining (equally productive women receiving a smaller share of payments). Additive separable models potentially underestimate these effects since they assume the value of worker characteristics is constant across the labor market, thereby imposing a downward bias to the impact of assortative matching in generating the gender wage gap.

The sorting component, representing the contribution of differences in firm allocations in the labor market, accounts for approximately 37.5 percent of the 24 log point gender wage gap in Brazil. This sorting effect is substantially larger than estimates obtained from additive separable models, about 9 percent. The enhanced magnitude stems from my model's ability to capture heterogeneous firm-specific returns to worker characteristics. Specifically, it reveals that women are disproportionately concentrated in labor market segments where firms offer lower returns to worker and firm characteristics for all workers, regardless of gender. This component is less relevant for young individuals, but increases considerably for older and highly educated individuals, reaching about 40 percent to college graduates.

The final component, representing the contribution of differential payments to similarly productive men and women, accounts for approximately 8.3 percent of the gender wage gap. This ``bargaining'' effect suggests that even when women overcome sorting barriers, they still face wage disadvantages within firms. Collectively, the ``match,'' ``sorting,'' and ``bargaining'' components explain more than sixty percent of the observed gap, suggesting that understanding assortative matching in the labor market is essential to mitigate wage disparities.

This paper belongs to the applied literature investigating the channels generating the gender wage gap. While the gap narrowed in recent decades \parencite{blau_womens_2008} due to increases in female human capital \parencite{goldin_homecoming_2006, black_gender_2008, ceci_women_2014}, \textcite{goldin_grand_2014} finds that women's preference for flexible hours over monetary compensation is a relevant factor to narrow the remaining gap.  \textcite{bertrand_breaking_2019} shows that women are disproportionately underrepresented in jobs with high returns on human capital investment.

The more recent strand in the literature investigates the contribution of firms to the gender wage gap. These papers belong to the applied literature that employs the AKM model \parencite{abowd_high_1999, card_introduction_2023}, specifically focusing on firm effects on the gap. Generally, these studies use linked employer-employee data in a two-sided separable model with worker and firm identifiers as ``plugin'' estimators in a log wage linear regression. \textcite{card_workplace_2013} showed wage dispersion could be largely attributed to these firm components using West Germany data and \textcite{card_bargaining_2016} proposed a KOB decomposition using Portuguese data, finding that firms contribute around 20 percent to the gender wage gap \footnote{Other papers using the AKM model to explore firm effects KOB decomposition on the gender wage gap: \textcite{gallen_labor_2019} in Denmark, \textcite{bruns_changes_2019} in West Germany, \textcite{jewell_who_2020} in the UK, \textcite{masso_role_2022} in Estonia, and more recently \textcite{casarico_what_2024} in Italy.}.

Nevertheless, the AKM-KOB analysis assumes human capital returns for specific workers are constant across the labor market. It also requires special data manipulation to extract a dual connected set of firms through male and female workers changing jobs \parencite{abowd_computing_2002, card_bargaining_2016}. I show in my supplementary material that this data restriction may not be innocuous, since the trimming procedure may disproportionately preserve larger firms, which typically exhibit higher wage dispersion. Moreover, wage variance analysis in AKM can be biased by underestimating the role of worker-firm interactions\footnote{Some earlier examples of biased effects are \textcite{barth_assortative_2003} using Norse data and \textcite{gruetter_importance_2009} using French data, where they found negative covariance estimates in joint worker-firm effects.}. It was first assumed to be an economic phenomenon; however, \textcite{andrews_high_2008} showed that this was, in fact, an econometric issue related to small sample bias. The proposed straightforward correction to this ``limited mobility bias'' can be computationally prohibitive \parencite{gaure_correlation_2014, azkarate-askasua_correcting_2023}. \textcite{kline_leaveout_2020} introduced a Leave-One-Out methodology to fix it.

My paper goes in line with alternative approaches that moves away from AKM's additive separable assumption to avoid biased results and capture match effects. \textcite{woodcock_wage_2008} proposed a random effects approach to satisfy the rank condition. More recently, \textcite{bonhomme_distributional_2019} (BLM) proposed a novel approach that involves clustering both firms and workers into broader categories. This method offers two key advantages. First, reducing the worker firm dimensions is computationally tractable and allows for further exploration of worker-firm match effects. Second, its non-separable nature gives a unique opportunity to observe how firms valuate workers differently under similar circumstances but differing only by gender. \textcite{bonhomme_how_2023} demonstrated that random effect models such as the BLM are particularly effective in circumventing the AKM limitations, even in short time panels.

I contribute to the literature in many ways. First, to my knowledge, this study is the first to implement the methodology of \textcite{bonhomme_distributional_2019} in the context of analyzing wage disparities between two groups. Secondly, I empirically show that separable models underestimate the role of firms in generating the gap. More importantly, I find that some interactions in the labor market exhibit comparative advantage effects, generating wage levels that substantially exceed predictions from the traditional models, but with less intensity when these interactions occur with female workers. These matches contribute significantly to wage disparities but are often smoothed out under the separability assumption. Additionally, I find these interactions exist particularly in high-paying, larger firms and they are particularly strong in highly educated workers\footnote{Studies that explore high paying firms and top earners are, for example, \textcite{bertrand_dynamics_2010, bertrand_breaking_2019}, demonstrating the existence of a ``glass ceiling'' effect.}. 

My findings have meaningful policy implications. They show that closing the gender wage gap requires improving pay practices in key roles where highly skilled women are employed, particularly in leadership positions. Equally important are efforts to break down barriers in the labor market that push women into lower-paying firms. These firms not only offer smaller wage premiums but also limit the returns on women's skills and education, deepening income inequality over time.

The remainder of the paper is organized as follows: Section \ref{sec:background} provides an explanation of what is a complementarity effect and why additive separable models cannot capture it under typical settings. Section \ref{sec:data} provides an overview of the Brazilian data used in this study. Section \ref{sec:empiricalstrategy} explains the BLM clustering method in two-steps. In Section \ref{sec:mainresults}, I provide the clustering results. In Section \ref{sec:discussion}, I construct the Monte Carlo simulation counterfactuals. I conclude the paper in Section \ref{sec:conclusion}.


\section{Additive Separable Models and Complementarity} \label{sec:background}

Researchers are often interested in identifying the returns to unobserved heterogeneity of both workers and firms in labor markets, particularly when administrative data with social identifiers are available. In many empirical studies, these social identifiers are utilized as ``plug-in estimators,'' commonly modeled as fixed effects in a linear equation where the outcome is the wage in logs.

For instance, consider a labor market consisting of $N$ workers and $J$ firms, where workers and firms interact over $T$ periods. Under the assumption of additive separability, the log wage $w$ of worker \(i\) at time \(t\), net time varying effects, can be modeled as:
\begin{align}
\log w_{it} = \alpha_i + \phi_j + &\varepsilon_{it} \\
                                \text{s.t.                } j = J(i,t)&
\end{align}
where \(\alpha_i\) represents the fixed effect associated with worker \(i\) (capturing unobserved worker-specific characteristics such as skills or human capital), and \(\phi_j\) denotes the firm-specific premium associated with firm \(j\) (the wage component determined by firm characteristics, independent of worker-specific attributes). The error term \(\varepsilon_{it}\) captures idiosyncratic shocks. Firm assignment is indicated by the function \(J(i,t)\), which tracks the firm employing worker \(i\) at time \(t\).

In this framework, the additive separable model assumes constant returns for both workers and firms. That is, the firm-specific premium \(\phi_j\) is unaffected by the characteristics of the worker employed by the firm, and vice versa. This implies that reshuffling workers across firms does not alter the firm component of wages. Such an assumption is particularly strong and potentially unrealistic in labor markets where comparative advantage is thought to play a role in worker-firm interactions \parencite{shimer_assortative_2000, eeckhout_assortative_2018}.

To allow for complementarity between workers and firms, one could extend the model to include interaction effects. Specifically, the wage equation can be rewritten as:

\begin{align}
    \log w_{it} = \alpha_i + \phi_j + M_{ij} + \varepsilon_{it} \\
                                \text{s.t.                } j = J(i,t)&
\end{align}

where \(M_{ij}\) represents the interaction effect between worker \(i\) and firm \(j\). This term captures the potential complementarity effect that only arises when worker $i$ is employed at firm $j$. It reflects the idea that certain worker-firm pairings generate higher (or lower) wages than what would be predicted based solely on the worker's fixed effect \(\alpha_i\) and the firm's premium \(\phi_j\). However, estimating this interaction term in practice is infeasible due to the large dimensionality of the model. The matrix \(M_{ij}\) would have \(N \times J\) terms, which quickly becomes computationally intractable given that linked employer-employee data typically contain millions of workers and thousands of firms.

Moreover, these models are prone to bias in settings with short panel datasets, where estimating the worker and firm fixed effects becomes difficult. Building on the approach of \cite{bonhomme_distributional_2019}, I employ a methodology that reduces the dimensionality of workers and firms by clustering them into latent groups. This allows for the estimation of complementarity effects while avoiding the rank deficiency problem inherent in models with such high-dimensional interactions. Specifically, workers and firms are grouped based on their interactions in the labor market. I assume each worker-firm interaction draws wages from log-normal distributions, meaning I can relax the linear assumption and the additive separability. 

The wage generation function now can be expressed as:
\begin{equation}
\log w_{i} = f(\alpha_{L(i)} \mid \phi_{K(i,j)})
\end{equation}

where $L(i)$ denotes the assignment function of worker $i$'s type, and $K(i,j)$ denotes the assignment function of firm $j$'s class. \(f(\alpha_{L(i)} \mid \phi_{K(i,j)})\) denotes a probabilistic function that draws wages from a log-normal distribution that is specific related to the match of worker type $l$ and firm class $k$. Since wages are assumed to be derived specifically from worker-firm interactions, some matches are allowed to yield comparative (dis)advantage effects on wages, given that returns to firm and worker characteristics, under this model, are not necessarily constant across the labor market.

Therefore, recovering these latent types can be done by exploring the surface of observed wages. The idea behind the model is to recover Gaussian distributions that are ``combined'' in the full distribution, derived from all the matches occurring from the labor market. Thus the Gaussian mixture model application.

A key assumption for this model is the exogenous mobility assumption, meaning job mobility depend on the type of the worker and the classes of the firms, but not directly on earnings. I discuss the exogenous mobility assumption in my model in Section \ref{sec:exomobil}.

In the context of wage differences due to gender, my approach is to estimate the Gaussian mixture parameters in a pooled dataset, meaning the model does not observe gender at first. The reasoning is to facilitate the comparability of individuals under the same umbrella of unobserved heterogeneity. I discuss further the model in Section \ref{sec:empiricalstrategy}.

\section{Data} \label{sec:data}

In this section, I provide an overview of the Brazilian administrative data used for the study and the preparatory cleaning for the analysis, followed by a descriptive statistics of the cleaned sample.

\subsection{Data Overview and Institutional Background}
I use the \textit{Relação Anual de Informações Sociais} (RAIS), an extensive linked employer-employee dataset (LEED) from Brazil spanning from 2010 to 2017. RAIS is mandated and maintained by the Brazilian Ministry of Labor and Employment, serving as a source for the administration of tax and social programs. The dataset offers an universal representation of the formal labor market in Brazil and is characterized by its richness in variables.

A key advantage of using the \textit{RAIS} dataset for this analysis is the relative homogeneity of job-related amenities across firms due to Brazil's robust labor regulations. The Brazilian labor laws, known as the Labor Laws Consolidation (\textit{Consolidação das Leis do Trabalho} (CLT), in Portuguese), mandates a broad range of standardized benefits and protections for all formal workers, regardless of industry or firm size. This regulatory framework significantly reduces variation in non-wage compensation, allowing the analysis to focus more cleanly on wage differentials without the confounding effects of divergent job-related amenities. 

For example, Brazilian law requires all formal employees to receive the 13th salary, which is essentially a mandatory annual bonus equivalent to one month's wage, usually paid during Christmas time. Additionally, firms are obligated to provide meal vouchers or food stipends, as well as transportation subsidies for commuting. These benefits are non-negotiable and standardized across the formal labor market. Moreover, formal workers are entitled to thirty days of paid vacation, overtime pay, and severance protections via the \textit{Fundo de Garantia por Tempo de Serviço}\footnote{Roughly translated as Severance Indemnity Fund for Length of Service} (FGTS), which further ensures that variations in non-wage job characteristics can be minimized.

In Brazil, maternity leave is a legally guaranteed right under the CLT. Female employees are entitled to 120 days of paid maternity leave, funded by the Brazilian Social Security system. In some cases, companies can extend this leave to 180 days through the Empresa Cidadã program, which offers tax incentives to employers. During maternity leave, the employee's job is protected, and she is guaranteed to return to her position or a similar one without loss of salary or benefits. Additionally, Brazilian law prohibits the dismissal of pregnant workers from the moment pregnancy is confirmed until five months after childbirth, with some exceptions under fair cause.

This regulatory uniformity is particularly beneficial for my analysis, as it mitigates concerns that differences in firm payment patterns are due to job amenities that could ultimately explain wage differentials between male and female workers. In contrast, in countries where non-wage compensation varies significantly across firms or sectors, disentangling wage differences from benefit-driven compensation can complicate the analysis of wage gaps.

In my study I focus on São Paulo state, which represents the most economically dynamic region in Brazil, making sure my results are not driven by geographical heterogeneity. For example, a male worker in manufacturing and a female worker in retail, though in distinct sectors, would both receive a standardized package of legal protections and benefits coming not only from federal law, but also from local state law, ensuring that wage comparisons are not distorted by differences in state policies.

Regarding gender dynamics in São Paulo's labor market, it is important to note that, similar to other countries analyzed in the literature, approximately more than 50 percent of the Brazilian women there participate in the labor force, with 71 percent of these women employed full-time. This proportion rises to 90 percent when considering only those employed in the private sector. Furthermore, the gender wage gap in Brazil mirrors those observed in more developed economies, offering additional comparative insights. In 2016, the median earnings gap between male and female full-time workers was approximately 14.3 percent in Brazil, closely aligned with the average of 13.4 percent observed across OECD countries, and slightly better than the 18.1 percent reported for the United States \parencite{oecd_oecd_2024}.

\subsection{Data Preparation}

The RAIS database records each formal employment contract as a separate entry, meaning that for any given year, a worker with multiple contracts, whether with the same employer or different firms, will appear multiple times. To address this, and following the methodologies of \textcite{gerard_assortative_2021} and \textcite{lavetti_gender_2023}, I refine the dataset by retaining only the longest-duration and highest-paid contract for each individual per year. This adjustment shifts the data from a contract-year structure to an individual-year framework, ensuring that the analysis focuses on each worker’s primary employment.

To align with a long-run perspective, the sample is further restricted to a \textit{quasi}-full-time workers, defined as those working a minimum of 30 hours per week, and limited exclusively to the private sector. I allow this flexibility to capture a certain degree of non-monetary preference particularly found in female cohorts \parencite{goldin_grand_2014}. This exclusion criteria eliminates part-time employees, public sector workers, and the self-employed from the analysis, thereby focusing on a more homogeneous labor market.

\subsubsection{Biennial Grouping and Panel Balancing}

The organization of the data for my analysis involves grouping the dataset into jumping biennials. Specifically, the years 2010 and 2012 are paired, 2011 and 2013, and so forth. This method skips intermediate years to avoid transitional anomalies that may occur in short periods, such as firm mergers or changes in identifiers. This ``jumping'' approach closely mirrors the sample selection method employed by \textcite{bonhomme_distributional_2019}.

In my analysis, it comprises of six sets of balanced panel data spanning from 2010 and 2012 to 2015 and 2017. Each biennial set is balanced and analyzed to estimate worker and firm clusters, with final estimates related to wages presented as a weighted average of these samples. This ``rolling'' approach has been used to some extent in \textcite{card_bargaining_2016} and \textcite{lachowska_firm_2023}.

Each biennial panel is balanced, ensuring that the same set of workers and firms are observed consistently within each two-periods. In addition, firms with pronounced gender preferences are excluded from the analysis. Only firms exhibiting a gender ratio of 1 to 4 are included, which helps mitigating any potential bias that could arise from firm gender imbalance.

\subsection{Summary Statistics}

\begin{table}[htbp!] \label{tab:sumstat}
\centering
\caption{Descriptive Statistics by Gender}
\label{tab:desc_stats}
\small
\begin{tabular}{lcc}
\toprule
 & {\textbf{Female Workers}} & {\textbf{Male Workers}} \\
\textit{\textbf{Features}} & (1) & (2) \\
\midrule
\multicolumn{3}{l}{\textit{Firm Characteristics}} \\
\quad Number of Firms & {\num{346617}} & {\num{346617}} \\
\quad Firms with $\geq$ 10 Workers & {\num{204994}} & {\num{204994}} \\
\quad Firms with $\geq$ 50 Workers & {\num{58866}} & {\num{58866}} \\
\quad Mean Firm Size & 57 & 57 \\
\quad Median Firm Size & 13 & 13 \\
\addlinespace[0.5em]
\multicolumn{3}{l}{\textit{Worker Characteristics}} \\
\quad Education (\%) \\
\qquad Dropout & 22 & 28 \\
\qquad High School Graduates & 48 & 45 \\
\qquad Some College & 30 & 27 \\
\quad Age (\%) \\
\qquad $<$ 30 & 40 & 37 \\
\qquad 31--50 & 50 & 49 \\
\qquad $\geq$ 51 & 10 & 14 \\
\addlinespace[0.5em]
\multicolumn{3}{l}{\textit{Sector of Employment (\%)}} \\
\quad Primary & 2 & 2 \\
\quad Manufacturing & 19 & 26 \\
\quad Construction & 1 & 2 \\
\quad Trade & 24 & 25 \\
\quad Services & 54 & 45 \\
\addlinespace[0.5em]
\multicolumn{3}{l}{\textit{Occupation (\%)}} \\
\quad Scientific and Liberal Arts & 11 & 11 \\
\quad Technicians & 11 & 11 \\
\quad Administrative & 34 & 18 \\
\quad Managers & 5 & 7 \\
\quad Traders & 25 & 22 \\
\quad Rural & 1 & 2 \\
\quad Factory & 13 & 29 \\
\addlinespace[0.5em]
\multicolumn{3}{l}{\textit{Labor Market Outcomes}} \\
\quad Mean Tenure (years) & 4.04 & 4.63 \\
\quad Mean Log-Wage & 2.06 & 2.29 \\
\quad Variance of Log-Wage & 0.52 & 0.65 \\
\quad Worker-Year Observations & {\num{9503233}} & {\num{10283471}} \\
\quad Unique Number of Workers & {\num{3497651}} & {\num{3725990}} \\
\quad Gender Fraction (\%) & {\num{48}} & {\num{52}} \\
\bottomrule
\end{tabular}
\caption*{\small\textit{Note:} \textsuperscript{1}Descriptive statistics calculated from the first year of each biennial sample (2010-2015). \textsuperscript{2} Percentages may not sum to 100\% due to rounding. \textsuperscript{3}The number of firms is the same for both genders since every firm in the cleaned sample employs both male and female workers.}
\end{table}

Table \ref{tab:desc_stats} reports descriptive statistics by gender cohorts for the aggregated cleaned data, representing the first year of each biennial sample. Columns (1) and (2) represent the statistics for female and male workers, respectively.

The dataset encompasses a total of 346,617 unique firms. Of these, a substantial portion is relatively large; approximately 204,994 firms employ 10 or more workers, and 58,866 firms have at least 50 workers. The average firm size across the sample is 57 employees, but the median firm size is considerably smaller, at 13 employees, indicating a skewed distribution.

Gender related educational attainment confirms that women are generally more educated than their male counterparts. The data show a higher prevalence of men without high school diplomas, while women are more likely to have completed high school or pursued some college education. As stated previously, this educational dynamic is consistent with recent trends observed in both developed and developing nations, such as the United States and other OECD countries.

Approximately 40 percent of the female sample is under 30 years old, with another 50 percent aged between 31 and 50. In contrast, 37 percent of the male sample is under 30, with 49 percent in the 31 to 50 age bracket. Moreover, men are slightly more represented in the over-50 cohort, constituting 12 percent compared to 8 percent of women. Hence, the average experience in the labor market is 4.6 years for males and 4.0 years for females.

Industry distribution varies significantly between genders. Men dominate in sectors such as manufacturing, agriculture, and trade, whereas women are predominantly engaged in services, an umbrella term that includes sectors such as healthcare, education, hospitality, and financial services.

The occupational distribution also highlights a notable gender sorting: women are almost twice as likely as men to hold administrative positions, representing 34 percent of women compared to 18 almost of men. Men are more frequently employed in manual labor-intensive roles such as in agricultural settings and factories.

Despite these occupational disparities and the educational advantages observed for women, the unweighted gender wage gap remains substantial at approximately 23 log-points. This gap persists even though women are, for instance, equally likely as men to occupy scientific roles, which typically require higher educational qualifications.

\subsection{Extended Mincer Equation} \label{sec:mincer}

As a first step to analyze the gender wage gap, I provide a classical Kitagawa-Oaxaca-Blinder \parencite{kitagawa_components_1955, oaxaca_male-female_1973, blinder_wage_1973} decomposition of an extended Mincer equation and an AKM equation, assuming the gap is a mean difference of female and male wages. A ``Mincer wage function'' can be specified as:

\begin{equation}
w_{it} = \beta_0 + \beta_1 \text{Age}_{it} + \beta_2 \text{Age}_{it}^2 + \beta_3 \text{Education}_{it} + \beta_4 \text{Occupation}_{it} + \beta_5 \text{Activity}_{it} + \varepsilon_{it}
\end{equation}
where $w_{it}$ is the natural logarithm of hourly wages for individual $i$ in time period $t$, regressed on the worker's age and their squared age, their education level, the firm's industry, the worker's occupation, and a idiosyncratic error term. For the Oaxaca decomposition, I run this regression for the male and female observations separately, for each biennial sample.

Assume the matrix of explanatory observables can be expressed as $X^g$, where $g$ represents the gender sample used in the regression. Also assume $\beta$ is the vector of estimates. The KOB decomposition can be expressed as:

\begin{equation}
\bar{w}^m - \bar{w}^f = 
\underbrace{\left( \bar{X}^m - \bar{X}^f \right) \hat{\beta}^f}_{\text{Explained}} + 
\underbrace{\bar{X}^f \left( \hat{\beta}^m - \hat{\beta}^f \right)}_{\text{Unexplained}}
\end{equation}
where $\left( \bar{X}_m - \bar{X}_f \right) \hat{\beta}_f$ represents the ``explained'' component of the decomposition. In simpler terms, this term represents a counterfactual scenario where men and women possess the same returns to covariates, however, they differ in these covariates' distribution. The unexplained component, on the other hand, captures differences in the returns to these characteristics. This is expressed as $\bar{X}_f \left( \hat{\beta}_m - \hat{\beta}_f \right)$, where the difference in coefficients ($\hat{\beta}_m - \hat{\beta}_f$) measures a scenario where men and women have the same observable characteristics, however, the market values differently each gender. The unexplained portion is often interpreted as the part of the wage gap that cannot be accounted for by observable factors alone, potentially indicating discrimination or other structural labor market imbalances.

Table \ref{tab:mincerequation} presents the overall log hourly wage gap in means, the explained, and the unexplained portion of the gender wage gap across the six biennial samples, along with the number of observations for each sample. The overall wage gap remains consistent at 24 log-points for the first three samples. However, the gap slightly decreases in the subsequent samples, with the smallest gap observed in 2015-2017 at 22 log-points.

\begin{table}[htbp!]
\centering
\begin{threeparttable}
\caption{Extended Mincer Equation KOB Decomposition For Each Biennial Sample}
\label{tab:mincerequation}
\small
\begin{tabular}{lcccc}
\toprule
\textbf{Sample} & {\textbf{Overall Gap}} & {\textbf{Explained Gap}} & {\textbf{Unexplained Gap}} & {\textbf{N}} \\
\midrule
2010--2012 & -0.244 & -0.0651 & -0.179 & \num{5946240} \\
2011--2013 & -0.244 & -0.0637 & -0.180 & \num{6145676} \\
2012--2014 & -0.244 & -0.0642 & -0.180 & \num{6534444} \\
2013--2015 & -0.241 & -0.0621 & -0.179 & \num{6787446} \\
2014--2016 & -0.230 & -0.0571 & -0.173 & \num{7086062} \\
2015--2017 & -0.221 & -0.0542 & -0.167 & \num{7073540} \\
\midrule
\textbf{Weighted Avg}\tnote{a} & \bfseries -0.237 & \bfseries -0.0611 & \bfseries -0.176 & \bfseries \num{39573408}\tnote{b} \\
\bottomrule
\end{tabular}
\begin{tablenotes}[flushleft]
    \setlength{\itemindent}{-\leftmargin}
    \small
    \item \textit{Note:}  \textsuperscript{a}Weighted average calculated using sample sizes as weights and the gap as $female - male$. \textsuperscript{b}Total number of observations across all samples. \textsuperscript{1}Extended Mincer equation defined as $\log(y_i) = \beta_0 + \beta_1 \text{Age}_i + \beta_2 \text{Age}_i^2 + \beta_3 \text{Education}_i + \beta_4 \text{Occupation}_i + \beta_5 \text{Activity}_i + \varepsilon_i$. \textsuperscript{2}Explained gap represents differences in distribution of characteristics. Unexplained gap represents differences in estimated returns to characteristics.
\end{tablenotes}
\end{threeparttable}
\end{table}

In this setting, the explained portion of the Oaxaca decomposition accounts for approximately 6.11 log-points, or roughly one-quarter of the total gender wage gap. This indicates that observable factors, such as the allocation of workers across different occupations or sectors, explain about 25 percent of the wage differential in an additively separable labor market.

In Section \ref{sec:discussion}, I extend the analysis by introducing firm identifiers as fixed effects under an AKM framework following \textcite{card_bargaining_2016}. Under this specification, firm effects explain about 9 percent of the gender wage gap.

The issue with separable models is the assumption that these components should not vary depending on the association happening. Under AKM, these firm effects will occur in any worker reshuffling instance of the labor market.

In the next sections, I propose the distributional framework of \textcite{bonhomme_distributional_2019} to capture particular interactions in the labor market that does not necessarily follow an additive separable assumption.

\section{Empirical Framework: The BLM Model} \label{sec:empiricalstrategy}

Estimating the Gaussian mixture requires two main parts. Following \textcite{bonhomme_distributional_2019}, I assume cluster membership of firms is exogenous to the model, allowing their estimation by employing straightforward clustering methods from features observed from the data. Still following BLM, I choose to cluster firms based on their wage cumulative distribution function using k-means clustering \parencite{macqueen_methods_1967}. 

In the second part, I take the estimated firm clusters, called ``firm classes'', to assume that they are Gaussian mixtures of latent worker types in log wages. In the spirit of AKM settings, I leverage individuals changing jobs to identify the Gaussian parameters.

Finally, I use a \textit{maximum a posteriori} estimation to find the most likely worker type for each worker observation. After the classification, I split the data into male and female cohorts.

\subsection{Recovering Firm Classes}

The first objective is to recover firm clusters, or ``firm classes'', which are initially unobserved in the data. The approach relies on two key assumptions. First, the mapping of firms to clusters is exogenous to labor market dynamics. 

Formally, let $k(j)$ denote cluster assignment of firm $j$. The exogeneity assumption can be expressed as:

\begin{equation}
    P(k(j) | X) = P(k(j))
\end{equation}
where $X$ represents labor market conditions and worker characteristics. In plain language, this condition ensures that the probability of a firm belonging to a firm class is unconditional to these labor market features, which allows a direct estimation of firm classes using the clustering method.

Secondly, the wage distribution in the data follows a log-normal shape for workers, conditional on these firm clusters. Consequently, each firm class represents a Gaussian mixture of log-wages. Within these mixtures, each component corresponds to a log-normal distribution arising from the unobserved heterogeneity of worker groups, which is termed ``worker types'' in this study, following BLM's terminology.

Formally, the assumption states that for a firm $j$ in class $k$, the log-wage distribution, for a given time period, can be expressed as:

\begin{equation}
f_k(w_i) = \mathds{1} \{\hat{k}(i) = k \} \sum_{l=1}^L q_{k}(L(i)) \mathcal{N}(\theta_{kl})
\end{equation}
where, $f_k(\log(w_i))$ is the log hourly wage mixture of firm class $k$, when observing worker $i$. With some abuse of notation, $L$ denotes the number of worker types, $q_k(L(i))$ represents the proportion of workers of type $L(i)$ in class $k$, and $\mathcal{N}(\theta_{kl})$ is the Gaussian probability density function for type $l$ workers in class $k$, with $\theta_{kl}$ representing the parameters of this distribution. The indicator function $\{\hat{k}(i) = k \}$ ensures that we consider only the wage distributions of workers assigned to the specific firm class $k$.

My approach leverages firm clustering to address the dimensionality challenge inherent in firm heterogeneity analyses. By aggregating individual firms into a more manageable set of ``firm classes'', I circumvent the need to restrict the dataset to a set of connected firms through workers. However, the identification strategy of this methodology still relies on job movements. It shifts, however, the focus from tracking movements between individual firms to observing transitions across firm classes. Therefore, while this mixture model still fundamentally relies on job mobility, it does so at a more aggregated level. In the supplementary material, I perform a clustered AKM regression to show that on average, the residual change in wages for these movers is close to zero, suggesting the movement pattern is not related to the labor market structure itself.

A crucial assumption of this approach is that each worker type, to be estimated in the second step, exhibits a unique pattern in their ``cycling'' through firm classes as they navigate job changes. These transitional pathways must be sufficiently distinct to allow for clear identification of worker type parameters \parencite{bonhomme_distributional_2019}. The robustness of this assumption in my context is based on the substantial number of observations in the dataset, which provides the statistical power necessary to discern these distinct mobility patterns between worker types and firm classes.

The k-means algorithm aims to group firms with similar payment schedules. Formally:

\begin{equation} \label{eq:firmkmeans}
    \argmin_{k(1),\ldots k(J), H_1,\ldots , H_K} \sum^J_{j = 1} n_j \int (\hat{F}_j(w) - H_{k(j)}(w))^2 d\mu(w)
\end{equation}
where $\hat{F}_j$ represents the empirical CDF of the log-weekly wages $w$ of firm $j$, $\mu$ is a discrete measurement, supported by a finite grid of ventiles from the population. $K$, the number of firm classes, is known, while the array $k(1),\ldots, k(J)$ represents the partitioning for each firm. $H_{k}$ represents cluster $k$'s CDF. Finally, $n_j$ is the firm's corresponding workforce size. I perform 1000 repetitions to ensure a global minimum distance estimation. 

In simple terms, this procedure minimizes the distance between firms and unobserved classes using as measurements each firm's empirical CDF generated from the ventiles of the observed population log hourly wage distribution. It imposes a weighting parameter to ensure different minimization process for larger firms. For each biennial sample, I assume that the firm class classification is time-invariant.

I choose $K = 10$ as the baseline number of groups since it minimizes the wage variance within each group. I follow \textcite{bonhomme_grouped_2015, bonhomme_distributional_2019}, where the estimation of firm classes does not affect parameter estimation in the Gaussian mixture step. Nevertheless, in the Appendix \label{sec:clusterchoiceanalysis}, I provide a comprehensive cluster choice analysis using gap statistics to find optimal K-Means clustering estimation. I also provide alternative cluster settings as robustness checks in the discussion section.

\subsection{Gaussian Mixture Estimation}

I assume that observed wages follow a mixture of log-normal distributions, where every ``latent'' probability distribution represents an interaction of a latent worker ``type'' with the respective firm class. This approach enables me to reduce the high-dimensional unobserved heterogeneity among individual workers into a manageable set of Gaussian distributions.

I estimate the parameters with the pooled dataset, not observing gender at first. By not accounting for gender at the outset, I ensure that male and female workers assigned to the same distribution are as similar as possible in terms of unobserved characteristics. The idea is that the algorithm will approximate individuals with sufficiently similar unobserved characteristics that spawn the same distribution of wages, regardless of gender. It allows for a more precise comparison of how these latent worker types interact with firm classes without biasing the results by preemptively imposing gender differences.

This approach allows for a more flexible examination of the wage structure assumption in the labor market. By constructing and comparing expected payment levels for each worker type and firm class interaction, I can empirically assess at which extent the additive separability assumption hold, and capture interactions in the market that deviates from this condition. Finally, I can disaggregate these payment levels by gender to measure the differential complementarity effects on wages, providing insights into how worker-firm interactions contribute to gender wage disparities, especially at matches where the separable form is not observed.

\subsubsection{Recovering Worker Types}
To identify latent worker types, I posit that the wage distribution for each type depends on their associated firm class and follows a log-normal distribution. This approach incorporates potential complementarities characteristic of specific worker-firm matches. I first, estimate the densities for job movers, and subsequently, I estimate the proportions of stayers using the job mover distributions from the initial period.

I formulate this as a maximum likelihood problem, closely following \textcite{bonhomme_distributional_2019}:
\begin{equation} \label{eq:gaussian_jobmovers}
\argmax_{\theta_p, \theta_{1}, \theta_{2}} \sum_{i=1}^{N_m} \sum_{k=1}^K \sum_{k'=1}^K \mathds{1} \{ \hat{k}_{i1} = k \} \mathds{1} \{\hat{k}_{i2} = k' \} \log \left( \sum_{l=1}^L p_{kk'}(l;\theta_p) f_{k l}^1 (w_{i1}; \theta_{1}) f_{k' l}^2 (w_{i2}; \theta_{2}) \right)
\end{equation}
where $N_m$ denotes the number of job movers, $K$ the number of firm classes, and $L$ the number of worker types (set to 10 for interpretability). The indicator functions $\mathds{1} \{ \hat{k}_{i1} = k\}$ and $\mathds{1} \{\hat{k}_{i2} = k'\}$ capture the transition of worker $i$ from firm class $k$ to $k'$ between periods 1 and 2. $p_{kk'}(l;\theta_p)$ represents the proportion of type $l$ workers moving from class $k$ to class $k'$, while $f_{kl}^1$ and $f_{k'l}^2$ are log-normal wage distributions for type $l$ workers in classes $k$ and $k'$ in periods 1 and 2, respectively.

Therefore, Equation \ref{eq:gaussian_jobmovers} captures the parameters of the conditional distributions of the worker types leveraging the job movers. 

For job stayers, I estimate:
\begin{equation} \label{eq:gaussian_jobstayers}
\argmax_{\theta_q} \sum_{i=1}^{N_s} \sum_{k=1}^K \mathds{1} \{ \hat{k}{i1} = k \} \log \left( \sum_{l=1}^L q_{k}(l;\theta_q) f_{k l}^1 (w_{i1}; \hat{\theta}_{1}) \right)
\end{equation}
where $N_s$ is the number of stayers, and $q_{k}(l;\theta_q)$ is the proportion of type $l$ stayers in class $k$. I leverage the first year parameters for job movers. I employ the Expectation-Maximization (EM) algorithm with 50 repetitions to estimate these parameters.

To recover the most likely worker type for each observation, I utilize the Maximum A Posteriori (MAP) estimation. Formally, for a worker $i$ in firm class $k$ with wage $w_i$, the probability of belonging to type $l$ is given by:
\begin{equation} \label{eq:worker_type_probability}
P(l|w_i, k) = \frac{q_k(l;\hat{\theta}q) f_{kl}(w_i; \hat{\theta})}{\sum_{l'=1}^L q_k(l';\hat{\theta}_q) f_{kl'}(w_i; \hat{\theta})}
\end{equation}

The worker type is then assigned as:
\begin{equation} \label{eq:worker_type_assignment}
\hat{l}_i = \argmax_l P(l|w_i, k)
\end{equation}

\section{Estimated Parameters} \label{sec:mainresults}

In this section, I present the estimated parameters for the mixtures, beginning with firm class estimates, followed by the mixture proportions, and concluding with a detailed analysis of the estimated moments disaggregated by gender.

\subsection{Cluster eCDFs}

The effectiveness of the algorithm in segregating firms into distinct clusters is evaluated by visualizing the empirical cumulative distribution function of the generated clusters. They are illustrated in Figure \ref{fig:ecdf_class}.

As depicted, the algorithm managed to delineate mostly clear firm classes, grouping firms with similar pay policy, evidenced by the ``clear cuts'' of each cluster's CDFs, with the exception being firm class 4.

\begin{figure}[htb!]
    \centering
    \begin{subfigure}{\textwidth}
        \centering
        \includegraphics[width=\textwidth]{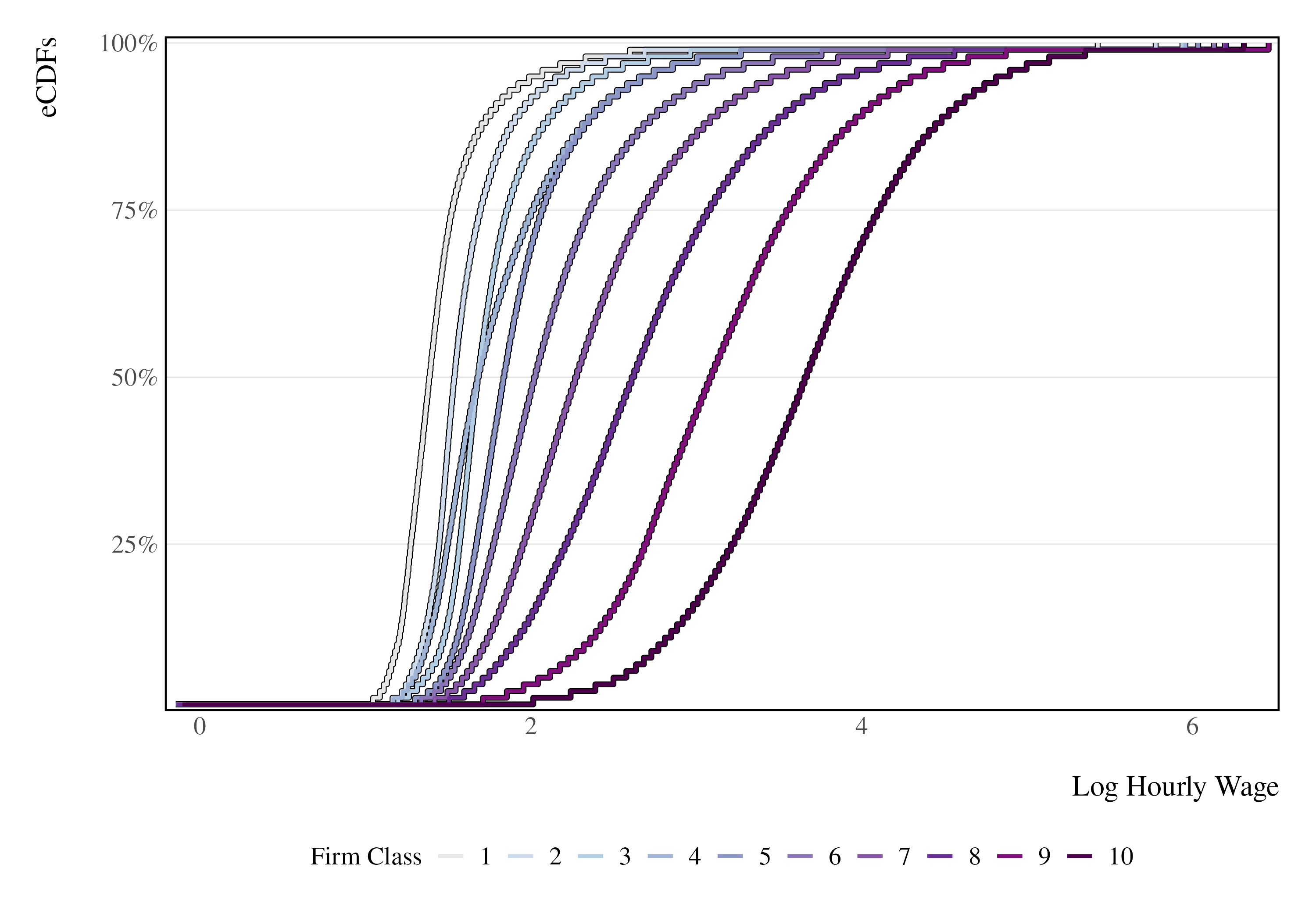}
        \caption{Empirical CDF of Firm Classes}
        \label{fig:ecdf_class}
    \end{subfigure}
    
    \vspace{1em}
    
    \begin{subfigure}{0.48\textwidth}
        \centering
        \includegraphics[width=\textwidth]{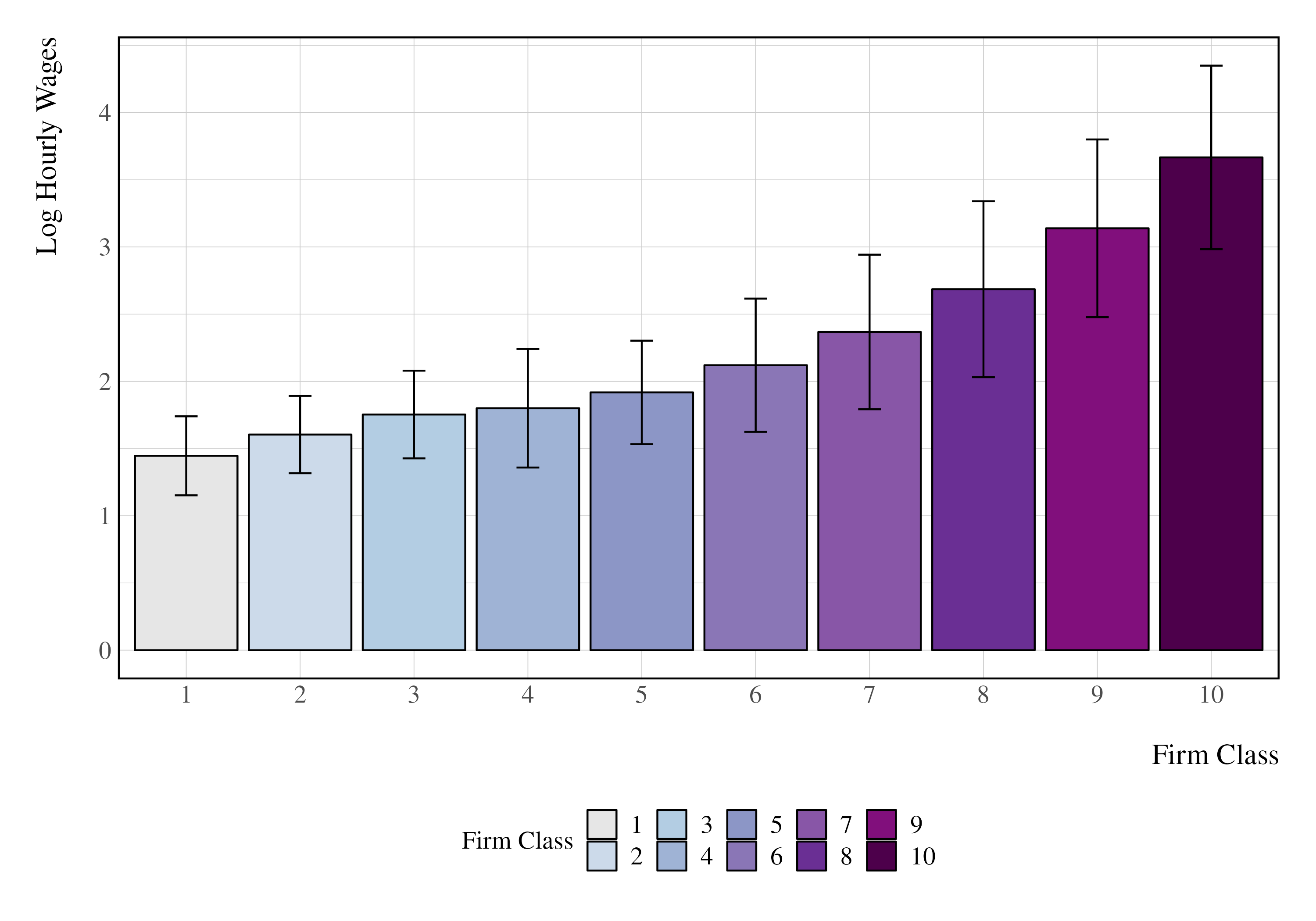}
        \caption{Wage (Mean and Variance) Statistics}
        \label{fig:wagestat}
    \end{subfigure}
    \hfill
    \begin{subfigure}{0.48\textwidth}
        \centering
        \includegraphics[width=\textwidth]{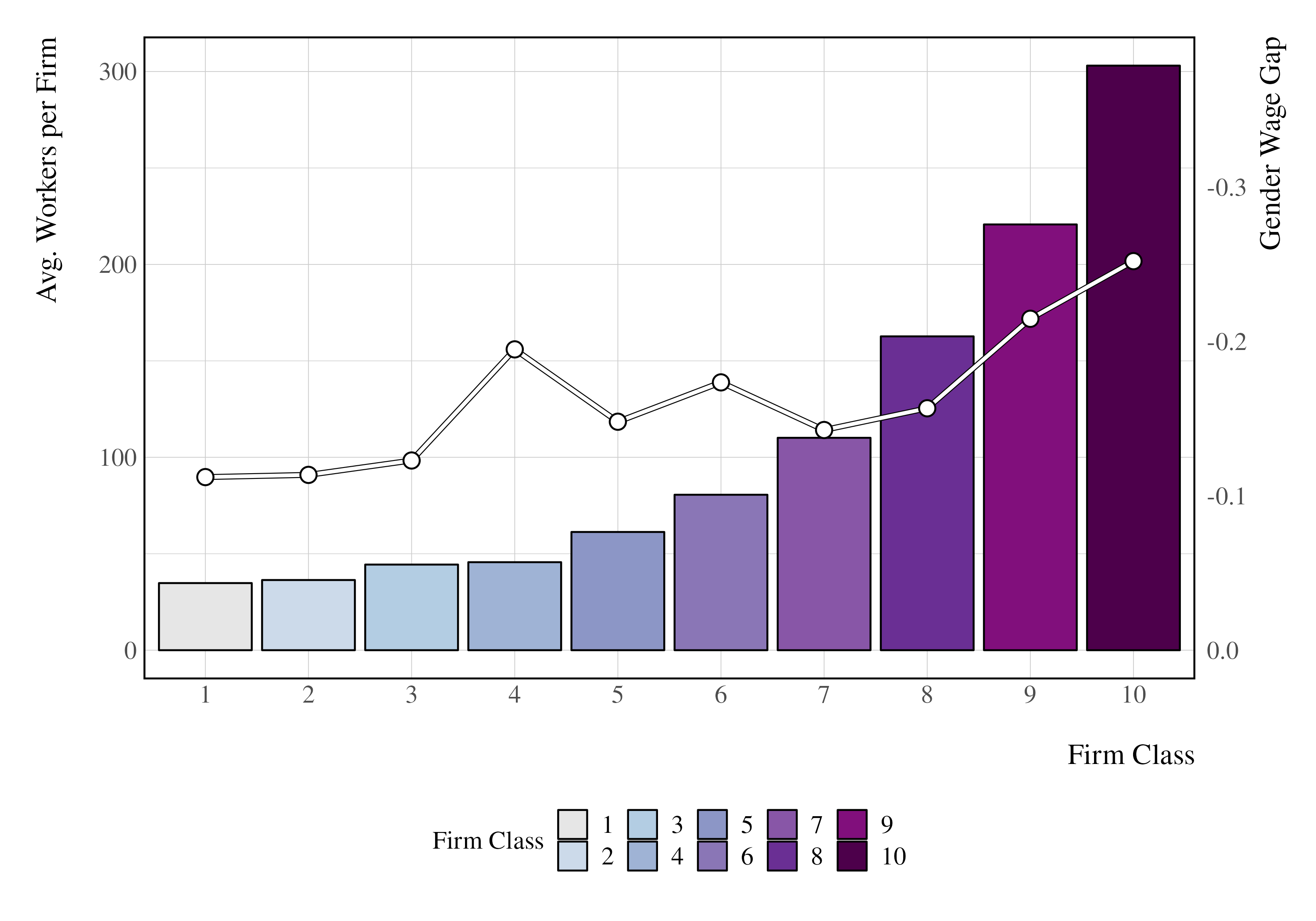}
        \caption{Gap and Size Statistics}
        \label{fig:sizestat}
    \end{subfigure}
    
    \caption{(a) Firm Class ECDFs, (b) Firm Class Mean and Variance, and (c) Firm Class Size and Gender Wage Gap Statistics}
    \caption*{\footnotesize \textit{Note:} \textsuperscript{1}Firm classes estimated by a k-means clustering algorithm using as measurement their empirical cumulative distribution function supported by the ventiles of the population. \textsuperscript{2}The Gender wage gap in means (line in Panel C) is calculated as the female minus male: $\mathbb{E}[w_{it}^f | k] - \mathbb{E}[w_{it}^m | k]$.}
    \label{fig:combined}
\end{figure}

Figure \ref{fig:wagestat} provides the moments of their log hourly wage distribution, with the means as the bars and the first standard deviation as the error-bars. For each estimated cluster, not only expected payment increase but also their dispersion when going upward in the firm class ranking. For example, the lowest firm class pays, on average, 1.45 in log hourly wages, with a variance of 0.09, while the highest pays 3.67 with 0.47 in variance.

Figure \ref{fig:sizestat} reveals the gender wage gap in means (expressed as $\mathbb{E}[w_{it}^f | k] - \mathbb{E}[w_{it}^m | k]$) as the line plot (the right $y$ axis), and the average size per firm as the bar plot (the left $y$ axis). The expected gender wage gap in means has a tendency to increase when going up in firm class ranking. The lowest paying firms are the most equitable firms in the labor market, with the lowest difference between genders at 11 log-points. The plot also reveals highest-paying firms, which tend to be larger firms\footnote{For a full descriptive statistics of firm classes, see Table \ref{tab:lowerclasses} and \ref{tab:upperclasses}}, exhibiting the largest gender wage disparities, reaching 25 log points. This finding is not entirely unexpected given the substantial variance in wages within firm class 9 or 10. This pattern suggests potential overestimation of the magnitude of firm effects contribution to the gender gap under additive separable models. This overestimation likely stems from the necessary practice of focusing on large firms to ensure sufficient worker mobility within a connected set, while addressing the ``double-coincidence'' problem of observing both male and female job transitions. However, this approach inadvertently oversamples precisely those firms where gender wage disparities are most pronounced, potentially skewing overall estimates of firm effects on the wage gap.

\subsection{Assortative Matching of Estimated Parameters}

Figure \ref{fig:priors} displays the unconditional distribution of workers across firm classes (top row) and worker types (bottom row) for each gender. Both male and female workers exhibit a concentration of employment in firm class 6, but the proportion is slightly higher for men, with 15 percent of the male workforce in this class compared to 13 percent for women. Additionally, the distribution for men shows a more noticeable skew towards higher-productivity firms. Specifically, 17 percent of men are employed in the top two firm classes (9 and 10), whereas about 14 percent of women are employed in these high-productivity firms. This suggests that men are more likely to be employed in firms that offer higher wage premiums, which may contribute to the observed gender wage gap through the sorting channel.

The differences in distribution become more pronounced when examining worker types. The female distribution is heavily skewed to the left, with nearly 24 percent of women concentrated in worker type 3 versus 17 percent among male workers. In contrast, the male distribution is more evenly spread across worker types, exhibiting a more balanced, albeit still slightly left-skewed, pattern.

In this paper, worker types represent comparable unobserved heterogeneity. Meaning female and male type 3 are individuals where their wages are likely drawn from the same set of Gaussian distributions. The firm class distribution has a more straightforward interpretation, as the proportion of firms with similar payment policies, mirroring patterns of productivity and industry.

When I discuss the gender wage gap decomposition, I hold the distribution of worker types constant since channels of worker type heterogeneity may arise from a multitude of mechanisms in the labor market, such as non-monetary preferences or human capital levels.

Figure \ref{fig:typeclasssort} displays firm classes along the horizontal axis against the stacked conditional proportions of corresponding worker types, separately for female and male workers. These proportions are recovered by grouping types for each male and female sample conditional on each firm class after the \textit{maximum a priori} classification.

Worker types and firm classes are numbered according to expected payment. Therefore, type 10 represents on average the highest paid worker in the data, a proxy for individuals that overall possess high human capital value. The visual representation clearly illustrates an assortative matching pattern, revealing that higher-paying firms predominantly employ higher types of workers for both genders. However, there are notable differences between male and female sorting patterns.

For female workers, there is a strong concentration of lower-type workers in lower-class firms. For instance, in firm class 1, 29 and 23 percent of the workforce comprise of type 1 and type 2 workers, with another 36 percent being type 3 and 4 together. Moving to higher firm classes, this composition shifts dramatically: in class 10, less than 5 percent are type 1 workers, while 15 and 26 percent belongs to type 10 and 9 workers.

On the other hand, male workers shows a slightly different trend. Type 1 and 2 workers comprise together 44 percent of firm class 1 workforce, slightly less concentrated than for females. In the highest firm class, while also presenting negligible proportions of the lowest type, 51 percent of the workforce is comprised of type 10 and 9 workers.

Therefore, while assortative matching is evident for both genders, the patterns reveal some disparity in how men and women with sufficiently similar unobserved heterogeneity are sorted across firm classes. Women appear to face some friction in ascending the firm classification hierarchy, resulting in a more pronounced concentration in lower-tier firms even when their latent productivity (as captured by worker types) is comparable to that of their male counterparts. 

\begin{figure}[htbp!]
    \centering
    \includegraphics[width=\textwidth]{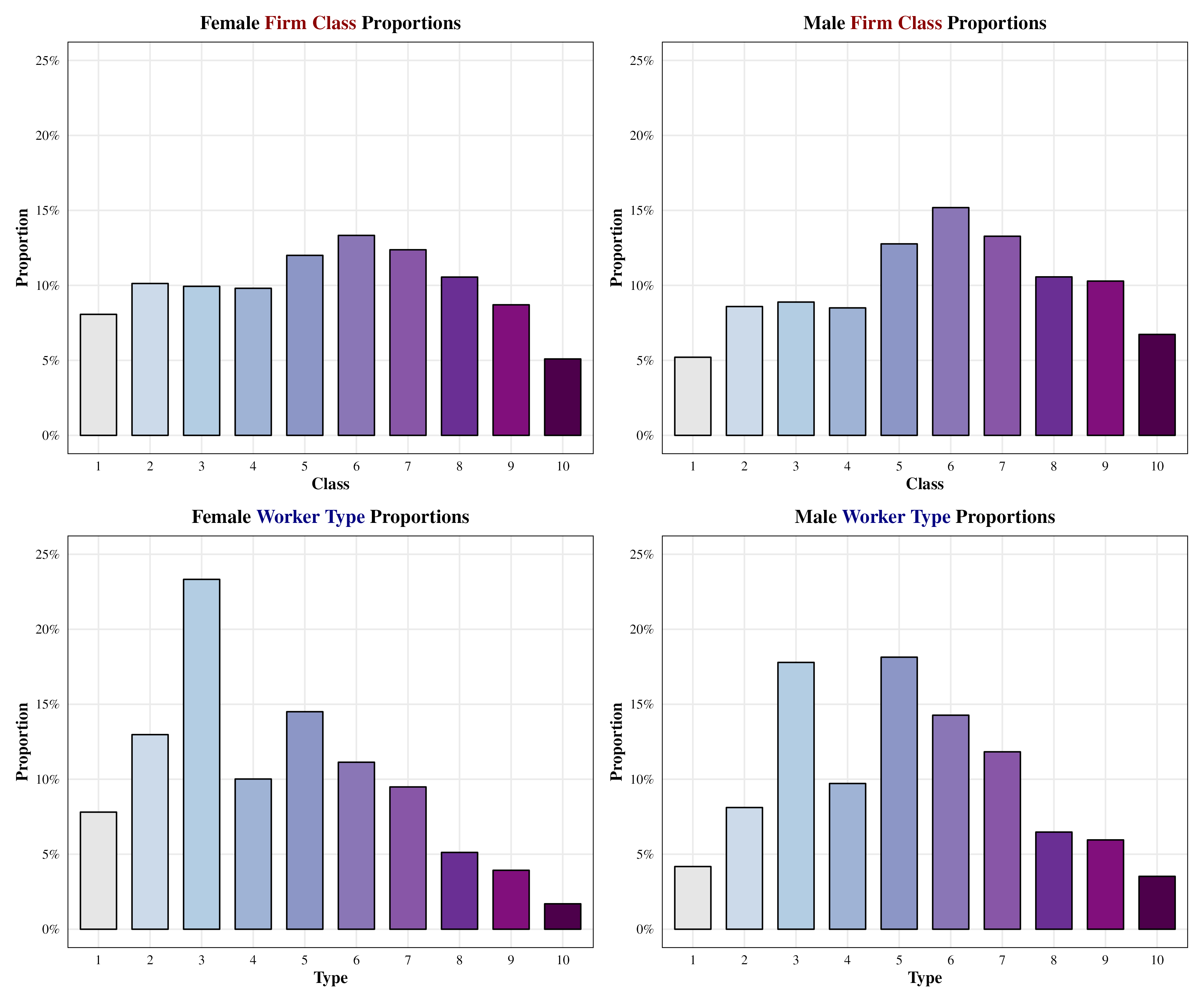}
    \caption{Worker Type and Firm Class Unconditional Probabilities per Gender}
    \caption*{\small \textit{Note:} \textsuperscript{1} Firm class estimated using k-means clustering of the cumulative distribution function of payments. Worker types estimated using a Gaussian mixture model where I assume each latent worker type interact with firm classes by drawing wages from a log-normal distribution.}
    \label{fig:priors}
\end{figure}

 \begin{figure}[htbp!]
    \centering
    \includegraphics[width=1\textwidth]{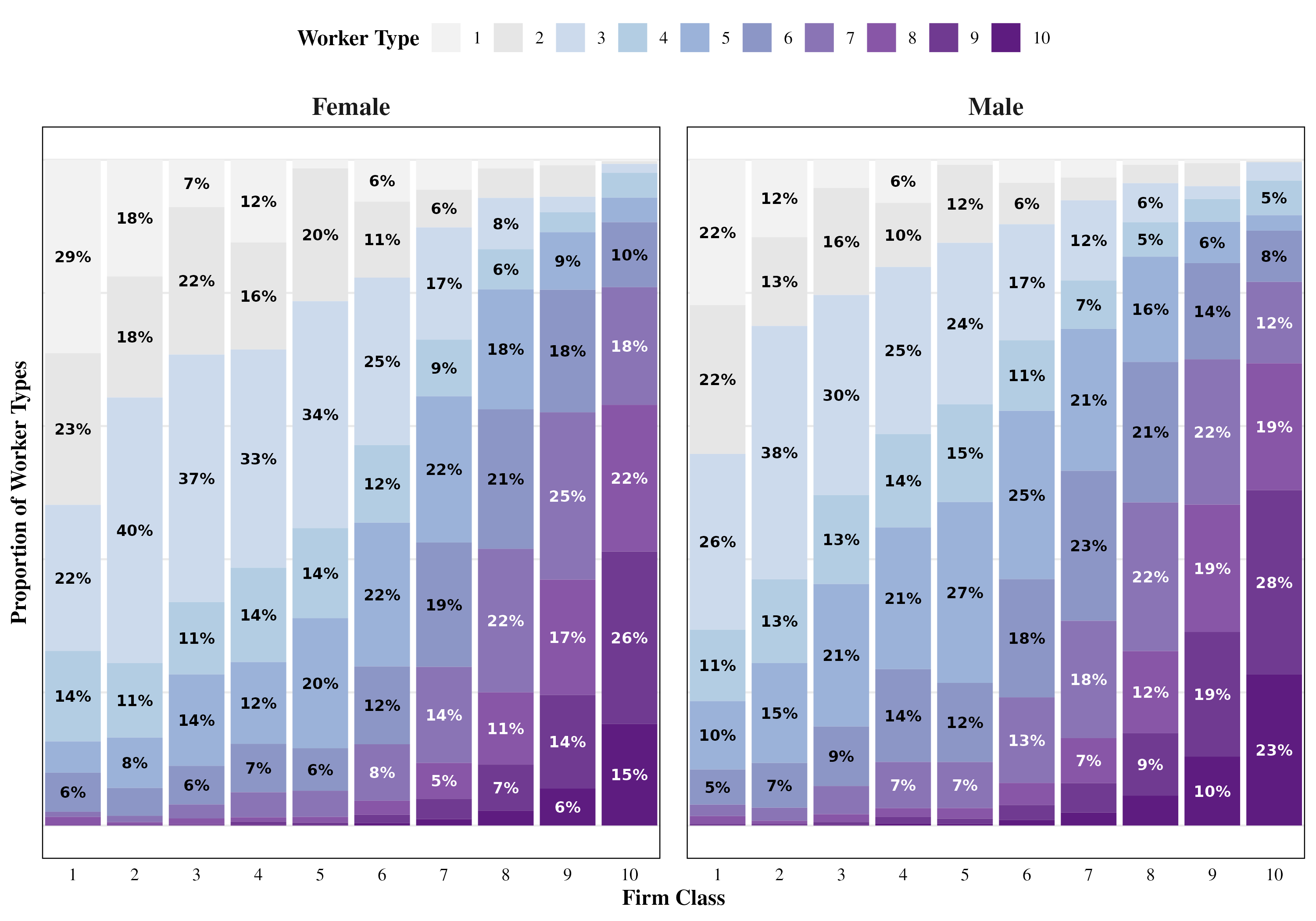}
    \caption{Proportion of Estimated Worker Types and Firm Classes}
    \caption*{\small \textit{Note:} \textsuperscript{1}Proportions of worker types recovered using a finite Gaussian mixture of log hourly wages conditional on observed firm classes. Firm classes recovered using k-means clustering algorithm on firm's log hourly wage's CDFs. \textsuperscript{2}Worker type membership assigned using a \textit{maximum a priori} estimation. \textsuperscript{3}Types and classes ordering based on expected log hourly hours.}
    \label{fig:typeclasssort}
\end{figure}

\subsubsection{Theil Index}

To quantitatively assess whether the male distribution in the labor market is slightly more symmetrical compared to the female, I perform a Theil Index calculation to measure the inequality, where the metric is the number of workers per match. I separate the firm classes into low and high classes, where low comprises of firm class 1 to firm class 5, while high comprises of firm class 6 to firm class 10.

The simple Theil Index formula is:
\begin{equation}
    T = \frac{1}{M}\sum_{m=1}^{M} \frac{N_m}{\bar{N_m}} \log \left ( \frac{N_m}{\bar{N_m}} \right )
\end{equation}
where $m$ is each match in the labor market, $M$ the total number of matches, in this case, 100. $N_m$ is the total number of workers for each match, while $\bar{N_m}$ is the average number of workers per match in the labor market.

The Theil index for the male distribution is 0.43. For the female distribution is 0.48, slightly larger, suggesting that the female distribution of workers in the labor market is more sorted towards the left, concentrated in overall less paying matches.

\subsection{Payment Schedules}

The BLM method not only demonstrates the flexibility to capture assortative matching but also enables researchers to discern the underlying wage structure arising from firm-worker interactions by assessing whether certain compensation patterns result in wage levels that surpass predictions from additive separable models.

Figure \ref{fig:payschedules} presents the payment schedules by firm class and worker type. Panel (a) represents estimated average payments directly under the Gaussian mixture model. Panel (b), on the other hand, is a counterfactual scenario where worker and firms do not yield complementary wage effects in their interactions. I construct this counterfactual by performing an ``AKM'' two-way fixed effect model such as:

\begin{equation}
    w_{it} = \alpha^g_{L(i)} + \psi^g_{K(i,t)} + \varepsilon_{it}
\end{equation}
where $w_{it}$ is the log hourly wage of worker $i$ in time period $t$, $\alpha_{L(i)}$ is the fixed effect of worker $i$'s type $l$, represented under the assignment function $L(i) = l$, $\psi^g_{K(i,t)}$ is the fixed effect of firm class $k$, also represented under an assignment function $K(i,t)$. $\varepsilon_{it}$ is the idiosyncratic error term. To make sure I preserve gender disparities, I regress twice for each gender sample.

I introduce a weighting parameter to mitigate the influence of extreme values on my estimation. It leverages the fact that common interactions in the labor market tend to possess small complementarity effects\footnote{Figure \ref{fig:typeclasssort} reveals that ``extreme complementarity'' matches are approximately 5 percent of the total.}. Consequently, the objective function for this minimization problem can be expressed as:

\begin{equation} \label{eq:weightedols}
    \min \sum_{i,t} n_{M(k, l)} (w_{it} -  \alpha^g_{L(i)} + \psi^g_{K(i,t)})^2
\end{equation}
where $n_{M(k, l)}$ represents firm class $k$ and worker type $l$ match's proportion of the number of workers.

Each panel in Figure \ref{fig:payschedules} shows each line representing an expected payment ``path'' of each worker type when hired by a particular firm class.

The Gaussian mixture model is able to capture different wage levels that do not necessarily follow a linear trend, as shown in Figure \ref{fig:gaussianpred}.Top firm classes tend to offer substantially higher wage levels to individuals, with particularly pronounced effects for workers in the lower to middle range of the skill distribution. High-earning individuals exhibit remarkable wage stability across firm classes, maintaining their elevated earnings even when matched with low firm classes, with a slight decrease. There is also severe wage compression at the left tail of the distribution. In particular, ``worker type 8'' experiences severe wage compression if matched with extreme low firm classes such as 1 or 2. 

When worker-firm interactions are assumed to be ``additive separable'', particular interactions are smoothed out as shown by Figure \ref{fig:linearpred}, the lines become parallel, which is the quintessential feature of the additive separability assumption: workers and firms contribute to the wage generation function by adding their respective ``values''. That means worker type 10, for example, if transferred from firm class 10 to 1, should not lose the part of their wage that belongs purely to their components. 

\begin{figure}[H]
    \centering
    \begin{subfigure}{0.7\textwidth}  
        \centering
        \includegraphics[width=\linewidth]{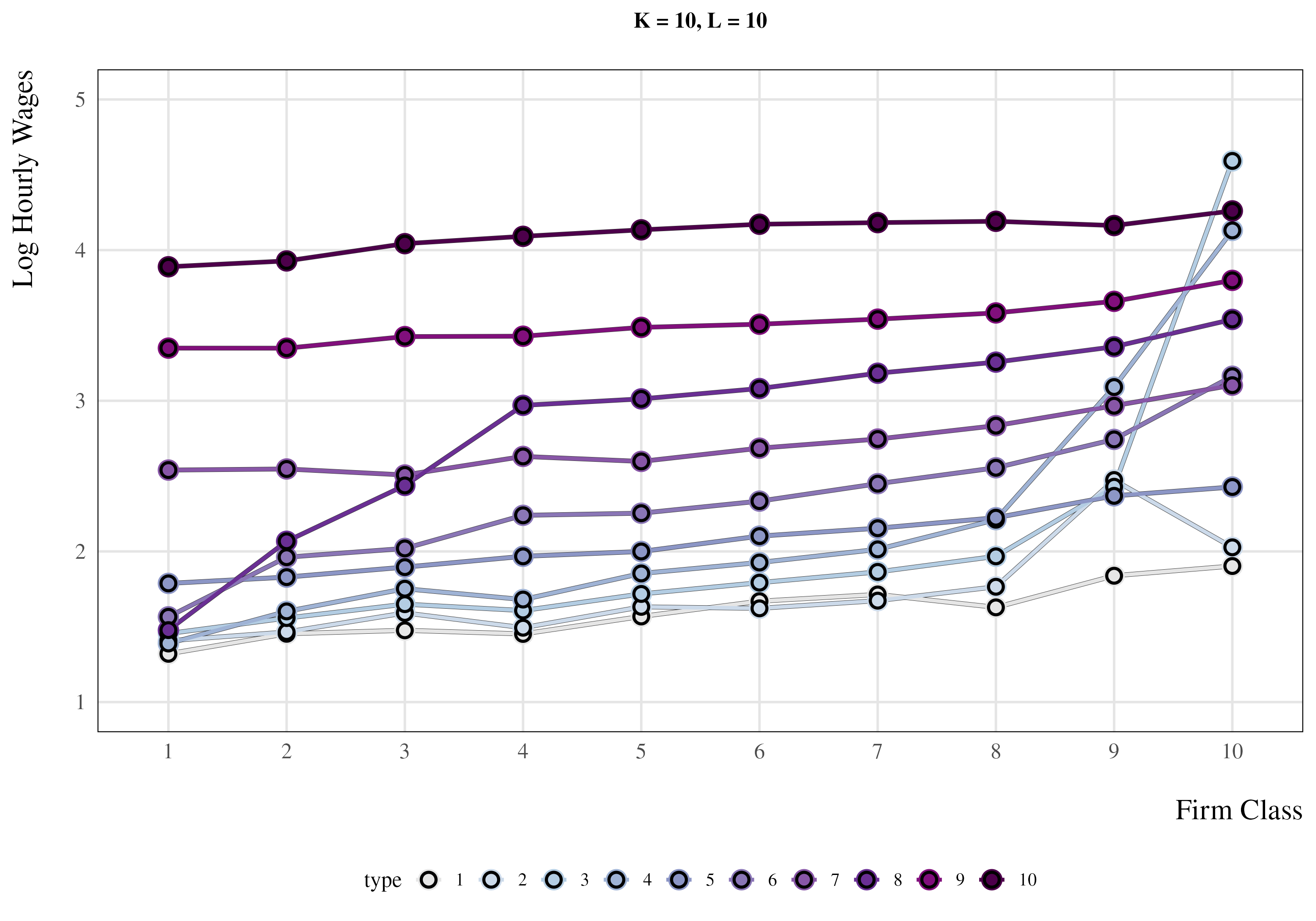}
        \caption{Gaussian Mixture Estimated Wages per Match}
        \label{fig:gaussianpred}
    \end{subfigure}

    \vspace{1em} 

    \begin{subfigure}{0.7\textwidth}  
        \centering
        \includegraphics[width=\linewidth]{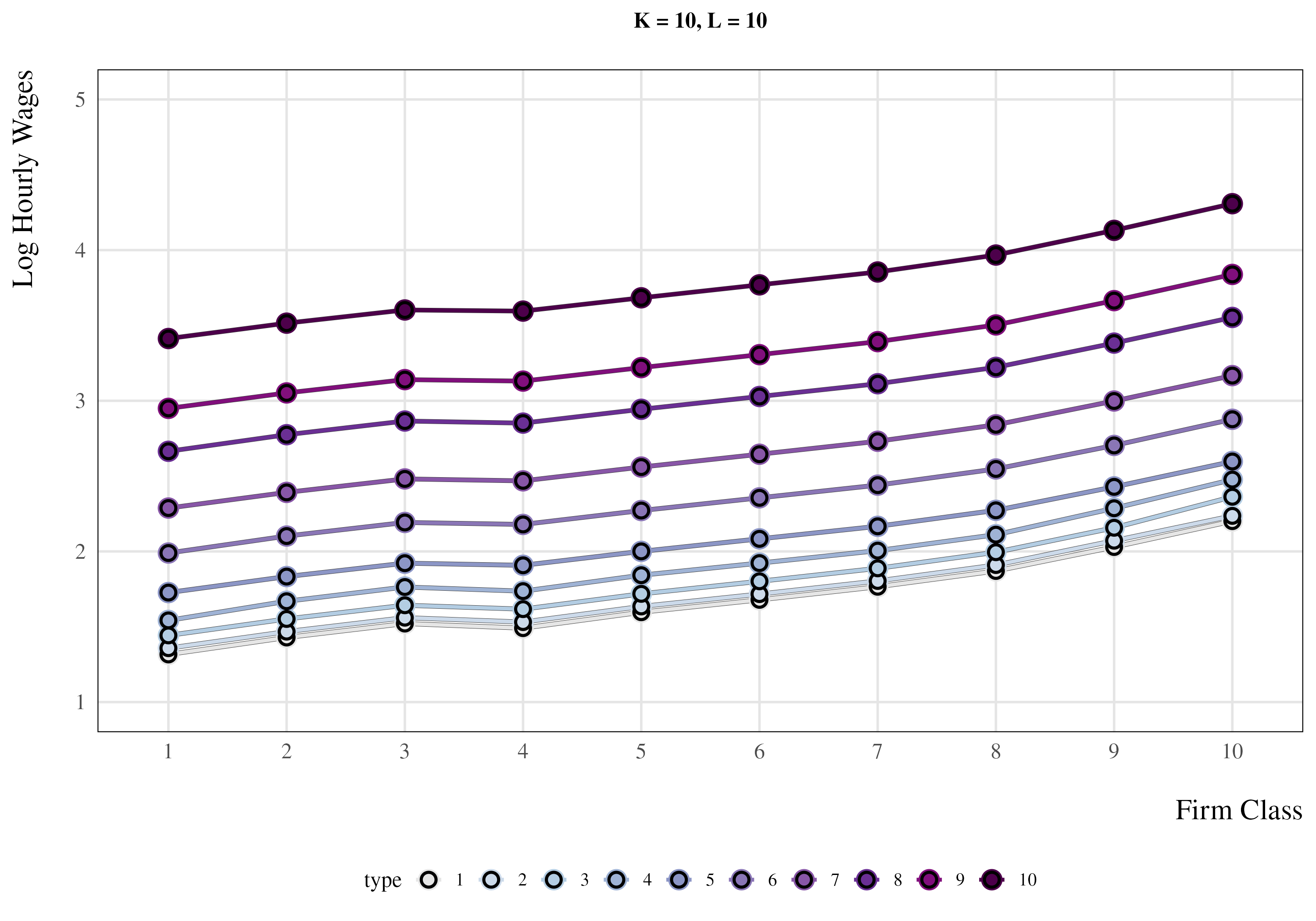}
        \caption{Additive Separable Prediction of Wages per Match}
        \label{fig:linearpred}
    \end{subfigure}
    
    \caption{Payment Schedules of worker-firm interactions under Gaussian mixture estimates and predicted linear model.}
    \caption*{\small \textit{Note:} Panel (a) generated by using estimated means and variances of each Gaussian component of the mixture. Panel (b) generated by running a two-way fixed effect estimation with firm classes and worker types as fixed effects, weighted by the number of workers per each match.}
    \label{fig:payschedules}
\end{figure}

\begin{figure}[H]
    \centering
    \includegraphics[width=\textwidth]{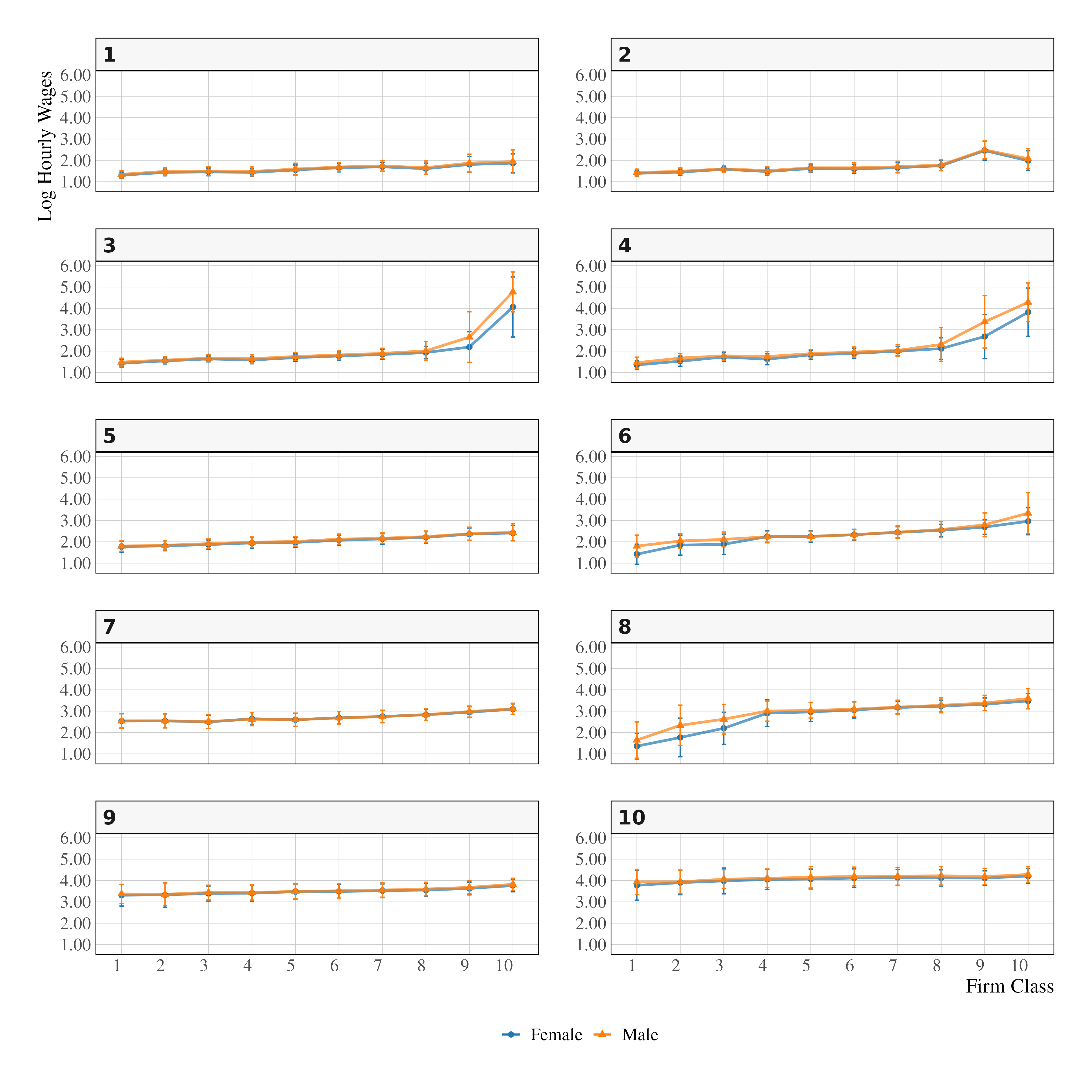}
    \caption{Pay Schedules by Gender, Firm Class, and Worker Type}
    \caption*{\small \textit{Note:} \textsuperscript{1}Each panel represents a worker type payment schedule for each firm class, in log hourly wages, grouped by gender. \textsuperscript{2}Points indicate the mean log hourly wage, and error bars represent one standard deviation of the estimated wage distribution.}
    \label{fig:genderpayschedules}
\end{figure}

\subsubsection{Gender-wise Payment Schedules}

A natural question that arises is to what extent these particularities matter for the creation of pay differentials between male and female workers. I construct Figure \ref{fig:genderpayschedules} by expanding Figure \ref{fig:gaussianpred}, separating each worker type into a male and a female components.

Complementarity effect arising from those matches implies significantly higher wages for men as compared to women, in particular for the worker types 3, 4, and 8, for which the deviation from a ``separable setting'' is most extreme. For example, type 3 matched with firm class 10 yields almost 1 log hourly wage gap. Type 4 under the same match yields about 50 log-points in gap.

Type 8, and to a lesser extent type 6, exhibit negative deviations as they approach the lower extreme of firm productivity. When these worker types, characterized by moderate-to-high human capital accumulation, are found in low-productivity firms, a compression effect on wages emerges. In these cases, the expected wage falls below the sum of the expected firm and worker effects. Female workers are more susceptible to these unfavorable matches compared to their male counterparts.

\subsubsection{Workers Under Comparative Advantage}

To understand better the differences being under a comparative advantage match and otherwise in the labor market, I compare individuals with sufficiently close payments but differing in matches. Specifically, I compare the type 3 and type 10 male and female workers when hired by the class of firm 10. Table \ref{tab:complementarity_workers} provide a descriptive statistics of these workers.

\begin{table}[htp!]
\centering
\begin{threeparttable}
\caption{Workers Under Complementarity and Non-complementarity Matches in Firm Class 10}
\label{tab:complementarity_workers}
\begin{tabular}{lcccc}
\toprule
    &\multicolumn{2}{c}{Type 3 Workers} &\multicolumn{2}{c}{Type 10 Workers} \\
    \cmidrule(lr){2-3} \cmidrule(lr){4-5}
  & Female & Male & Female & Male\\
  & (1) & (2) & (3) & (4) \\
\midrule
\multicolumn{5}{l}{\textit{Education And Age}} \\
Dropout & \num{0.04} & \num{0.01} & \num{0.00} & \num{0.01}\\
High School Graduates & \num{0.12} & \num{0.06} & \num{0.03} & \num{0.04}\\
College & \num{0.84} & \num{0.92} & \num{0.97} & \num{0.95}\\
\addlinespace
Age ($<$30) & \num{0.15} & \num{0.06} & \num{0.10} & \num{0.09}\\
Age 31-50 & \num{0.69} & \num{0.66} & \num{0.74} & \num{0.70}\\
Age ($\geq$51) & \num{0.13} & \num{0.24} & \num{0.14} & \num{0.18}\\
\addlinespace
\midrule
\multicolumn{5}{l}{\textit{Occupation Statistics}} \\
Scientific and Liberal Arts & \num{0.20} & \num{0.25} & \num{0.36} & \num{0.37}\\
Technicians & \num{0.04} & \num{0.03} & \num{0.08} & \num{0.11}\\
Administrative & \num{0.18} & \num{0.07} & \num{0.19} & \num{0.13}\\
Managers & \num{0.52} & \num{0.62} & \num{0.35} & \num{0.36}\\
Traders & \num{0.05} & \num{0.01} & \num{0.02} & \num{0.02}\\
Rural & \num{0.00} & \num{0.00} & \num{0.00} & \num{0.00}\\
Factory & \num{0.01} & \num{0.01} & \num{0.01} & \num{0.01}\\
\midrule
Mean experience (years) & \num{6.375} & \num{7.302} & \num{8.417} & \num{8.288}\\
\addlinespace
Mean Log-Wage & \num{4.308} & \num{4.866} & \num{4.204} & \num{4.270}\\
Variance of Log-Wage & \num{1.836} & \num{0.715} & \num{0.094} & \num{0.116}\\
\midrule
Worker-years observations & \num{5718} & \num{18401} & \num{73869} & \num{157282}\\
Number of Workers & \num{4895} & \num{15542} & \num{44309} & \num{93073}\\
Fraction of Women & \num{0.24} & \num{0.76} & \num{0.32} & \num{0.68}\\
\bottomrule
\end{tabular}
\begin{tablenotes}
\small
\item \textit{Notes:} \textsuperscript{1}Under complementarity matches are type 3 and type 4. Without complementarity is type 10 match, which in wage levels is comparable to matches under complementarity. \textsuperscript{2} Education, age, and occupation statistics are fractions that may not necessarily add to one due to rounding.
\end{tablenotes}
\end{threeparttable}
\end{table}

The table highlights distinct differences in education, age distribution, and occupations between both types of workers when under firm class 10, male and female. Both groups, regardless of gender, possess a high concentration of college degree individuals, with ``type 10 workers'' having slightly more. Age is also similar, whereas less than 30 years old female workers are more likely to be found under type 3, while 31-50 are marginally more likely to be found under type 10.

For occupation, while both types display a higher concentration of scientific and liberal arts \footnote{These categories are generated from the Brazilian Code for Occupation, and tend to have similarities with other codes internationally. ``Scientific and Liberal Arts' is a generic code that summarizes economists, engineers, lawyers, professors, among others, whose jobs under normal circumstances require at least a degree from a superior institution of learning or education.}, there is a much higher concentration of managers.

Even with similarities, there is some evidence that individuals in complementarity effect matches might have leading positions and are particularly valuable for firms to employ. Such individuals would be suffering more extreme wage compressions if hired elsewhere, while worker type 10 experiences a more predictable wage path along firm classes.

\section{Discussions: Monte Carlo Simulation and Variance Decomposition} \label{sec:discussion}

In this section, I introduce a novel decomposition of the gender wage gap that accounts for complementarity effects in the labor market. I decompose the gender wage gap into three distinct components. The first component captures the contribution of complementarity effects, which I isolate by constructing a counterfactual labor market without comparative (dis)advantage matches.

The remaining two components are inspired by \textcite{card_bargaining_2016}. The second component, referred to as the ``sorting'' component, reflects the impact of firm allocation on the wage gap. I calculate this by simulating male and female labor markets where all factors are held constant except for the distribution of firms.

The final component, the ``bargaining'' component, represents the wage gap contribution arising when equally productive individuals are employed by firms of the same class, but a gender-based differential persists. I isolate this effect through a simulation in which male and female labor markets share identical distributions of workers and firms, while the means and variances of each gender-wise Gaussian distribution remains as observed in the original data.

The ``bargaining'' and ``complementarity'' components share certain similarities in nature. The complementarity effect can be viewed as a subset of the bargaining effect in the context of a CCK framework, as both reflect differences in returns for similar individuals within the same firm. However, the Gaussian mixture model allows me to distinguish between these two components, as it identifies labor market matches where wages deviate from the assumption of additive separability. As a result, complementarity effects emerge only in these specific labor market settings, whereas bargaining is more applicable in contexts where the additive separability condition holds.

\subsection{Monte Carlo Simulations}

To setup the Monte Carlo Simulations, I first calculate the realized moments of every worker type and firm class match in the labor market for the male and the female sample. Then I calculate the unconditional probabilities of worker types and firm classes for male and female\footnote{Means and the standard deviations of each match are shown in Figure \ref{fig:genderpayschedules}. The unconditional probabilities are shown in Figure \ref{fig:priors}.}

To create a separable market, I match workers following a ``diagonal pattern in matches'' in Figure \ref{fig:gaussianpred}. That means type worker 10 is guaranteed to work in firm class 10 as long as there is a spot available. When firm class 10 job slots are filled, firm class 9 starts hiring the best available, until all jobs are filled with workers. Figure \ref{fig:cond_separable_market} shows the resulting conditional probabilities of worker types given firm classes and gender under a separable market.

 In a labor market characterized by strictly additive separability, reshuffling matches is expected to have negligible effects on overall wage levels \parencite{graham_complementarity_2014}. Therefore, I leverage this fact, and the fact that ``diagonal'' matches do not yield large complementarity effects, to construct a labor market that behaves under the additive separability assumption. As a robustness check, I also perform my analysis using the weighted linear regression predicted fixed effects shown in Figure \ref{fig:linearpred}.

\begin{figure}[H]
    \centering
    \includegraphics[width = \textwidth]{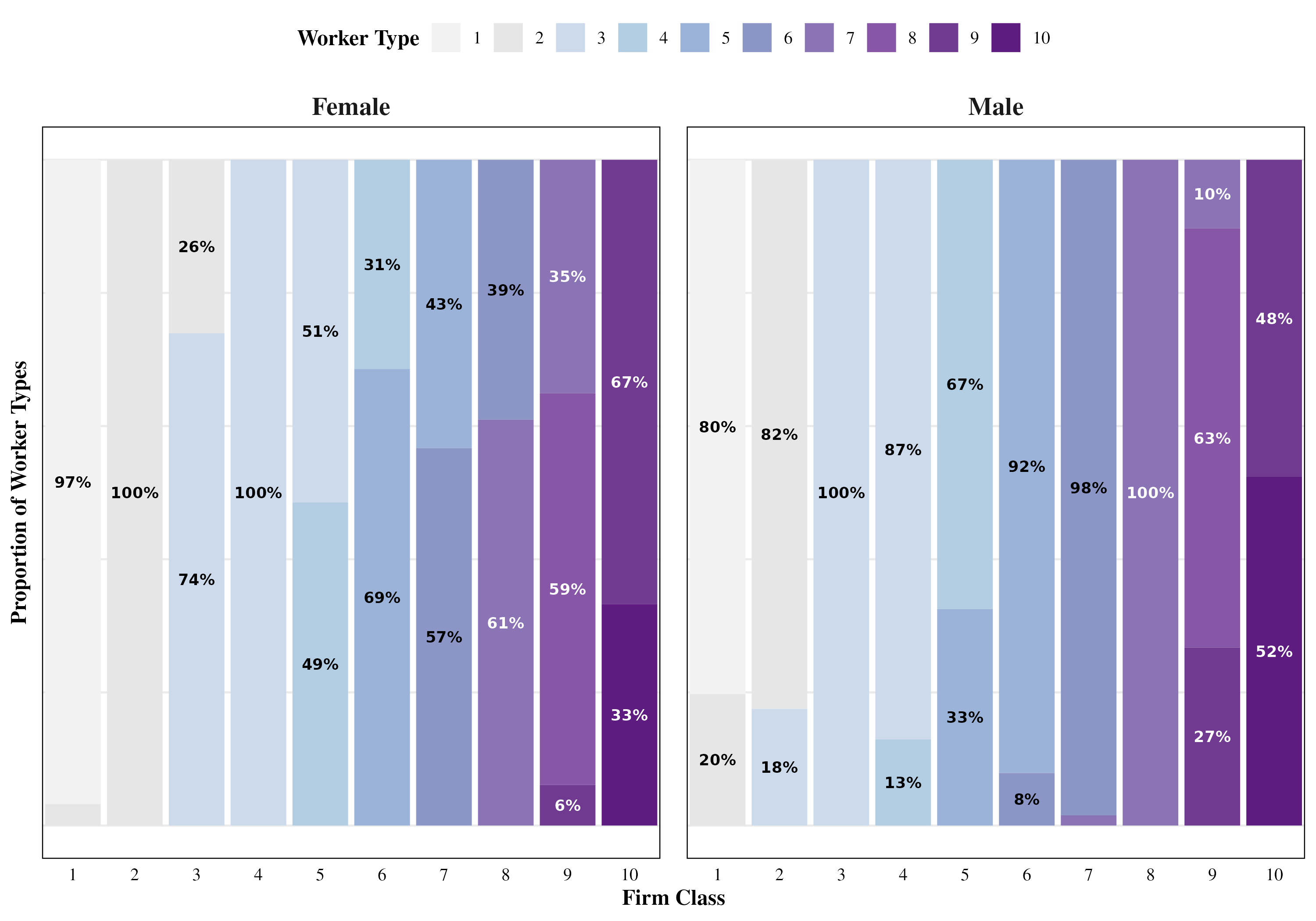}
    \caption{Conditional Probabilities of Worker Types Given Firm Classes and Gender, Under a Separable Market}
    \caption*{\small \textit{Note:} Probabilities calculated by generating a labor market where ``top workers'' are guaranteed to match with ``top firms'', until all positions are filled, following \textcite{becker_theory_1973}'s principle of assortative matching.}
    \label{fig:cond_separable_market}
\end{figure}

The difference in the gender wage gap between the separable labor market and the original setting reflects the contribution of complementarity effects that are not captured in the separable model. For CCK's ``bargaining'' and ``sorting'' components, I conduct simulations within the separable labor market to ensure that these components remain distinct and do not overlap.

\subsubsection{Simulation Results}

Table \ref{tab:wage_gap_decomposition} presents the results of the simulations. The first row displays the observed mean log hourly wages for female and male workers, along with the difference in log points, for the overall dataset, as well as broken down by education cohort and age group. The first column reports the difference in means calculated using the Gaussian distribution, which closely mirrors the observed gender wage gap. The second column reflects the gender wage gap in the counterfactual world where the labor market is additively separable.

The next three columns represent the contributions of the three proposed components. The first column shows the difference between the baseline and the separable market, indicating the contribution of complementarity effects to the gender wage gap. The ``sorting'' component reflects the contribution of firm allocation to the wage gap, holding all else constant in the separable market except the distribution of firms. Lastly, the ``bargaining'' component captures the contribution to the wage gap when all else is held constant in the separable market, except for the observed moments (means and variances) of each Gaussian distribution.

\begin{table}[htp!]
\centering
\begin{threeparttable}
\caption{Gaussian Mixture Decomposition of Gender Wage Gaps}
\label{tab:wage_gap_decomposition}
\begin{tabular}{lccccc}
\toprule
& & & \multicolumn{3}{c}{Contribution to Gender Wage Gap} \\
\cmidrule(lr){4-6}
        & Baseline & Separable & Complementarity & Sorting & Bargaining \\
 Group  & Market Gap      & Market Gap & Contribution   & Contribution & Contribution \\
        & (1) & (2) & (3) & (4) & (5)  \\
\hline
All & -0.24 & -0.20 & -0.04 & -0.09 & -0.02 \\  &  &  & (16.7\%) & (37.5\%) & (8.3\%) \\
\hline
\multicolumn{6}{l}{\textit{Education}} \\
No high-school & -0.30 & -0.27 & -0.03 & -0.13 & -0.03 \\
&  &  & (10.0\%) & (43.3\%) & (10.0\%) \\
\addlinespace
High-school & -0.23 & -0.22 & -0.01 & -0.08 & -0.02 \\
&  &  & (4.3\%) & (34.7\%) & (8.7\%) \\
\addlinespace
College & -0.35 & -0.30 & -0.05 & -0.16 & -0.04 \\
&  &  & (14.3\%) & (45.7\%) & (11.4\%) \\
\hline
\multicolumn{6}{l}{\textit{Age}} \\
<30 & -0.09 & -0.08 & -0.01 & -0.03 & -0.02 \\
&  &  & (11.1\%) & (33.3\%) & (22.2\%) \\
\addlinespace
31-50 & -0.30 & -0.26 & -0.04 & -0.13 & -0.03 \\ 
&  &  & (13.3\%) & (43.3\%) & (10.0\%) \\
\addlinespace
51> & -0.33 & -0.28 & -0.05 & -0.13 & -0.03 \\ 
&  &  & (15.2\%) & (39.4\%) & (9.1\%) \\
\bottomrule
\end{tabular}
\begin{tablenotes}
\small
\item \textit{Notes:} \textsuperscript{1}All values represent log wage gaps (female - male). Baseline Gap is the observed gap. \textsuperscript{2}Separable Market Gap assumes interactions do not yield complementarity effects. \textsuperscript{3}Complementarity Contribution is the difference between Baseline and Separable Market gaps. \textsuperscript{4}Sorting Contribution is the reduction in gap after equalizing means and variances of worker-firm interactions. \textsuperscript{5}Bargaining Contribution is the reduction in the gap after equalizing firm probabilities. \textsuperscript{7}Both sorting and bargaining are calculated under a separable market. \textsuperscript{6}Numbers in parentheses show the percentage of the Baseline Gap explained by each component.
\end{tablenotes}
\end{threeparttable}
\end{table}

Overall, the complementarity effect contributes approximately 16 to 17 percent to the gender wage gap, indicating that disparities arising from comparative advantage matches play a significant role in generating wage differences between male and female workers. Labor market allocation accounts for about 37.5 percent, while differences in bargaining without considering complementarities contribute roughly 8.3 percent. Together, these components explain nearly two-thirds of the gender wage gap. These findings demonstrate that non-separable, two-sided heterogeneity models, such as the Gaussian mixture approach of BLM, more effectively capture the substantial role that firms play in contributing to the gender wage gap, both horizontally as vertically.

The gender wage gap is smaller among individuals with a high school education but reaches its peak among those with college degrees. Furthermore, changes in the gender wage gap are not primarily driven by firm allocations; instead, they are largely explained by complementarity effects, especially for individuals with college degrees. This suggests that women with high levels of human capital, who are in positions of comparative advantage, are particularly susceptible to wage disparities arising from these effects.

Another evidence of the human capital accumulation and complementarity effect positive correlation is the age analysis. While sorting increases its role in wage differentials for individuals older than 30 years, it stabilizes for older than 51, while complementarity effect contribution keeps increasing slightly.

\begin{table}[htp!]
\centering
\begin{threeparttable}
\caption{Gaussian Mixture Decomposition of Gender Wage Gaps - Firm sizes and occupations}
\label{tab:wage_gap_decomposition_firm}
\begin{tabular}{lccccc}
\toprule
& & & \multicolumn{3}{c}{Contribution to Gender Wage Gap} \\
\cmidrule(lr){4-6}
        & Baseline & Separable & Complementarity & Sorting & Bargaining \\
 Group  & Market Gap      & Market Gap & Contribution   & Contribution & Contribution \\
        & (1) & (2) & (3) & (4) & (5)  \\
\hline
All & -0.24 & -0.20 & -0.04 & -0.09 & -0.02 \\  &  &  & (16.7\%) & (37.5\%) & (8.3\%) \\
\hline
\multicolumn{6}{l}{\textit{Firm Size}} \\
Firms <10 & -0.12 & -0.12 & 0.00 & -0.02 & -0.03 \\  &  &  & (0.0\%) & (16.7\%) & (25.0\%) \\
\addlinespace
Firms 10-50 & -0.14 & -0.13 & -0.01 & -0.03 & -0.01 \\  &  &  & (7.1\%) & (21.4\%) & (7.1\%) \\
\addlinespace
Firms 51> & -0.21 & -0.17 & -0.04 & -0.08 & -0.02 \\  &  &  & (19.0\%) & (38.1\%) & (9.5\%) \\
\hline
\multicolumn{6}{l}{\textit{Occupations}} \\
Hotel and & -0.12 & -0.12 & 0.00 & -0.04 & -0.03 \\  
Restaurants &  &  & (0.0\%) & (33.3\%) & (25.0\%) \\
\addlinespace
Economists  & -0.39 & -0.35 & -0.04 & -0.16 & -0.05 \\ 
 and Engineers &  &  & (10.3\%) & (41.0\%) & (12.8\%) \\
\addlinespace
Managers & -0.33 & -0.22 & -0.11 & -0.08 & -0.03 \\  &  &  & (33.3\%) & (24.2\%) & (9.1\%) \\
\bottomrule
\end{tabular}
\begin{tablenotes}
\small
\item \textit{Notes:} \textsuperscript{1}All values represent log wage gaps (female - male). Baseline Gap is the observed gap. \textsuperscript{2}Separable Market Gap assumes interactions do not yield complementarity effects. \textsuperscript{3}Complementarity Contribution is the difference between Baseline and Separable Market gaps. \textsuperscript{4}Sorting Contribution is the reduction in gap after equalizing means and variances of worker-firm interactions. \textsuperscript{5}Bargaining Contribution is the reduction in the gap after equalizing firm probabilities. \textsuperscript{7}Both sorting and bargaining are calculated under a separable market. \textsuperscript{6}Numbers in parentheses show the percentage of the Baseline Gap explained by each component.
\end{tablenotes}
\end{threeparttable}
\end{table}

To further expand my study, I present Table \ref{tab:wage_gap_decomposition_firm} which shows the results of simulations based on samples of different firm sizes and occupations particularly relevant to this study. I focus on three categories of occupations. The first, "Hotels and Restaurants," includes workers directly involved in the hospitality sector, such as waiters, kitchen staff, and cleaners. 

The second category, ``economists and engineers'', is self-explanatory. The rationale for selecting these professions lies in the fact that, in Brazil, these fields are highly regulated, meaning that only individuals with the appropriate college degree are legally permitted to practice. This choice offers two key advantages: it controls for college diplomas that are more uniform in their practice than, for instance, medical doctors, but also focusing on degrees that typically lead to higher compensation in the labor market.

Finally, the "Managers" category includes workers in leadership positions. These workers are likely to possess high levels of firm-specific human capital, giving them a strong comparative advantage in the labor market.

If comparative advantage drives complementarity effects on the gender wage gap, then strategic positions in the labor market, those involving valuable human capital accumulation and leadership roles, and the size of the firm, associated with bargaining power, should reveal particularly high levels of these effects.

The firm size panel of Table \ref{tab:wage_gap_decomposition_firm} reveals that the gender wage gap increases with firm size, particularly due to the complementarity and the sorting effect, meaning women not only are more likely to be found in low-paying firms when controlling for larger firms, but also in positions where men are receiving much higher complementarity compensations. 

The last three rows of Table \ref{tab:wage_gap_decomposition_firm} represent the results of the simulation of different occupations. Hotel and restaurants are typically occupations assumed in the literature to possess high turnover rate and zero firm premium in wages \textcite{card_bargaining_2016, casarico_what_2024}. Therefore, these occupations are expected to have negligible comparative advantage effects. Accordingly, my results suggest that the hotel and restaurants labor market is mostly governed by the additive separable assumption, given that the simulated separable market yielded the exact same wage differentials as the baseline market gap\footnote{For the male and female wage levels for all simulations, refer to Table \ref{tab:wagelevels}}, confirming that there is no complementarity effect. However, as the expected human capital accumulation is increased, the gap increases. While a considerable portion of the gap is due to firm allocations for economists and engineers, the complementarity contribution represents 10 percent of the gender wage gap.

For managers, the distance in wage differentials between the baseline market and the separable market is the largest, with the complementarity contribution accounting for about 33 percent of the gap. Moreover, the sorting contribution drastically reduces, from 16 to 8 log-points, falling from 41 percent in contribution to 24 percent.

Under an additive separable model, such as linear regression, the results would suggest that labor market allocations are the primary drivers of the gender wage gap among managers\footnote{In another example, \textcite{card_bargaining_2016} found that firm-related factors contributed approximately 4 percent to the gender wage gap after controlling for managers.}. However, in a non-separable model, I can identify that a substantial portion of the previously unexplained differential is due to specific labor market matches that generate complementarity effects. Because additive separable models assume constant returns to unobserved heterogeneity of workers and firms, these contributions are difficult to capture accurately.

\subsection{Robustness Checks}

I perform a series of exercises to show my results are not sensitive to particular choice of parameters. While keeping the optimal number of firm classes according to the gap statistics ($K = 10$), I vary the number of worker types, which are the number of Gaussians observed in each firm class. I test with $L = 6$, $L = 10$, and $L = 12$. I also provide an alternative simulation of the separable market where I use a weighted ordinary least squares with firm classes and worker types as fixed effects. Following Equation \ref{eq:weightedols}, the weighted parameter is the fraction of workers of each worker type-firm class match. For the weighted OLS, I maintain $L = 10$.

For the alternative number of worker types, results were consistent across all specifications, with the exception of $L = 12$, that seemed to underestimate complementarity effects, putting slightly more contributions to the sorting and the bargaining contribution. The alternative separable market simulation yielded virtually the same estimates as the original, with a more conservative estimation of the complementarity effects contribution to the gender wage gap.

Despite some differences in estimation, overall, the results indicate that my measurements are not driven by errors arising from the Gaussian mixture estimation, local maxima, or a particular setting.

z

\subsection{Variance Decomposition}

A variance decomposition of log wages in works related to \textcite{abowd_high_1999} (AKM models). It decomposes the variance of log wages into five distinct components: (1) the contribution of worker fixed effects, (2) the contribution of firm fixed effects, (3) twice the covariance between worker and firm effects, (4) the variance of time-varying covariates and their associated covariances, typically captured by period dummies interacted with time-varying human capital indicators, and (5) the residual variance.

The AKM model, however, tends to negatively correlate worker and firm effects \parencite{andrews_high_2008}, implying a downward bias estimate for assortative matching. \textcite{bonhomme_distributional_2019} proposed using the dimension reduction technique relying in the Gaussian mixture model to mitigate the bias.

\textcite{card_bargaining_2016} found that approximately 10 percent of wage variance can be attributed to assortative matching for both male and female workers. In this section, I apply the framework of \textcite{bonhomme_distributional_2019} to examine the extent to which assortative matching may be underestimated in the wage variance decomposition used in AKM gender analysis.

I compare three models. The first specification uses clustered firm and individual worker identifiers, related to \textcite{bonhomme_grouped_2015}, this clustering approach allows the researcher to maintain the linear assumption but reduces the negative bias in assortative matching. I name this model ``clustered AKM'', or C-AKM, in which fixed effects for firms are now the firm classes\footnote{See Appendix Section \ref{sec:akm_discussion} and \ref{sec:CAKM} for discussions on the bias (also often dubbed ``limited mobility bias'' in the literature) and using the clustered AKM method to perform a KOB decomposition on the gender wage gap.}. In the second setting I employ the full BLM approach by leveraging both worker types and firm classes.

Finally, I test the variance decomposition analysis under a classical CCK approach which uses individual firm and worker identifiers as fixed effects. It requires the largest dual connected set of firms and bias correction. In this study, I use the bootstrapping approximation of \textcite{azkarate-askasua_correcting_2023} to correct the assortative matching bias.

Formally, the regression setting is:
\begin{equation} \label{eq:mainspec}
    w_{it} = \Phi^g_{K(i,t)} + \Lambda^g_{L(i)} + x'_{it} \beta + \varepsilon_{it}
\end{equation}

And the variance decomposition can be formally stated as:
\begin{align}
\underbrace{\text{Var}(w_{it})}_{\text{Log Hourly Wage Variance}} = & \underbrace{\text{ Var}(\Phi^g_{K(i,t)})}_{\text{Firm Class Variance}}+ \underbrace{\text{ Var}(\Lambda^g_{L(i)})}_{\text{Worker Type Variance}} + \underbrace{\text{ Var}(\varepsilon_{it})}_{\text{Residual Variance}} \\
&  + \underbrace{\text{Var}(x'_{it} \beta) + 2 \cdot \text{Cov}(\Lambda^g_{L(i)}, x'_{it} \beta) + 2 \cdot \text{Cov}(\Phi^g_{K(i,t)}, x'_{it} \beta)}_{\text{Time-Varying Covariates Variance and Associated Covariances}} \notag \\
& + \underbrace{2 \cdot \text{Cov}(\Phi^g_{K(i,t)}, \Lambda^g_{L(i)})}_{\text{Worker Type and Firm Class Covariance}} \notag
\end{align}
where $w_{it}$ represents the logarithmic hourly wage of worker $i$ in period $t$, decomposed as follows: $\Phi^g_{K(i,t)}$ represents the firm effects, where $K(i,t)$ is the assignment function. Here, $K$ can denote either firm classes or individual firm identifiers. The term $\Lambda^g_{L(i)}$ represents individual worker or worker type effects, where $L(i)$ can refer to either a worker type or an individual identifier. Time-varying covariates are represented by $x'_{it} \beta$, while gender heterogeneity is accounted for by the superscript $g$. Finally, $\varepsilon_{it}$ denotes the idiosyncratic error term.

The covariance between worker and firm effects is of particular interest in understanding the dynamics of assortative matching and its impact on the gender wage gap. This component can be potentially underestimated due to the limited mobility bias. I test three variance decompositions from three different settings. The C-AKM model, by clustering firms, potentially reduces the noise in firm effect estimates, allowing for a more stable estimation of the worker-firm covariance, however, it still relies on individual fixed effects. The CCK model under the bootstrapping correction provides a ``lower bound'' of these estimates, given that the bootstrapping correction is an approximation, not a total mitigation. The BLM model, employing both worker and firm clusters, provides a framework that effectively circumvents the limited mobility bias by coarsening job movements in the dataset at the cost of noisier results.

Table \ref{tab:variance_decomposition} presents the variance decomposition results. The first two columns show the results for the BLM decomposition, columns (3) and (4), the clustered firm AKM methodology. Finally, columns (5) and (6) represents the results for the classical AKM approach from CCK.

\begin{table}[htp!]
\centering
\begin{threeparttable}
\caption{Variance Decomposition of log hourly Wages}
\label{tab:variance_decomposition}
\small
\begin{tabular}{lcccccc}
\toprule
 & \multicolumn{2}{c}{BLM} & \multicolumn{2}{c}{Clustered AKM} & \multicolumn{2}{c}{CCK}\\
\cmidrule(lr){2-3} \cmidrule(lr){4-5} \cmidrule(lr){6-7}
  & \multicolumn{1}{c}{Female} & \multicolumn{1}{c}{Male} & \multicolumn{1}{c}{Female} & \multicolumn{1}{c}{Male} & \multicolumn{1}{c}{Female} & \multicolumn{1}{c}{Male}\\
& (1) & (2) & (3) & (4) & (5) & (6) \\
\midrule
\addlinespace
Var(log hourly wage) & 0.529 & 0.650 & 0.529 & 0.650 & 0.642 & 0.797 \\
\addlinespace
\midrule
\multicolumn{7}{l}{\textit{Panel A: \textbf{Variance Estimates}}} \\
\midrule
Firm effects & 0.046 & 0.066 & 0.020 & 0.023 & 0.042 & 0.044 \\
Worker effects & 0.367 & 0.357 & 0.377 & 0.467 & 0.479 & 0.626 \\
Time-varying covariates & 0.008 & 0.011 & 0.009 & 0.011  & 0.006 & 0.008 \\
Cov(Worker, Firm) & 0.137 & 0.122 & 0.114 & 0.136 & 0.106 & 0.108 \\
Residual & 0.062 & 0.094 & 0.010 & 0.012 & 0.009 & 0.011 \\
\addlinespace
\midrule
\multicolumn{7}{l}{\textit{Panel B: \textbf{Share of Total Variance (\%)}}} \\
\midrule
Firm effects & 8.8 & 10.2 & 3.7 & 3.6 & 6.6 & 5.5 \\
Worker effects & 55.2 & 54.9 & 71.2 & 71.9 & 74.6 & 78.5 \\
Time-varying covariates & 1.6 & 1.6 & 1.6 & 1.7 & 1.0 & 1.1\\
Cov(Worker, Firm) & 26.0 & 18.8 & 21.5 & 21.0  & 16.5 & 13.6\\
Residual & 11.6 & 14.5 & 1.9 & 1.8 & 1.4 & 1.3\\
\bottomrule
\end{tabular}
\begin{tablenotes}
\small
\item \textit{Notes:} \textsuperscript{1}AKM, BLM, and CCK stand for \textcite{abowd_high_1999, bonhomme_distributional_2019, card_bargaining_2016}, respectively. \textsuperscript{2}Clustered AKM represents firm clustered using a kmeans algorithm and individual worker identifiers as parameters. \textsuperscript{3}Panel A showcases the magnitude of estimated variance components, while Panel B presents these components as percentages of the total log hourly wage variance. \textsuperscript{4}Results are a weighted average based on the six biennial samples' number of observations.
\end{tablenotes}
\end{threeparttable}
\end{table}

The first row presents the total variance of log hourly wages by gender. Both the C-AKM and BLM models yield similar magnitudes, as they utilize the full set of worker observations. In contrast, the CCK model relies on the connected set of firms through job movers, which tends to overrepresent larger firms, resulting in higher wage variance estimates.

Male firm effect contribution to the wage variance ranges from 5 percent in the male sample under CCK, to 10 percent under BLM. On the other hand, female firm effect contribution ranges from 3.6 in the clustered AKM model, to 8.8 percent under BLM.

Although AKM models attribute the largest portion of wage variance to worker effects (over 70 percent) for both genders, in the BLM approach worker effects account for a smaller, though still significant, share of wage variance: 41.3 percent for women and 32.1 percent for men. 

The reduction in worker effect contribution can be explained by the larger worker-firm covariance contribution, at 26.0 and 18.8 percent of the total variance for women and men. Compared to the ``AKM'' model, the BLM and the C-AKM model were more effective in capturing assortative matching effects.

Worker-firm covariance is consistently higher for women across all specifications, suggesting that assortative matching is indeed a meaningful contributor to wage dispersion in the labor market, especially for female workers. My results are particularly relevant in the context of recent discussions on the rise of assortativity in labor markets, as highlighted by \textcite{song_firming_2019}, who documented the increasing trend of assortative matching in the United States. If men are systematically more likely to find high-paying matches in the labor market, while women are concentrated in lower-paying positions, this could exacerbate the gender wage gap.

\section{Conclusion} \label{sec:conclusion}

Additive separable models are unable to capture labor market interactions that generate wages based on comparative advantage, where the match between firms and workers results in compensation that exceeds (or are less than) the simple sum of individual worker and firm contributions.

In this paper, I deviate from linear additive models. I use a linked employer-employee dataset covering all firms and workers from São Paulo, Brazil (2010-2017), to apply the two-sided unobserved heterogeneity framework introduced by \textcite{bonhomme_distributional_2019}. This approach allows me to investigate the contribution of specific worker-firm interactions to the gender wage gap, assuming each interaction generates wages drawn from log-normal distributions. This method allows me to capture the complementarity effects arising from particular worker-firm assortative matching.

Employing Monte Carlo Simulations, I propose a novel decomposition of the gender wage gap into three components. Following \textcite{card_bargaining_2016}, the sorting component, representing labor market allocations, and bargaining component, representing differences in negotiation of equally productive workers under the same firm. The third component, the complementarity component, is a special case of ``bargaining'', however, under these matches wage levels do not correspond to the predicted from additive separable model.

I find a positive relationship between human capital and these complementarity effects. They are more pronounced for male workers compared to female workers, accounting for approximately 17 percent of the overall gender wage gap. This contribution go as far as a third of the gender wage gap for individuals in leadership positions. Controlling for occupations that generally require lower levels of human capital, such as occupations related to the hospitality sector, yielded negligible results.

I also find that these interactions are more present at the tails of the wage distribution where larger firms operate and wage dispersion is higher. For firms larger than 50 employees, about a fourth of the gender wage gap is explained by these complementarity effects.

My study demonstrates that firms and the broader labor market structure play a more significant role in shaping the gender wage gap than previously recognized. I demonstrate that firms not only provide varying wage premiums but also evaluate human capital and other worker characteristics in heterogeneous ways. This differential valuation of worker attributes across firms contributes substantially to gender-based wage disparities.

The pronounced complementarity effects observed in managerial positions suggest that policies aimed at increasing transparency in the labor market and promoting key leadership roles among female workers are essential for reducing gender wage disparities.

Future research could leverage on the increased availability of linked employer-employee data and computational power. They could extend the analysis by providing a dynamic framework and exploring how worker-firm interactions evolve over time in response to earnings shocks and their influence on mobility decisions. Incorporating collective bargaining data would further enhance our understanding of the non-monetary factors that shape gender-specific sorting patterns. Expanding this methodological approach to different countries could provide valuable cross-national insights into the extent to which gender wage gaps are driven by universal factors or are shaped by specific institutional and cultural contexts.
\clearpage
\singlespacing
\printbibliography
\doublespacing

\clearpage

\begin{appendices}
\counterwithin{figure}{section}
\counterwithin{table}{section}

\section{Cluster Choice Analysis} \label{sec:clusterchoiceanalysis}
A common drawback of clustering methods is the optimal number of clusters. To increase the robustness of my analysis, I employ the gap statistic method, a widely used technique in cluster analysis \parencite{tibshirani_estimating_2001}, to explore the within variance of clusters and therefore, choose the right number. 

The gap statistic compares the total within-cluster variation for different values of $k$ with their expected values under a null reference distribution of the data. I calculate it through the following steps:

\begin{enumerate}
    \item For each number of clusters $K$, I compute the within-cluster sum of squares $W_k = \sum_{k=1}^K \sum_{i \in k} (w_i - \bar{w}_k)^2$ where $k$ is given cluster, $w_i$ is the observed worker $i$ log-weekly wage, $\bar{w}_k$ the empirical mean of cluster $k$ (centroid).
    
    \item I generate $B$ reference datasets by sampling uniformly from the range of my observed data. For each reference dataset, I compute $W_{kb}$, the within-cluster sum of squares when clustering the reference data into $k$ clusters.

    \item I then compute the gap statistic as:
    \[Gap(k) = \frac{1}{B} \sum_{b=1}^B \log(W_{kb}) - \log(W_k)\]
    
    \item I use the bootstrapped standard deviation as the standard error across the $B$ reference datasets. I implement this procedure with $B=500$ to ensure stable estimates.
    
    \item Finally, I choose the optimal number of clusters as the smallest $k$ such that:
    \[Gap(k) \geq Gap(k+1) - s_{k+1}\]
    where $s_{k+1}$ is the standard deviation for $k+1$ clusters.
\end{enumerate}

Figure \ref{fig:gap_stat} presents the gap statistic values for different numbers of clusters, ranging from 4 to 25. The green dashed line at 10 clusters represents what my analysis suggests as the optimal number of clusters based on the gap statistic. Due to the large number of observations, standard errors were negligible, barely noticeable in the plot. Nevertheless, there is a notable elbow in the curve, indicating diminishing returns to increasing the number of clusters beyond the value of 10, implying that the benefit of additional clusters in explaining data variability becomes less substantial.

\begin{figure}[htb!]
    \centering
    \includegraphics[width=\textwidth]{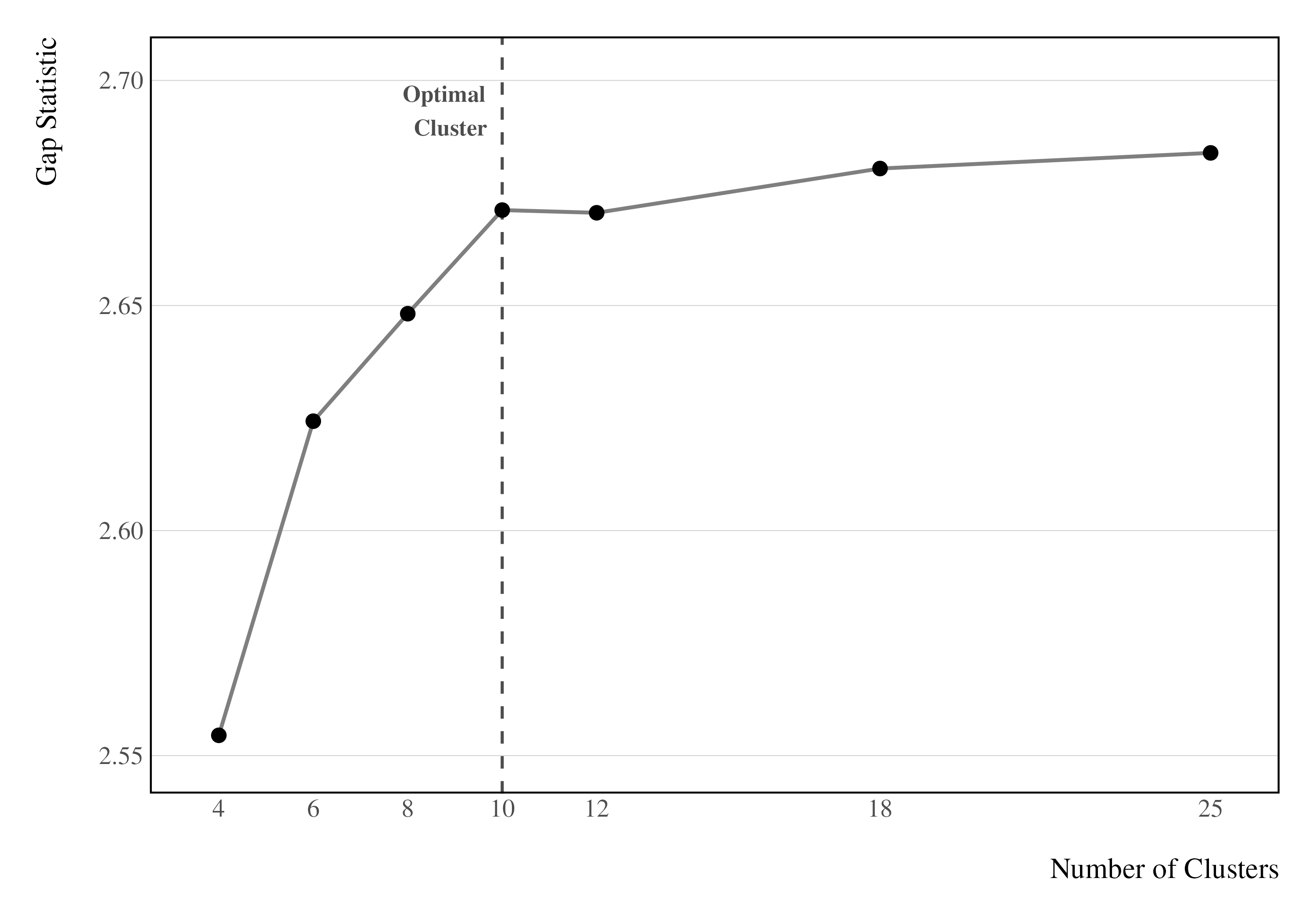}
    \caption{Point Estimate Gap Statistic by Number of Clusters}
    \caption*{\footnotesize \textit{Note:} \textsuperscript{1} Vertical line indicates the number of clusters of choice (10). \textsuperscript{2} Firm classes estimated by a k-means clustering algorithm using as measurement their empirical cumulative distribution function supported by the ventiles of the population, over six biennials (from 2010-12 to 2015-17). \textsuperscript{3} Bootstrapped standard errors were negligible due to the number of observations, with $p < 0.01$.}
    \label{fig:gap_stat}
\end{figure}

\clearpage
\section{Worker Mobility in Firm Clusters}

In this subsection, I explore the exogenous mobility assumption of the Gaussian mixture model. This assumption states that the movement of workers should be related to worker types and firm classes, but not directly on earnings. Therefore, the expected wage on unobservables should be zero for job movers. 

To test this assumption, first I observe job movers within and across clusters. Movements are considered as long as for each gender workers are changing firms from the first period to the second period. I separate these movements into three categories. Upward movements represent when the worker moves from a lower cluster to an upper cluster. Downward is otherwise. Lateral movements are within cluster job changes.

Figure \ref{fig:exomobil} is constructed by first running a regression following Equation \ref{eq:new_akm}. Than I plot the difference in residuals for every transfer, gender, and data sample cell, discriminated by the movement type. Each dot represents a transfer cell observed in the labor market. The size of the dot indicates how common this particular transfer is.

The figure serves two purposes. First, it addresses obvious trends in job changes, which could indicate that unobservable factors not captured in my model are influencing mobility decisions. A lack of symmetry in the figure would suggest that certain labor market transitions are driven by such unobservables. Second, and equally important, I differentiate these movements by gender to identify any discrepancies that may be endogenous to my model but related to gender differences.

The symmetry plot provides robust evidence supporting the exogenous mobility assumption, which is fundamental to the proper identification of my model. It shows that every movement type possess examples of positive and negative difference in residuals, strongly strongly indicating that job transitions in the labor market are primarily governed by stable firm wage policies and worker characteristics, rather than by time-varying, unobserved factors correlated with wages. Last but not least, the analysis also reveals no apparent gender-specific patterns that would undermine the exogenous mobility assumption for male or female worker samples separately.

\begin{figure}[htb!]
    \centering
    \includegraphics[width=0.8\textwidth]{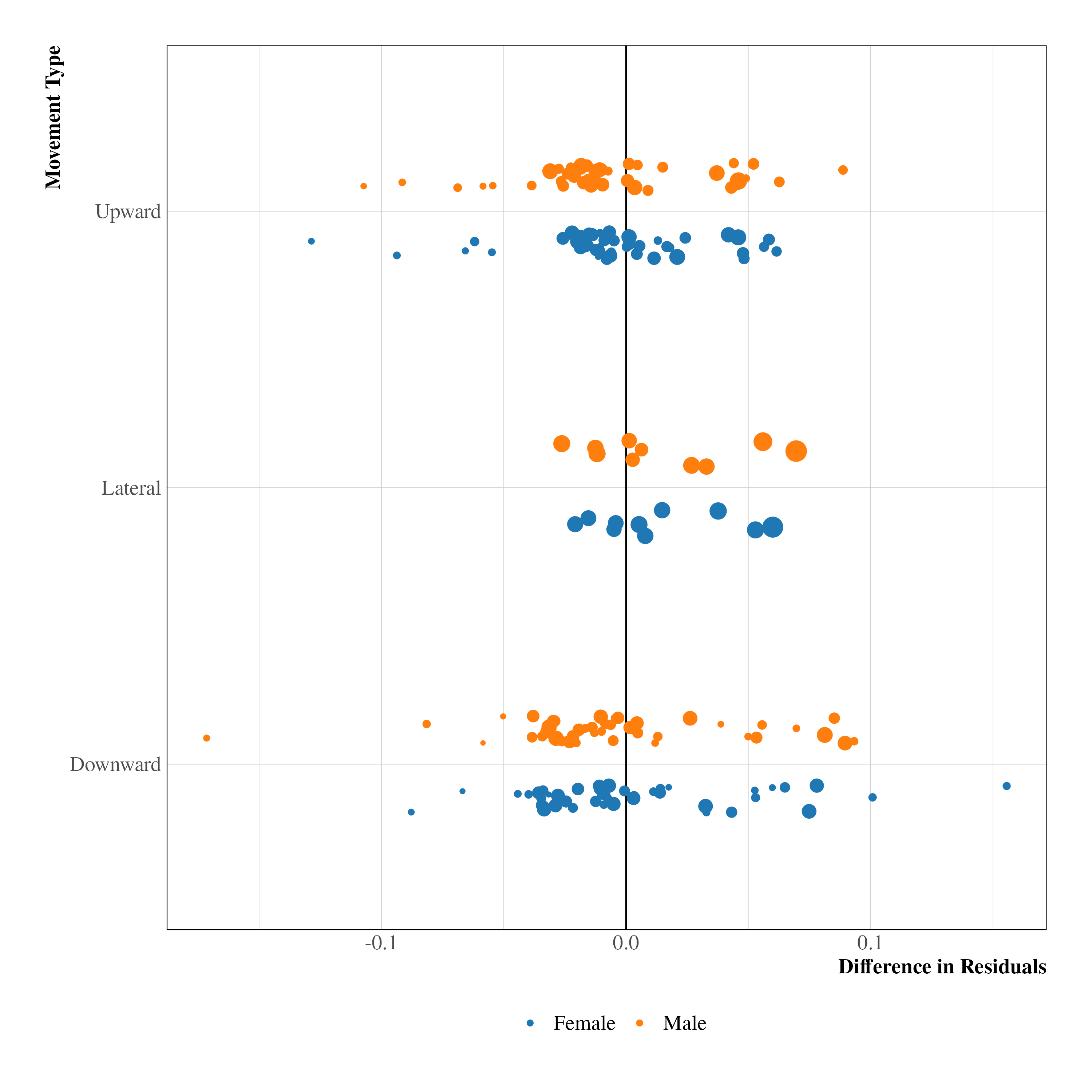}
    \caption{Symmetry plot of job movers' difference in residuals from first to second period.}
    \caption*{\small \textit{Note:} \textsuperscript{1}Dots represent across and within cluster-gender-sample cell movements. \textsuperscript{2}Dot size represents each cell's number of observed movements.}
    \label{fig:exomobil}
\end{figure}

\clearpage
\section{AKM and the Limited Mobility Bias} \label{sec:akm_discussion}

Here I briefly explain the presence of bias in the AKM estimator of \textcite{abowd_high_1999}.

\subsection{The AKM Model}

The AKM is formally written as:

\begin{equation} \label{eq:akm}
    w_{it} = X'_{it}\beta + \alpha_i + \phi_{J(i,t)} + \varepsilon_{it}
\end{equation}
where $w_{it}$ are the log earnings of worker $i$ in time $t$, $X'_{it}\beta$ are exogenous covariates such as age or time period, $\alpha_i$ is the unobserved worker heterogeneity, $J(i,t)$ is an assignment function representing the firm where $i$ works at $t$, meaning $\phi_{J(i,t)}$ represents the unobserved firm heterogeneity, and $\varepsilon_{it}$ is the error term.

Following \textcite{bonhomme_how_2023}, assume $N$ is the number of workers, $J$ the number of firms. For convenience, assume $T = 2$ is the number of time periods. The following assumption must hold:

\begin{equation}
    \mathbb{E}[\varepsilon_{it} | X_{11}, \ldots, X_{NT}, j(1,1), \ldots, j(N, T), \alpha_1, \ldots, \alpha_N, \phi_1, \ldots, \phi_J] = 0
\end{equation}

It is possible, without loss of generality, to rewrite Equation \ref{eq:akm} partialing out $X\beta$ and in vector form. Still following \textcite{bonhomme_how_2023}, I have:

\begin{equation}
    W = A\gamma + \varepsilon
\end{equation}
without loss of generality, assume $W$ is subtracted from $X\beta$ and $A$ represents the column-space of  worker and firm identifiers.

\subsection{Connected Set} \label{sec:dualset} 

In matched employer-employee data, the matrix $AA'$ is typically singular, necessitating an additional data cleaning step to ensure a sample of workers and firms that renders $AA'$ non-singular. This step is crucial for the identification of firm and worker effects in additive wage models. For instance, \textcite{card_bargaining_2016} estimates gender-specific firm wage effects by isolating the largest dual connected set from the main sample.

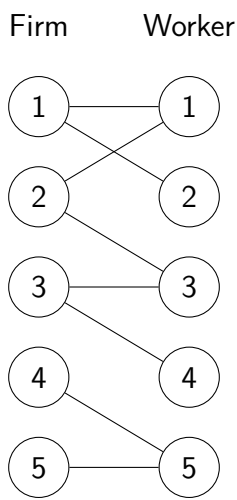
\begin{figure}[htb!]
    \centering
    
    \begin{tikzpicture}
    \begin{scope}[xshift=6cm]
        \foreach \y [count=\i] in {1,...,5}
        {
            \node[draw, circle, minimum size=0.8cm, font=\sffamily] (p\y) at (2, -\i*1.2) {\y};
            \node[draw, circle, minimum size=0.8cm, font=\sffamily] (f\y) at (0, -\i*1.2) {\y};
        }
        
        \draw (f1) -- (p1);
        \draw (f1) -- (p2);
        \draw (f2) -- (p1);
        \draw (f2) -- (p3);
        \draw (f3) -- (p3);
        \draw (f3) -- (p4);
        \draw (f4) -- (p5);
        \draw (f5) -- (p5);
    
        \node[above, font=\sffamily] at (0, -0.4) {Firm};
        \node[above, font=\sffamily] at (2, -0.4) {Worker};
    \end{scope}
    \end{tikzpicture}
    
    \caption{Firm-worker pairs. Firms 1, 2, and 3 are in the largest connected set through workers 1, 2, 3, and 4. Firms 4 and 5 are connected through worker 5 but disjoint from the rest.}
    \label{fig:connected_set_akm}
\end{figure}

This concept of connectivity in the labor market is illustrated in Figure \ref{fig:connected_set_akm}, which provides a simplified representation of worker movements across firms\footnote{For a comprehensive explanation of connected sets and their extraction from data, see \textcite{abowd_computing_2002}.}. It depicts a labor market with five firms and five workers over two time periods. Worker 1 moves from Firm 1 to Firm 2, thereby connecting these two firms. Firm 2 is further connected to Firm 3 through the movement of Worker 3. Firms 4 and 5, while isolated from the first three firms, are connected to each other through Worker 5.

In studies employing matched employer-employee data under additive separability models relying on firm and worker identifiers, researchers typically sample the largest connected set of firms. However, when investigating worker heterogeneity between genders, it is necessary to use the dual connected set, defined as the intersection of the largest connected sets for male and female samples. This approach ensures that firm effects are identified and comparable between both gender groups.

\subsection{Limited Mobility Bias}

The limited mobility bias is a significant concern in the estimation of firm effects, arising from the relative scarcity of job movers in the labor market \parencite{andrews_high_2008, bonhomme_how_2023}. While this bias does not directly appear in the firm effects estimates from Equation \ref{eq:akm}, it manifests in the variance analyses that are commonly employed in the literature to decompose wage inequality.

The sample variances or covariances of interest can be expressed in matrix notation as:
\begin{equation}
\sigma^2 = \gamma' Q \gamma
\end{equation}
where $Q$ is a matrix that depends on the design matrix $A$.

\textcite{andrews_high_2008} demonstrated the existence of this bias by decomposing the estimator $\hat{\sigma}^2$:
\begin{equation}
\mathbb{E}[\hat{\sigma}^2 | A] = \gamma' Q \gamma + \text{trace}(A(A'A)^{-1}Q(A'A)^{-1}A'\mathbb{V}[\varepsilon|A]) = \sigma^2 + \xi
\end{equation}
where $\xi$ represents the bias term.

Directly correcting for this bias is computationally challenging, as it requires inverting a large matrix, often of dimensions in the hundreds of thousands for firms and millions for workers in typical matched employer-employee datasets. \textcite{bonhomme_how_2023} have shown that common approximations used in the literature may be insufficient, particularly when relying on fixed effects derived from identifiers. This insufficiency stems from the fact that these approximations often fail to fully account for the complex network structure of worker mobility across firms.

To address these challenges, \textcite{bonhomme_distributional_2019} proposed a dimension reduction framework. This approach groups firms and workers into a smaller number of classes, thereby increasing the relative probability of observed job changes between groups. While this method effectively mitigates the limited mobility bias, it comes at the cost of imposing additional structure on the estimation model.

\subsection{Firm Size and Connectivity}

Not only does the largest connected set requirement impose a bias due to the rarity of mobility, but it also alters the overall wage distribution of the data. This alteration stems from the fact that the largest connected set tends to include larger firms more frequently than smaller ones. If larger firms differ significantly in their payment schedules and behavior compared to their smaller counterparts, the results derived from such analyses may have limited external validity.

In this section, I provide a formal proof that larger firms are more likely to be included in a connected set of a matched employer-employee dataset. I begin by defining the probability of worker mobility between firms and then demonstrate how this probability scales with firm size.

Without loss of generality, assume $T = 2$. Let $\mathcal{J} = \{1, \ldots, J\}$ be the set of all firms in the economy, and let $N_j$ denote the number of workers in firm $j$. Define $p_{jj'}$ as the probability that a given worker moves from firm $j$ to another firm $j'$. For simplicity, assume that this probability is the same across all workers in the labor market.

\begin{definition}[Connected Set]
A connected set $\mathcal{C} \subseteq \mathcal{F}$ is a subset of firms such that for any two firms $j, j' \in \mathcal{C}$, there exists a sequence of firms $j_1, \ldots, j_C \in \mathcal{C}$ with $j_1 = j$, $j_C = j'$ and for each connection $c \in \{1, \ldots, C-1\}$, there is at least one worker who has been employed in both $i_c$ and $i_{c+1}$.
\end{definition}

\begin{lemma}[Probability of Observed Mobility] \label{lemma1}
The probability of observing at least one worker moving from firm $j$ to firm $j'$ is:
\begin{equation}
P(j \rightarrow j') = 1 - (1 - p_{jj'})^{N_j}
\end{equation}    
\end{lemma}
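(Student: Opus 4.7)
The plan is to prove the lemma using a standard complement-and-independence argument. Let $X_i$ for $i = 1, \ldots, N_j$ denote the indicator random variable that worker $i$ in firm $j$ moves to firm $j'$ between periods $1$ and $2$. Under the setup preceding the lemma, each worker in firm $j$ independently moves to $j'$ with probability $p_{jj'}$, so $X_i \sim \text{Bernoulli}(p_{jj'})$. The event $\{j \rightarrow j'\}$ (``at least one worker from $j$ moves to $j'$'') is precisely $\{\sum_{i=1}^{N_j} X_i \geq 1\}$.

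The first step is to pass to the complement: $P(j \rightarrow j') = 1 - P(X_i = 0 \text{ for all } i = 1, \ldots, N_j)$. The second step is to invoke the independence of workers' mobility decisions (implicit in the ``same probability across all workers'' assumption together with the standard independence across individuals), which factorizes the joint probability as $\prod_{i=1}^{N_j} P(X_i = 0) = (1 - p_{jj'})^{N_j}$. Substituting back yields the claimed formula.

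There is no real obstacle here, since the result is a direct application of the complement rule for independent Bernoulli trials. The only nontrivial aspect is making explicit the independence assumption across workers within firm $j$, which is natural given that $p_{jj'}$ is assumed constant across workers. In the writeup I would state the independence assumption explicitly as a modeling primitive, then execute the two-line computation above.
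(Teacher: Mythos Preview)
Your proposal is correct and follows essentially the same argument as the paper: pass to the complement event that no worker moves, invoke independence across the $N_j$ workers to factorize the joint probability as $(1-p_{jj'})^{N_j}$, and subtract from one. The only difference is cosmetic---you introduce explicit Bernoulli indicators whereas the paper states the steps in words.
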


\begin{proof}
The probability of a single worker not moving from $j$ to $j'$ is $(1 - p_{jj'})$. Assume that for all $N_j$ workers not to move, this must occur independently for each worker. Thus, the probability that no workers are moving is $(1 - p_{jj'})^{N_j}$, and the probability that at least one worker is moving is the complement of this event.
\end{proof}

\begin{theorem}
The probability of a firm being part of the connected set is increasing in firm size.
\end{theorem}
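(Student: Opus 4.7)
The plan is to show the contrapositive in probability terms: the likelihood that firm $j$ is \emph{not} connected to any other firm (i.e., is an isolated vertex in the worker-mobility graph) is strictly decreasing in $N_j$. Since being a member of the (largest) connected set $\mathcal{C}$ requires at least having some connection to another firm, this will deliver the monotonicity claim. The argument hinges entirely on Lemma \ref{lemma1} together with the independence assumption on worker mobility decisions baked into that lemma.

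First, I would write down the event ``firm $j$ is in a non-trivial component'' as the complement of ``firm $j$ shares no worker history with any other firm''. By Lemma \ref{lemma1}, the probability that no outgoing mobility from $j$ to a particular firm $j'$ is observed equals $(1-p_{jj'})^{N_j}$. Mobility decisions across destination firms are assumed independent across workers and across pairs, so the joint probability that firm $j$ has no observed outgoing transitions at all is
\begin{equation}
P(j \text{ sends no workers}) \;=\; \prod_{j' \neq j} (1-p_{jj'})^{N_j}.
\end{equation}
Inclusion of incoming transitions (workers arriving from other firms) can only further reduce the isolation probability, but they do not depend on $N_j$, so they can be absorbed into a multiplicative constant $C_j \in (0,1]$ that is independent of $N_j$. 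This yields $P(j \text{ isolated}) = C_j \cdot \prod_{j' \neq j} (1-p_{jj'})^{N_j}$.

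Next I would exhibit the monotonicity in $N_j$. Writing $P(j \text{ isolated}) = C_j \cdot \exp\!\bigl(N_j \sum_{j' \neq j} \log(1-p_{jj'})\bigr)$ and noting that $\log(1-p_{jj'}) < 0$ whenever $p_{jj'} > 0$, the exponent is strictly negative and linear in $N_j$, so the isolation probability is strictly decreasing in $N_j$. Consequently $P(j \in \mathcal{C}) \geq 1 - P(j \text{ isolated})$ is strictly increasing in $N_j$. For discrete firm sizes, the same conclusion follows by observing that passing from $N_j$ to $N_j + 1$ multiplies each factor $(1-p_{jj'})^{N_j}$ by $(1-p_{jj'}) < 1$.

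The main obstacle is conceptual rather than computational: the theorem refers to membership in ``the connected set'' (presumably the largest one), whereas the clean calculation above bounds only the probability of belonging to \emph{some} non-trivial component. I would address this by noting that in matched employer-employee panels with substantial inter-firm mobility, the largest component dominates (a standard giant-component phenomenon), so being non-isolated and being in the largest set coincide with high probability; alternatively, I would restate the theorem as monotonicity of $P(j \text{ is non-isolated})$, which is exactly what the calculation establishes, and remark that any refinement to ``the largest connected set'' only strengthens the size dependence since larger firms also have higher expected degree and are therefore more likely to sit inside the giant component.
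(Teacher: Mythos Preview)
Your proposal is correct and follows essentially the same route as the paper: both compute the isolation probability as $\prod_{j'\neq j}(1-p_{jj'})^{N_j}$ times a factor independent of $N_j$ coming from incoming moves, and then show this product is strictly decreasing in $N_j$ (the paper differentiates, you take logs---equivalent arguments). If anything, you are more careful than the paper, which silently identifies ``$j\in\mathcal{C}$'' with ``$j$ is non-isolated'' without the giant-component caveat you raise.
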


\begin{proof}
For firm $j$ to be part of the connected set, it must have at least one worker moving to or from another firm in the set. The probability of firm $j$ being connected is:

\begin{equation}
P(j \in \mathcal{C}) = 1 - \prod_{j' \neq j}(1 - P(j \rightarrow j')) \cdot \prod_{j'' \neq j} (1 - P(j'' \rightarrow j))
\end{equation}

Substituting the result from Lemma \ref{lemma1}:

\begin{equation}
P(j \in \mathcal{C}) = 1 - \prod_{j' \neq j}(1 - p_{jj'})^{N_j} \cdot \prod_{j'' \neq j}(1 - p_{j''j})^{N_{j''}}
\end{equation}
To show that this probability increases with firm size, take the derivative with respect to $N_j$:

Taking the derivative with respect to $N_j$:
\begin{align}
\frac{\partial P(j \in \mathcal{C})}{\partial N_j} &= -\left(\prod_{j' \neq j}(1 - p_{jj'})^{N_j} \cdot \prod_{j'' \neq j}(1 - p_{j''j})^{N_{j''}}\right) \
&\quad \cdot \sum_{j' \neq j} \log(1 - p_{jj'})
\end{align}

Since $0 < p_{jj'} < 1$, we have $\log(1 - p_{jj'}) < 0$, and thus $\frac{\partial P(j \in \mathcal{C})}{\partial N_j} > 0$.
\end{proof}

\begin{corollary}
As firm size approaches infinity, the probability of being in the connected set approaches 1:

\begin{equation}
    \lim_{N_j \to \infty} P(j \in \mathcal{C}) = 1  
\end{equation}
\end{corollary}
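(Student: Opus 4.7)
The plan is to read off the result directly from the closed-form expression for $P(j \in \mathcal{C})$ derived in the preceding theorem, and show that the first of the two products vanishes as $N_j \to \infty$. Recall that
\begin{equation}
P(j \in \mathcal{C}) = 1 - \prod_{j' \neq j}(1 - p_{jj'})^{N_j} \cdot \prod_{j'' \neq j}(1 - p_{j''j})^{N_{j''}}.
\end{equation}
The second product does not depend on $N_j$, so it is a fixed constant $C \in [0,1]$ once the sizes of the other firms are held fixed. Hence it suffices to show that the first product tends to zero.

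First I would isolate the $N_j$-dependent factor and apply logarithms to convert the product into a sum. Writing $\log \prod_{j' \neq j}(1 - p_{jj'})^{N_j} = N_j \sum_{j' \neq j} \log(1 - p_{jj'})$, I would invoke the standing assumption from Lemma 1 that each $p_{jj'} \in (0,1)$, which gives $\log(1 - p_{jj'}) < 0$ for every $j' \neq j$. Since $\mathcal{J}$ is finite and there is at least one firm besides $j$, the sum $S \equiv \sum_{j' \neq j} \log(1 - p_{jj'})$ is a strictly negative constant. Therefore $N_j \cdot S \to -\infty$ as $N_j \to \infty$, which implies $\prod_{j' \neq j}(1 - p_{jj'})^{N_j} \to 0$.

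Putting the pieces together, $\prod_{j' \neq j}(1 - p_{jj'})^{N_j} \cdot C \to 0$, so $P(j \in \mathcal{C}) \to 1$, as claimed.

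The only real obstacle is scoping the regularity condition on $p_{jj'}$: the result relies on there being at least one other firm $j'$ with $p_{jj'} > 0$, since otherwise firm $j$ is structurally isolated and no amount of growth in $N_j$ generates mobility. This is already implicit in the $0 < p_{jj'} < 1$ condition used in the theorem, so no additional assumption is needed beyond what has been invoked earlier; I would simply flag it when writing the proof so that the reader understands the limit is taken in a non-degenerate labor market.
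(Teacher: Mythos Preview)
Your proposal is correct and follows essentially the same approach as the paper: both argue that since $0 < p_{jj'} < 1$ each factor $(1 - p_{jj'})^{N_j} \to 0$, hence the first product vanishes and $P(j \in \mathcal{C}) \to 1$. Your version is simply more explicit---you separate out the $N_j$-independent product as a constant and pass through logarithms---whereas the paper states the limit of the factors directly in one line.
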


\begin{proof}
As $N_j \to \infty$, $(1 - p_{jj'})^{N_j} \to 0$ since $0 < p_{jj'} < 1$. Therefore, the product $\prod_{j \neq j'} (1 - p_{jj'})^{N_j} \to 0$, and consequently, $P(j \in \mathcal{C}) \to 1$.
\end{proof}

\paragraph{Other firms approaching infinity}
Another consequence of this proof is when the size of any other firm $N_{j_0}$ approaches infinity while $N_j$ remains finite, $P(j \in \mathcal{C})$ also approaches 1, but at a slower rate. This is because:
\begin{equation}
\lim_{N_{j_0} \to \infty} P(j \in \mathcal{C}) = 1 - \prod_{j' \neq j}(1 - p_{jj'})^{N_j} \cdot 0 \cdot \prod_{j'' \neq j, j'' \neq j_0}(1 - p_{jj''})^{N_j''} = 1
\end{equation}

However, this convergence is slower than when $N_j \to \infty$ because only one term in the product approaches zero, rather than all terms involving $N_j$.

\subsubsection{Empirical Evidence}

The theoretical framework is substantiated by empirical evidence presented in Figures \ref{fig:dcs_number_workers} and \ref{fig:dcs_log_wage}, which illustrate the differences between the full sample and the largest dual connected set (LDCS) in terms of firm size and wage distributions.

Figure \ref{fig:dcs_number_workers} reveals a stark contrast in the distribution of workforce size between the full sample and the LDCS. The LDCS exhibits a symmetrical distribution shifted significantly to the right, with a mean firm size of approximately 194 workers, compared to the full sample's mean of 27 workers. This rightward shift is accompanied by increased variability, with the standard deviation in the LDCS being almost four times higher than in the original sample. The median firm size in the LDCS is also notably higher, underscoring the overrepresentation of larger firms in the connected set.

The wage distribution, as depicted in Figure \ref{fig:dcs_log_wage}, further emphasizes the discrepancies between the full sample and the LDCS. The mean log wage in the full sample is 1.87, with a standard deviation of 0.446, while the LDCS shows a substantially higher mean log wage of 2.32 and a larger standard deviation of 0.656. This upward shift in both moments indicates a clear upward bias in wage levels within the connected set. The increased standard deviation in the LDCS also points to greater wage dispersion among the firms included in this subset.

These findings have some implications for the estimation and interpretation of gender wage gaps.  LDCS exhibits a larger gender wage gap ($\delta$ = -0.317) compared to the entire sample ($\delta$ = -0.237), suggesting that studies using the connected set may overestimate the overall wage disparity. This overestimation likely stems from the LDCS that captures wage dynamics primarily in larger, more established firms where gender wage differences might be more pronounced. The exclusion of smaller, potentially lower-paying firms that do not meet the connectivity requirements for AKM-style fixed effects estimation contributes to this bias.

Researchers should exercise caution when generalizing results from the connected set to the broader labor market. The LDCS, while providing the necessary conditions for certain econometric techniques, may not fully represent the wage structures and gender dynamics present in smaller or less connected firms. This limitation is particularly important when studying labor markets with a significant proportion of small enterprises or sectors with limited inter-firm mobility.

\begin{figure}
    \centering
    \includegraphics[width = \textwidth]{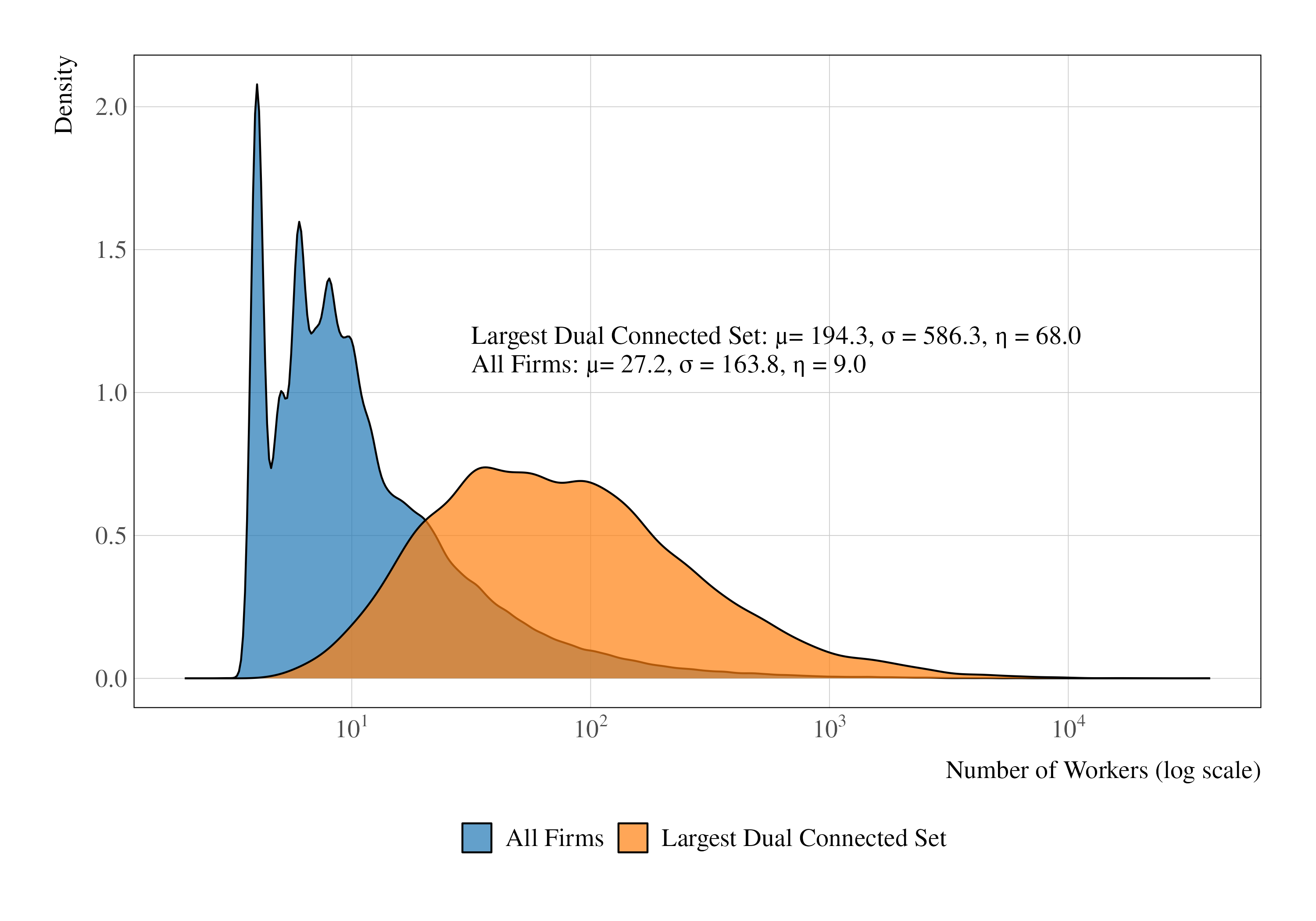}
    \caption{All Firms and The LDCS Number of Workers Distributions}
    \caption*{\footnotesize \textit{Note:} \textsuperscript{1}Distributions generated from the six biennial samples, using the full firm set, and the largest dual connected set of firms. \textsuperscript{2}$\mu$ is the mean of the distribution, $\sigma$ represents the standard deviation, $\eta$ represents the median number of workers per firm.}
    \label{fig:dcs_number_workers}
\end{figure}

\begin{figure}
    \centering
    \includegraphics[width = \textwidth]{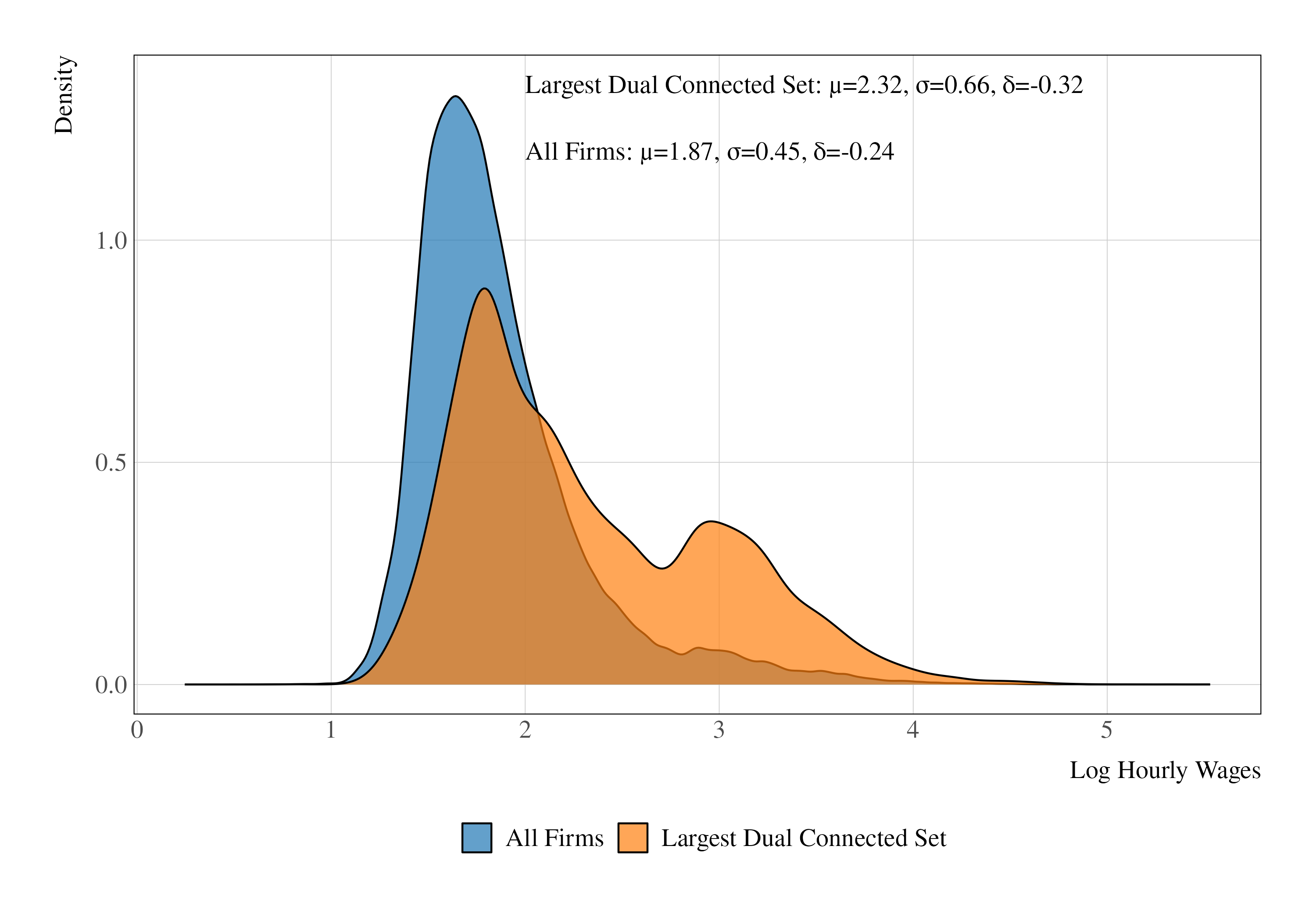}
    \caption{All Firms and The LDCS Log-Weekly Wage Distributions}
    \caption*{\footnotesize \textit{Note:} \textsuperscript{1}Distributions generated from the six biennial samples, using the full firm set, and the largest dual connected set of firms. \textsuperscript{2}$\mu$ is the mean of the distribution, $\sigma$ represents the standard deviation, $\delta$ represents the average female-male wage gap.}
    \label{fig:dcs_log_wage}
\end{figure}

\clearpage
\section{Applying Estimated Clusters in a Linear Framework} \label{sec:CAKM}

Here I provide two exercises to show the mixture model can be used as a plugin estimator of unobserved heterogeneity in a linear regression setting.

First, I employ the estimated firm clusters to estimate firm heterogeneity contribution to the gender wage gap in a classical AKM-KOB from \textcite{card_bargaining_2016}. The novelty is that I keep social identifiers as the worker plugin estimators, however, I leverage the k-means clusters of firms to avoid data trimming.

Second, I provide a variance decomposition analysis, comparing classical AKM decomposition with the clustered AKM provided previously and the BLM decomposition, where I replace worker social identifiers with their respective mixture membership.

\subsection{Estimating Firm Contribution to the Gender Wage Gap}

Here, I employ the estimated firm clusters in an AKM-KOB classical linear framework to examine their contribution to the gender wage gap. The advantage of this approach is that by grouping firms with similar wage structures, I am not required to filter the dual connected set from the data.

The model I employ has strong similarities to the typical AKM framework. A key distinction lies in the treatment of firm heterogeneity. Instead of utilizing the standard firm assignment function $J(i,t)$, which maps each worker-year observation to a specific firm, I introduce cluster assignment function $K(i,t)$. This function maps worker-year observations to firm clusters, thereby reducing the dimensionality of the firm entities.

Formally, the wage equation can be expressed as:
\begin{equation} \label{eq:new_akm}
{w}_{it} = \alpha_i + \psi^g_{K(i,t)} + {X^g}_{it}'\beta + \varepsilon_{it}
\end{equation}
where $w_{it}$ is the log wage of worker $i$ in year $t$, $\alpha_i$ is the worker fixed effect, $\psi_{K(i,t)}$ is the effect of the firm cluster to which worker $i$'s employer belongs in year $t$, $X_{it}$ is a vector of time-varying covariates, and $\varepsilon_{it}$ is the error term. Superscript $g$ indicates that I apply the regression to both the male and female samples.

In particular, the key parameter of interest in this analysis is the difference in firm endowment between male and female workers, derived from the KOB decomposition. This parameter represents the differential distribution of workers across firm clusters by gender, serving as a preliminary measure of gender-specific assortative matching patterns in the labor market, although focusing solely on firm heterogeneity.

Specifically, I adapt the KOB decomposition to the context of firm cluster effects, formalizing it as:

\begin{align}
&\underbrace{E[\psi^f_{K(i,t)} | female] - E[\psi^m_{K(i,t)} | male]}_{\substack{\text{Firm Cluster} \\ \text{Contribution to the} \\ \text{Gender Wage Gap}}} = \\ & \text{Bargaining: }\underbrace{\frac{1}{2}\sum_{x\in{F,M}}(E[\psi^F_{K(i,t)} - \psi^M_{K(i,t)} | g = x])}_{\substack{\text{Unexplained Portion} \\ \text{(Difference in Returns)}}} \\ 
&\text{Sorting: }+ \underbrace{\frac{1}{2}\sum_{x\in{F,M}}(E[\psi^x_{K(i,t)} | g = F] - E[\psi^x_{K(i,t)} | g = M])}_{\substack{\text{Explained Portion} \\ \text{(Difference in Distributions)}}}
\end{align}
where the left-hand side of the equation represents the total contribution of firm cluster effects to the gender wage gap. This contribution is decomposed into two components: the bargaining effect and the sorting effect.

The bargaining effect, following the terminology of \textcite{card_bargaining_2016}, captures the portion of the gap attributable to differences in the estimated firm premia between men and women, holding the distribution of firms constant. This effect is computed as the average of two counterfactuals: one using the observed female distribution of firms and another using the observed male distribution.

The sorting effect, conversely, measures the portion of the gap that arises from differences in the distribution of men and women across firm clusters, assuming gender-neutral firm effects. This is the difference in endowments of the Oaxaca decomposition. This effect is also computed as the average of two counterfactuals: one using the estimated male returns to firm clusters and another using the estimated female returns.

By averaging these counterfactuals for each component, as suggested by \textcite{casarico_what_2024}, I obtain robust estimates that account for potential sensitivity to the choice of reference group. Unless otherwise specified, the reported bargaining and sorting effects refer to these averaged estimates.

\subsubsection{Normalizing Firm Effects}

There is established practice in the AKM literature on gender wage gaps in using the hotel and restaurant industry as a reference\footnote{Examples are \textcite{cruz_effects_2022, casarico_what_2024}}. This sector is often chosen due to its typical low wage premia and high turnover rates, suggesting minimal rents \citep{card_bargaining_2016, coudin_family_2018}.

Although class 2 firms have a slightly higher proportion of hotels and restaurants (1 percent against 0.6 percent in class 1)\footnote{See Figure \ref{fig:hotel_fe} for the estimated premia and the proportion of hotels and restaurants per firm class}, I argue that class 1 is the most appropriate reference for several reasons. First, conditional average wages on firm fluster are fairly linear, with class 1 exhibiting the lowest average wage premium in my pooled regression, aligning with the theoretical expectation that the reference group should represent firms offering minimal rents. When employing the AKM regression with class 1 as the reference, resulting fixed effects preserve the linear behavior, with no class exhibiting negative estimates.

\subsubsection{Oaxaca Decomposition of Firm Cluster Effects} \label{sec:results}

The main findings derived from the estimation of Equation \ref{eq:new_akm} are summarized in Table \ref{tab:main_results}. These results represent weighted average firm cluster effects estimates obtained from the six separate biennial samples.

\begin{table}[htp!]
\centering
\begin{threeparttable}
\caption{Firm Decomposition of the Gender Wage Gap: Overall and by Subgroups}
\label{tab:main_results}
\begin{tabular}{lcccc}
\toprule
& & \multicolumn{3}{c}{Contribution to Gender Wage Gap} \\
\cmidrule(lr){3-5}
Group & Total Gap & Firm       & Sorting            & Bargaining  \\
      &           & Components & Components         & Components \\ 
& (1) & (2) & (3) & (4) \\
\midrule
All & \num{-0.237} & \num{-0.033} & \num{-0.022} & \num{-0.01}\\
 &                 & (\num{0.14}) & (\num{0.09}) & (\num{0.04})\\
\addlinespace
\multicolumn{5}{l}{\textit{By age group:}} \\
Up to age 30 & \num{-0.092} & \num{-0.014} & \num{-0.009} & \num{-0.005}\\
 &  & (\num{0.16}) & (\num{0.1}) & (\num{0.06})\\
Ages 31-50 & \num{-0.305} & \num{-0.041} & \num{-0.029} & \num{-0.012}\\
 &  & (\num{0.14}) & (\num{0.09}) & (\num{0.04})\\
Over age 50 & \num{-0.326} & \num{-0.076} & \num{-0.028} & \num{-0.048}\\
 &  & (\num{0.23}) & (\num{0.09}) & (\num{0.15})\\
\addlinespace
\multicolumn{5}{l}{\textit{By education group:}} \\
No High school & \num{-0.297} & \num{-0.046} & \num{-0.035} & \num{-0.011}\\
 &  & (\num{0.15}) & (\num{0.12}) & (\num{0.04})\\
High school & \num{-0.233} & \num{-0.021} & \num{-0.025} & \num{0.004}\\
 &  & (\num{0.09}) & (\num{0.11}) & (\num{-0.02})\\
College & \num{-0.348} & \num{-0.075} & \num{-0.03} & \num{-0.046}\\
 &  & (\num{0.22}) & (\num{0.08}) & (\num{0.13})\\
\bottomrule
\end{tabular}
\begin{tablenotes}
\small
\item \textit{Notes:} \textsuperscript{1}This table presents the decomposition of the gender wage gap into components attributable to clustered firm-specific factors, using Equation \ref{eq:new_akm}. Column (1) shows the total female-male wage gap in means. Column (2) presents the total contribution of firm-specific factors. Columns (3) and (4) further decompose the firm premium contribution into a sorting (explained) and a bargaining (unexplained) components, respectively. \textsuperscript{2}Numbers in parenthesis represent the fraction of the overall gender wage gap that is attributed to the source described in column heading. \textsuperscript{3}Results are an weighted average of the six biennial samples.
\end{tablenotes}
\end{threeparttable}
\end{table}

First, considering the overall sample, there is a substantial gender wage gap of 23.7 log points. Firm-specific factors, in Column (2), are estimated around 3.1 log points. As explained by \textcite{card_bargaining_2016}, this component can be interpreted as the difference in rent payment relative to firm class 1. The component accounts for 13 percent of this total gap.

This contribution can be further decomposed into the sorting component, the difference in male and female worker distributions considering a gender-neutral relative rent. In this case, the total difference is evaluated at 2.1 log points, corresponding to approximately 9 percent of the total gender wage gap. Likewise, the bargaining channel is the average difference in the estimated premium, assuming both genders possess the same firm share. This channel is measured at 1 log point, about 4 percent of the overall gender wage gap.

Following the literature, the lower rows of Table \ref{tab:main_results} show that the gender wage gap increases dramatically with age, with firms playing a role in this increase, since for individuals older than 50, 23 percent of the 32.6 wage gap is due to estimated firm class effects. Notably, while the sorting component remains relatively stable across age groups (ranging from 0.8 to 2.7 log points), the Bargaining component increases substantially, from 0.7 log points for the youngest group to 4.7 log points for the oldest. This pattern suggests that as workers age, differences in how firms compensate men and women in similar positions become increasingly important in explaining the gender wage gap.

The analysis by education level reveals a pattern that aligns with findings from the U.S. labor markets, given the wage gap is more prominent among college-educated workers (34.8 log points), albeit the smallest among those with a high school education (23.3 log points). Workers without a high school diploma fall in between, with a gap of 29.7 log points. Interestingly, the contribution of firm-specific factors to the gap follows a similar pattern, being the highest for college-educated workers (5.6 log points) and the lowest for high school graduates (2.8 log points).

For workers without a high school education and those with a high school diploma, the sorting component dominates the bargaining component. This suggests that for these groups, the allocation of women across firms plays a more significant role in the gender wage gap than within-firm differences in compensation. However, the picture changes for college-educated workers. In this group, the bargaining component (3.0 log points) marginally exceeds the sorting component (2.6 log points), indicating that within-firm differences in compensation between men and women become more pronounced as individuals accumulate human capital. Individuals with higher levels of human capital tend to be allocated to more specialized occupations, often within larger firms. However, women may be concentrated at lower paying occupations compared to male counterparts, resulting in workers with greater human capital accumulation securing positions that ultimately translates into heterogeneous bargaining effects within large firms.

\subsubsection{Other Cluster Choices and Classical AKM}

Table \ref{tab:cluster_robustness} presents the robustness analysis of the model, allowing different cluster choices in the clustered AKM (C-AKM), which is the empirical specification of my study, and, for comparison, the classical AKM model.  The results span different levels of firm clustering (K = 4, 6, 8, and 10) in my grouped fixed effects approach, as well as the traditional AKM and the baseline clustering under the largest dual connected set.

Under the clustered AKM approach, trimming the data is not required. However, for traditional AKM settings, it is required to extract the largest dual connected set for correct identification. This is reflected by the larger total gap at 31.7 log points, contrasted with the full data 23.7 log points\footnote{I reserve the appendix for a comprehensive analysis of the AKM model and the dual connected set requirement. See Section \ref{sec:akm_discussion}.}.

There is a modest but consistent increase in firm components when moving from K = 4 to K = 6, increasing from 2.4 to 3.1 log points. This trend suggests that finer firm classifications capture additional nuances in firm-specific contributions to the gender wage gap. 

\subsubsection{Bargaining Sensitivity to Cluster Normalization}
The bargaining channel, however, is the most sensitive to variations in cluster choice, becoming the almost sole driving force of my robustness analysis, with the sorting component practically stable across results. However, it seems that bargaining estimates stabilize between K = 8 and K = 10, indicating that beyond a certain point, further granularity in firm classification yields diminishing returns in terms of explanatory power. Interestingly, $K = 10$ represents the optimal cluster choice in the gap statistics evaluation\footnote{See Section \ref{sec:clusterchoiceanalysis} for a more rigorous discussion on cluster choice.} \parencite{tibshirani_estimating_2001}. 

The bargaining channel sensitivity could be attributed to the normalization procedure. Slicing firms in the data based on wage distribution similarities may keep the sorting of male and female workers in the labor market almost intact, given that women are more concentrated in low-paying firms. However, it may potentially underestimate rents coming from firms grouped at the lowest class that are, in fact, different enough to be categorized separately in more granulated settings. This is confirmed by the striking different estimate coming from running a C-AKM on the LDCS sample. The largest dual connect set is overrepresented by larger firms, therefore, it is possible that only larger firms with strong positive rent sharing for men were kept in the sample, severely underestimating the male worker's firm component returns.

Therefore, if the researcher desires to commit to the linearity assumption of AKM leveraging from the benefits of clustering firms, the best practice is to employ several number of cluster choices to determine the most appropriate configuration. Ideally, it should minimize within-cluster variance, ensuring that firms within each group are sufficiently homogeneous in their payment behavior, minimizing the normalization cost. Moreover, it should maintain enough heterogeneity between clusters to capture meaningful differences between clusters and provide better economic intuition.

\begin{table}[htpb!]
\centering
\begin{threeparttable}
\caption{Firm Decomposition: Different Model Specifications}
\label{tab:cluster_robustness}
\begin{tabular}{lcccc}
\toprule
& & \multicolumn{3}{c}{Contribution to Gender Wage Gap} \\
\cmidrule(lr){3-5}
Group & Total Gap & Firm       & Sorting            & Bargaining  \\
      &           & Components & Components         & Components \\ 
& (1) & (2) & (3) & (4) \\
\midrule
K = 4 & -0.237 & -0.024 & -0.019 & -0.005 \\
& & (0.10) & (0.08) & (0.02) \\
\addlinespace
K = 6 & -0.237 & -0.031 & -0.021 & -0.010 \\
& & (0.13) & (0.09) & (0.04) \\
\addlinespace
K = 8 & -0.237 & -0.034 & -0.022 & -0.013 \\
& & (0.15) & (0.09) & (0.06) \\
\addlinespace
K = 10 (Baseline) & -0.237 & -0.033 & -0.022 & -0.010 \\
& & (0.14) & (0.09) & (0.04) \\
\midrule
AKM & -0.317 & -0.046 & -0.026 & -0.020 \\
& & (0.14) & (0.08) & (0.06) \\
K = 10 on LDCS & -0.317 & -0.025 & -0.027 & 0.002\\
 &  & (0.07) & (0.08) & (-0.01)\\
\bottomrule
\end{tabular}
\begin{tablenotes}
\small
\item \textit{Notes:} \textsuperscript{1}K represents the number of firm clusters in the grouped fixed effects models. \textsuperscript{2}AKM refers to the traditional \textcite{abowd_high_1999} specification under the largest dual connected set sample. \textsuperscript{3}Numbers in parenthesis represent the fraction of the overall gender wage gap that is attributed to the source described in column heading. \textsuperscript{4}Column (1) shows the total female-male wage gap in means. Column (2) presents the total contribution of firm-specific factors. Columns (3) and (4) further decompose the firm premium contribution into a sorting (explained) and a bargaining (unexplained) components, respectively. 
\end{tablenotes}
\end{threeparttable}
\end{table}

\clearpage
\section{Additional Tables and Figures}

\begin{figure}[htbp!]
    \centering
    \includegraphics[width=1\linewidth]{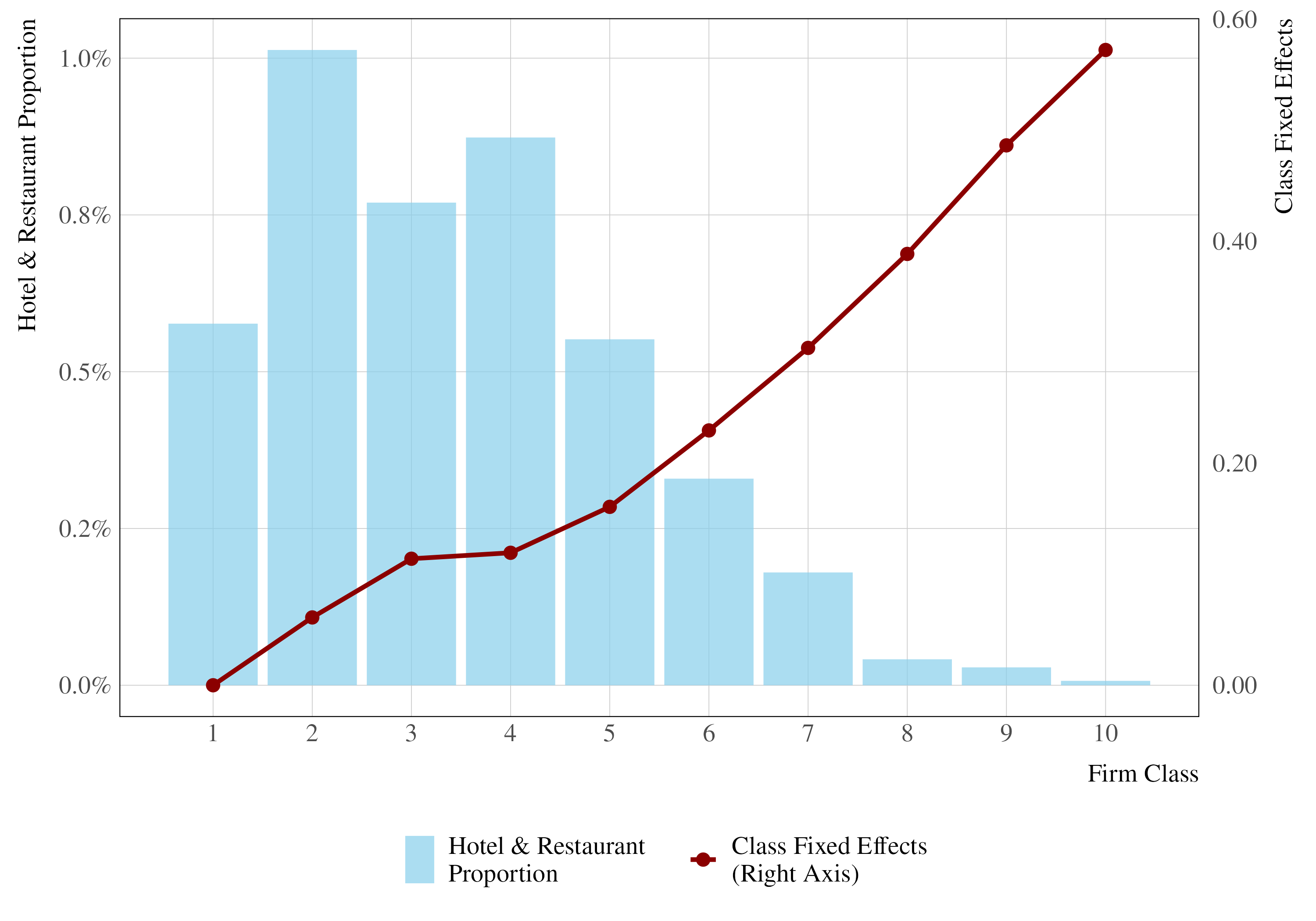}
    \caption{Estimated Effects and Hospitality Industry Proportions Per Firm Class}
    \caption*{\small\textit{Note:} \textsuperscript{1}Firm classes estimated through a kmeans clustering algorithm based on the distribution of logarithmic wages. \textsuperscript{2}Hotels and restaurants extracted from code 55 and 56 of the Brazilian \textit{CNAE} code of economic activities.}
    \label{fig:hotel_fe}
\end{figure}

\begin{table}[htbp!] \label{tab:sumstatldcs}
\centering
\caption{Descriptive Statistics by Gender: Largest Dual Connected Set}
\label{tab:desc_stats_ldcs}
\small
\begin{tabular}{l rr}
\toprule
\textbf{Features} & {\textbf{Female Workers}} & {\textbf{Male Workers}} \\
\midrule
\multicolumn{3}{l}{\textit{Firm Characteristics}} \\
\quad Number of Firms & {\num{24500}} & {\num{24500}} \\
\quad Firms with $\geq$ 10 Workers & {\num{21873}} & {\num{21873}} \\
\quad Firms with $\geq$ 50 Workers & {\num{12835}} & {\num{12835}} \\
\quad Mean Firm Size & 365 & 365 \\
\quad Median Firm Size & 55 & 55 \\
\addlinespace[0.5em]
\multicolumn{3}{l}{\textit{Worker Characteristics}} \\
\quad Education (\%) \\
\qquad Dropout & 19 & 22 \\
\qquad High School Graduates & 46 & 41 \\
\qquad Some College & 35 & 37 \\
\quad Age (\%) \\
\qquad $<$ 30 & 41 & 38 \\
\qquad 31--50 & 51 & 50 \\
\qquad $\geq$ 51 & 7 & 10 \\
\addlinespace[0.5em]
\multicolumn{3}{l}{\textit{Sector of Employment (\%)}} \\
\quad Primary & 1 & 2 \\
\quad Manufacturing & 19 & 28 \\
\quad Construction & 1 & 2 \\
\quad Trade & 15 & 15 \\
\quad Services & 65 & 54 \\
\addlinespace[0.5em]
\multicolumn{3}{l}{\textit{Occupation (\%)}} \\
\quad Scientific and Liberal Arts & 15 & 16 \\
\quad Technicians & 14 & 14 \\
\quad Administrative & 32 & 19 \\
\quad Managers & 5 & 8 \\
\quad Traders & 22 & 19 \\
\quad Rural & 1 & 2 \\
\quad Factory & 18 & 23 \\
\addlinespace[0.5em]
\multicolumn{3}{l}{\textit{Labor Market Outcomes}} \\
\quad Mean Experience (years) & 4.56 & 5.24 \\
\quad Mean log hourly Wage & 2.199 & 2.516 \\
\quad Variance of log hourly Wage & 0.639 & 0.802 \\
\addlinespace[0.5em]
\quad Worker-Year Observations & {\num{4464653}} & {\num{4469690}} \\
\quad Number of Workers & {\num{1831797}} & {\num{1812494}} \\
\quad Gender Fraction (\%) & 50 & 50 \\
\bottomrule
\end{tabular}
\caption*{\small\textit{Note:} \textsuperscript{1} Descriptive statistics calculated from the first year of each biennial sample's largest dual connected set (2010-2015). \textsuperscript{2} Percentages may not sum to 100\% due to rounding. \textsuperscript{3} The number of firms is the same for each gender since every firm in the cleaned sample employs both male and female workers.}
\end{table}

\begin{table}[htbp!]
\centering
\caption{Descriptive Statistics of Lower Firm Classes}
\label{tab:lowerclasses}
\resizebox{\linewidth}{!}{
\begin{tabular}{lrrrrrrrrrr}
\toprule
\multicolumn{1}{c}{ } & \multicolumn{2}{c}{class 1} & \multicolumn{2}{c}{class 2} & \multicolumn{2}{c}{class 3} & \multicolumn{2}{c}{class 4} & \multicolumn{2}{c}{class 5} \\
\cmidrule(l{3pt}r{3pt}){2-3} \cmidrule(l{3pt}r{3pt}){4-5} \cmidrule(l{3pt}r{3pt}){6-7} \cmidrule(l{3pt}r{3pt}){8-9} \cmidrule(l{3pt}r{3pt}){10-11}
\multicolumn{1}{c}{ } & \multicolumn{1}{c}{Female} & \multicolumn{1}{c}{Male} & \multicolumn{1}{c}{Female} & \multicolumn{1}{c}{Male} & \multicolumn{1}{c}{Female} & \multicolumn{1}{c}{Male} & \multicolumn{1}{c}{Female} & \multicolumn{1}{c}{Male} & \multicolumn{1}{c}{Female} & \multicolumn{1}{c}{Male} \\
\cmidrule(l{3pt}r{3pt}){2-2} \cmidrule(l{3pt}r{3pt}){3-3} \cmidrule(l{3pt}r{3pt}){4-4} \cmidrule(l{3pt}r{3pt}){5-5} \cmidrule(l{3pt}r{3pt}){6-6} \cmidrule(l{3pt}r{3pt}){7-7} \cmidrule(l{3pt}r{3pt}){8-8} \cmidrule(l{3pt}r{3pt}){9-9} \cmidrule(l{3pt}r{3pt}){10-10} \cmidrule(l{3pt}r{3pt}){11-11}
& (1) & (2) & (3) & (4) & (5) & (6) & (7) & (8) & (9) & (10) \\
\midrule
Number of Firms & \num{74865} & \num{74865} & \num{101454} & \num{101454} & \num{83715} & \num{83715} & \num{79165} & \num{79165} & \num{80060} & \num{80060}\\
Firms with $\geq$10 Workers & \num{21976} & \num{21976} & \num{35851} & \num{35851} & \num{32126} & \num{32126} & \num{28479} & \num{28479} & \num{37413} & \num{37413}\\
Firms with $\geq$50 Workers & \num{2382} & \num{2382} & \num{5259} & \num{5259} & \num{6176} & \num{6176} & \num{5284} & \num{5284} & \num{8437} & \num{8437}\\
Mean Firm Size & \num{18} & \num{18} & \num{18} & \num{18} & \num{22} & \num{22} & \num{23} & \num{23} & \num{31} & \num{31}\\
Median Firm Size & \num{5} & \num{5} & \num{6} & \num{6} & \num{7} & \num{7} & \num{6} & \num{6} & \num{9} & \num{9}\\
\addlinespace
Dropout & \num{0.54} & \num{0.52} & \num{0.36} & \num{0.44} & \num{0.28} & \num{0.40} & \num{0.31} & \num{0.40} & \num{0.27} & \num{0.39}\\
High School Graduates & \num{0.41} & \num{0.43} & \num{0.57} & \num{0.51} & \num{0.64} & \num{0.54} & \num{0.55} & \num{0.49} & \num{0.60} & \num{0.52}\\
Some College & \num{0.05} & \num{0.05} & \num{0.07} & \num{0.05} & \num{0.08} & \num{0.06} & \num{0.14} & \num{0.11} & \num{0.13} & \num{0.09}\\
\addlinespace
Age ($<$30) & \num{0.33} & \num{0.40} & \num{0.43} & \num{0.42} & \num{0.45} & \num{0.44} & \num{0.41} & \num{0.39} & \num{0.43} & \num{0.40}\\
Age 31-50 & \num{0.52} & \num{0.42} & \num{0.46} & \num{0.42} & \num{0.47} & \num{0.42} & \num{0.48} & \num{0.44} & \num{0.48} & \num{0.46}\\
Age ($\geq$51) & \num{0.13} & \num{0.17} & \num{0.09} & \num{0.14} & \num{0.07} & \num{0.12} & \num{0.09} & \num{0.15} & \num{0.08} & \num{0.13}\\
\addlinespace
Primary Sector & \num{0.05} & \num{0.09} & \num{0.03} & \num{0.05} & \num{0.02} & \num{0.04} & \num{0.03} & \num{0.05} & \num{0.02} & \num{0.03}\\
Manufacturing & \num{0.11} & \num{0.11} & \num{0.16} & \num{0.15} & \num{0.17} & \num{0.18} & \num{0.17} & \num{0.20} & \num{0.27} & \num{0.30}\\
Construction & \num{0.01} & \num{0.01} & \num{0.00} & \num{0.01} & \num{0.01} & \num{0.01} & \num{0.01} & \num{0.01} & \num{0.01} & \num{0.01}\\
Trade & \num{0.15} & \num{0.22} & \num{0.32} & \num{0.35} & \num{0.51} & \num{0.50} & \num{0.25} & \num{0.29} & \num{0.36} & \num{0.36}\\
Services & \num{0.68} & \num{0.57} & \num{0.49} & \num{0.44} & \num{0.29} & \num{0.27} & \num{0.54} & \num{0.44} & \num{0.35} & \num{0.31}\\
\addlinespace
Scientific and Liberal Arts & \num{0.02} & \num{0.02} & \num{0.02} & \num{0.02} & \num{0.02} & \num{0.02} & \num{0.05} & \num{0.04} & \num{0.04} & \num{0.03}\\
Technicians & \num{0.03} & \num{0.05} & \num{0.04} & \num{0.05} & \num{0.04} & \num{0.05} & \num{0.06} & \num{0.08} & \num{0.08} & \num{0.07}\\
Administrative & \num{0.20} & \num{0.14} & \num{0.34} & \num{0.15} & \num{0.39} & \num{0.17} & \num{0.36} & \num{0.19} & \num{0.36} & \num{0.16}\\
Managers & \num{0.02} & \num{0.03} & \num{0.03} & \num{0.05} & \num{0.04} & \num{0.06} & \num{0.04} & \num{0.05} & \num{0.03} & \num{0.05}\\
Traders & \num{0.56} & \num{0.44} & \num{0.39} & \num{0.40} & \num{0.33} & \num{0.36} & \num{0.32} & \num{0.29} & \num{0.27} & \num{0.29}\\
Rural & \num{0.04} & \num{0.10} & \num{0.03} & \num{0.05} & \num{0.02} & \num{0.03} & \num{0.02} & \num{0.05} & \num{0.01} & \num{0.02}\\
Factory & \num{0.13} & \num{0.22} & \num{0.15} & \num{0.28} & \num{0.16} & \num{0.32} & \num{0.15} & \num{0.30} & \num{0.21} & \num{0.38}\\
\addlinespace
Mean experience (years) & \num{2.789} & \num{3.031} & \num{2.818} & \num{3.136} & \num{3.238} & \num{3.521} & \num{3.405} & \num{3.959} & \num{3.635} & \num{4.033}\\
Mean Log-Wage & \num{1.358} & \num{1.466} & \num{1.514} & \num{1.624} & \num{1.660} & \num{1.775} & \num{1.667} & \num{1.860} & \num{1.806} & \num{1.949}\\
Variance of Log-Wage & \num{0.061} & \num{0.101} & \num{0.058} & \num{0.087} & \num{0.074} & \num{0.117} & \num{0.156} & \num{0.206} & \num{0.113} & \num{0.155}\\
Worker-years observations & \num{775686} & \num{540010} & \num{973659} & \num{888531} & \num{953073} & \num{921694} & \num{942303} & \num{881295} & \num{1143801} & \num{1317477}\\
Number of Workers & \num{437700} & \num{311953} & \num{604485} & \num{544422} & \num{607113} & \num{593804} & \num{626618} & \num{583517} & \num{667466} & \num{757214}\\
Fraction of Women & \num{0.59} & \num{0.41} & \num{0.52} & \num{0.48} & \num{0.51} & \num{0.49} & \num{0.52} & \num{0.48} & \num{0.46} & \num{0.54}\\
\bottomrule
\end{tabular}}
\end{table}

\begin{table}[htbp!]
\centering
\caption{Descriptive Statistics of Upper Firm Classes}
\label{tab:upperclasses}
\resizebox{\linewidth}{!}{
\begin{tabular}{lrrrrrrrrrr}
\toprule
\multicolumn{1}{c}{ } & \multicolumn{2}{c}{class 6} & \multicolumn{2}{c}{class 7} & \multicolumn{2}{c}{class 8} & \multicolumn{2}{c}{class 9} & \multicolumn{2}{c}{class 10} \\
\cmidrule(l{3pt}r{3pt}){2-3} \cmidrule(l{3pt}r{3pt}){4-5} \cmidrule(l{3pt}r{3pt}){6-7} \cmidrule(l{3pt}r{3pt}){8-9} \cmidrule(l{3pt}r{3pt}){10-11}
\multicolumn{1}{c}{ } & \multicolumn{1}{c}{Female} & \multicolumn{1}{c}{Male} & \multicolumn{1}{c}{Female} & \multicolumn{1}{c}{Male} & \multicolumn{1}{c}{Female} & \multicolumn{1}{c}{Male} & \multicolumn{1}{c}{Female} & \multicolumn{1}{c}{Male} & \multicolumn{1}{c}{Female} & \multicolumn{1}{c}{Male} \\
\cmidrule(l{3pt}r{3pt}){2-2} \cmidrule(l{3pt}r{3pt}){3-3} \cmidrule(l{3pt}r{3pt}){4-4} \cmidrule(l{3pt}r{3pt}){5-5} \cmidrule(l{3pt}r{3pt}){6-6} \cmidrule(l{3pt}r{3pt}){7-7} \cmidrule(l{3pt}r{3pt}){8-8} \cmidrule(l{3pt}r{3pt}){9-9} \cmidrule(l{3pt}r{3pt}){10-10} \cmidrule(l{3pt}r{3pt}){11-11}
& (1) & (2) & (3) & (4) & (5) & (6) & (7) & (8) & (9) & (10) \\
\midrule
Number of Firms & \num{70196} & \num{70196} & \num{46157} & \num{46157} & \num{25680} & \num{25680} & \num{17077} & \num{17077} & \num{7762} & \num{7762}\\
Firms with $\geq$10 Workers & \num{35405} & \num{35405} & \num{24479} & \num{24479} & \num{14465} & \num{14465} & \num{11627} & \num{11627} & \num{5347} & \num{5347}\\
Firms with $\geq$50 Workers & \num{8915} & \num{8915} & \num{7124} & \num{7124} & \num{4946} & \num{4946} & \num{4855} & \num{4855} & \num{2448} & \num{2448}\\
Mean Firm Size & \num{40} & \num{40} & \num{55} & \num{55} & \num{81} & \num{81} & \num{110} & \num{110} & \num{148} & \num{148}\\
Median Firm Size & \num{10} & \num{10} & \num{10} & \num{10} & \num{12} & \num{12} & \num{20} & \num{20} & \num{20} & \num{20}\\
\addlinespace
Dropout & \num{0.19} & \num{0.31} & \num{0.12} & \num{0.23} & \num{0.07} & \num{0.12} & \num{0.02} & \num{0.05} & \num{0.01} & \num{0.02}\\
High School Graduates & \num{0.57} & \num{0.53} & \num{0.52} & \num{0.51} & \num{0.40} & \num{0.43} & \num{0.15} & \num{0.23} & \num{0.08} & \num{0.11}\\
Some College & \num{0.24} & \num{0.16} & \num{0.36} & \num{0.26} & \num{0.54} & \num{0.44} & \num{0.83} & \num{0.73} & \num{0.91} & \num{0.87}\\
\addlinespace
Age ($<$30) & \num{0.41} & \num{0.37} & \num{0.39} & \num{0.36} & \num{0.39} & \num{0.35} & \num{0.40} & \num{0.33} & \num{0.31} & \num{0.25}\\
Age 31-50 & \num{0.50} & \num{0.49} & \num{0.52} & \num{0.51} & \num{0.53} & \num{0.53} & \num{0.52} & \num{0.54} & \num{0.58} & \num{0.60}\\
Age ($\geq$51) & \num{0.08} & \num{0.12} & \num{0.08} & \num{0.11} & \num{0.07} & \num{0.10} & \num{0.06} & \num{0.11} & \num{0.09} & \num{0.13}\\
\addlinespace
Primary Sector & \num{0.01} & \num{0.01} & \num{0.00} & \num{0.00} & \num{0.00} & \num{0.00} & \num{0.00} & \num{0.00} & \num{0.00} & \num{0.01}\\
Manufacturing & \num{0.29} & \num{0.36} & \num{0.21} & \num{0.34} & \num{0.14} & \num{0.29} & \num{0.12} & \num{0.19} & \num{0.22} & \num{0.27}\\
Construction & \num{0.01} & \num{0.02} & \num{0.01} & \num{0.02} & \num{0.01} & \num{0.02} & \num{0.02} & \num{0.02} & \num{0.01} & \num{0.02}\\
Trade & \num{0.23} & \num{0.25} & \num{0.16} & \num{0.21} & \num{0.12} & \num{0.15} & \num{0.08} & \num{0.08} & \num{0.13} & \num{0.13}\\
Services & \num{0.47} & \num{0.36} & \num{0.61} & \num{0.43} & \num{0.72} & \num{0.53} & \num{0.78} & \num{0.70} & \num{0.63} & \num{0.58}\\
\addlinespace
Scientific and Liberal Arts & \num{0.09} & \num{0.06} & \num{0.14} & \num{0.10} & \num{0.23} & \num{0.17} & \num{0.32} & \num{0.31} & \num{0.36} & \num{0.36}\\
Technicians & \num{0.13} & \num{0.12} & \num{0.19} & \num{0.15} & \num{0.23} & \num{0.20} & \num{0.13} & \num{0.17} & \num{0.14} & \num{0.17}\\
Administrative & \num{0.34} & \num{0.18} & \num{0.34} & \num{0.19} & \num{0.33} & \num{0.21} & \num{0.36} & \num{0.23} & \num{0.29} & \num{0.19}\\
Managers & \num{0.04} & \num{0.05} & \num{0.04} & \num{0.05} & \num{0.06} & \num{0.08} & \num{0.11} & \num{0.14} & \num{0.16} & \num{0.21}\\
Traders & \num{0.21} & \num{0.20} & \num{0.16} & \num{0.16} & \num{0.10} & \num{0.11} & \num{0.06} & \num{0.06} & \num{0.04} & \num{0.04}\\
Rural & \num{0.00} & \num{0.01} & \num{0.00} & \num{0.01} & \num{0.00} & \num{0.00} & \num{0.00} & \num{0.00} & \num{0.00} & \num{0.00}\\
Factory & \num{0.19} & \num{0.39} & \num{0.12} & \num{0.34} & \num{0.06} & \num{0.23} & \num{0.02} & \num{0.09} & \num{0.01} & \num{0.03}\\
\addlinespace
Mean experience (years) & \num{4.264} & \num{4.687} & \num{4.664} & \num{5.089} & \num{5.049} & \num{5.567} & \num{5.367} & \num{6.218} & \num{5.889} & \num{6.418}\\
Mean Log-Wage & \num{1.990} & \num{2.162} & \num{2.258} & \num{2.399} & \num{2.575} & \num{2.729} & \num{2.986} & \num{3.201} & \num{3.497} & \num{3.749}\\
Variance of Log-Wage & \num{0.210} & \num{0.253} & \num{0.302} & \num{0.340} & \num{0.377} & \num{0.463} & \num{0.374} & \num{0.470} & \num{0.416} & \num{0.479}\\
Worker-years observations & \num{1263982} & \num{1561044} & \num{1168436} & \num{1361439} & \num{990776} & \num{1080347} & \num{819258} & \num{1051482} & \num{472259} & \num{680152}\\
Number of Workers & \num{675731} & \num{823446} & \num{587173} & \num{686629} & \num{449740} & \num{499661} & \num{334985} & \num{427467} & \num{179633} & \num{251992}\\
Fraction of Women & \num{0.45} & \num{0.55} & \num{0.46} & \num{0.54} & \num{0.48} & \num{0.52} & \num{0.44} & \num{0.56} & \num{0.41} & \num{0.59}\\
\bottomrule
\end{tabular}}
\end{table}

\begin{table}[htbp!]
\centering
\caption{Descriptive Statistics of Lower Worker Types}
\resizebox{\linewidth}{!}{
\begin{tabular}{lrrrrrrrrrr}
\toprule
\multicolumn{1}{c}{ } & \multicolumn{2}{c}{type 1} & \multicolumn{2}{c}{type 2} & \multicolumn{2}{c}{type 3} & \multicolumn{2}{c}{type 4} & \multicolumn{2}{c}{type 5} \\
\cmidrule(l{3pt}r{3pt}){2-3} \cmidrule(l{3pt}r{3pt}){4-5} \cmidrule(l{3pt}r{3pt}){6-7} \cmidrule(l{3pt}r{3pt}){8-9} \cmidrule(l{3pt}r{3pt}){10-11}
\multicolumn{1}{c}{ } & \multicolumn{1}{c}{Female} & \multicolumn{1}{c}{Male} & \multicolumn{1}{c}{Female} & \multicolumn{1}{c}{Male} & \multicolumn{1}{c}{Female} & \multicolumn{1}{c}{Male} & \multicolumn{1}{c}{Female} & \multicolumn{1}{c}{Male} & \multicolumn{1}{c}{Female} & \multicolumn{1}{c}{Male} \\
\cmidrule(l{3pt}r{3pt}){2-2} \cmidrule(l{3pt}r{3pt}){3-3} \cmidrule(l{3pt}r{3pt}){4-4} \cmidrule(l{3pt}r{3pt}){5-5} \cmidrule(l{3pt}r{3pt}){6-6} \cmidrule(l{3pt}r{3pt}){7-7} \cmidrule(l{3pt}r{3pt}){8-8} \cmidrule(l{3pt}r{3pt}){9-9} \cmidrule(l{3pt}r{3pt}){10-10} \cmidrule(l{3pt}r{3pt}){11-11}
& (1) & (2) & (3) & (4) & (5) & (6) & (7) & (8) & (9) & (10) \\
\midrule
Number of Firms & \num{142765} & \num{115262} & \num{195800} & \num{167929} & \num{244655} & \num{237647} & \num{160499} & \num{173160} & \num{170689} & \num{213850}\\
Firms with $\geq$10 Workers & \num{18551} & \num{18551} & \num{37529} & \num{37529} & \num{73454} & \num{73454} & \num{33732} & \num{33732} & \num{54100} & \num{54100}\\
Firms with $\geq$50 Workers & \num{2262} & \num{2262} & \num{5428} & \num{5428} & \num{12445} & \num{12445} & \num{5354} & \num{5354} & \num{9744} & \num{9744}\\
Mean Firm Size & \num{7} & \num{7} & \num{9} & \num{9} & \num{14} & \num{14} & \num{9} & \num{9} & \num{13} & \num{13}\\
Median Firm Size & \num{2} & \num{2} & \num{3} & \num{3} & \num{4} & \num{4} & \num{3} & \num{3} & \num{3} & \num{3}\\
\addlinespace
Dropout & \num{0.44} & \num{0.46} & \num{0.35} & \num{0.41} & \num{0.30} & \num{0.38} & \num{0.28} & \num{0.35} & \num{0.16} & \num{0.33}\\
High School Graduates & \num{0.51} & \num{0.49} & \num{0.57} & \num{0.52} & \num{0.61} & \num{0.54} & \num{0.57} & \num{0.51} & \num{0.58} & \num{0.53}\\
Some College & \num{0.05} & \num{0.05} & \num{0.08} & \num{0.07} & \num{0.09} & \num{0.07} & \num{0.15} & \num{0.14} & \num{0.26} & \num{0.13}\\
Age ($<$30) & \num{0.40} & \num{0.52} & \num{0.46} & \num{0.55} & \num{0.45} & \num{0.49} & \num{0.41} & \num{0.41} & \num{0.44} & \num{0.39}\\
Age 31-50 & \num{0.47} & \num{0.33} & \num{0.44} & \num{0.33} & \num{0.46} & \num{0.38} & \num{0.48} & \num{0.44} & \num{0.48} & \num{0.48}\\
Age ($\geq$51) & \num{0.11} & \num{0.13} & \num{0.08} & \num{0.11} & \num{0.08} & \num{0.12} & \num{0.09} & \num{0.13} & \num{0.07} & \num{0.12}\\
\addlinespace
Primary Sector & \num{0.04} & \num{0.06} & \num{0.02} & \num{0.04} & \num{0.02} & \num{0.03} & \num{0.02} & \num{0.04} & \num{0.01} & \num{0.03}\\
Manufacturing & \num{0.17} & \num{0.17} & \num{0.20} & \num{0.20} & \num{0.23} & \num{0.24} & \num{0.20} & \num{0.24} & \num{0.21} & \num{0.27}\\
Construction & \num{0.01} & \num{0.01} & \num{0.01} & \num{0.01} & \num{0.01} & \num{0.01} & \num{0.01} & \num{0.01} & \num{0.01} & \num{0.02}\\
Trade & \num{0.21} & \num{0.29} & \num{0.29} & \num{0.34} & \num{0.33} & \num{0.34} & \num{0.26} & \num{0.29} & \num{0.26} & \num{0.28}\\
\addlinespace
Services & \num{0.57} & \num{0.47} & \num{0.47} & \num{0.41} & \num{0.42} & \num{0.38} & \num{0.51} & \num{0.42} & \num{0.52} & \num{0.41}\\
\addlinespace
Scientific and Liberal Arts & \num{0.01} & \num{0.01} & \num{0.02} & \num{0.02} & \num{0.02} & \num{0.02} & \num{0.04} & \num{0.04} & \num{0.07} & \num{0.04}\\
Technicians & \num{0.04} & \num{0.05} & \num{0.04} & \num{0.06} & \num{0.05} & \num{0.06} & \num{0.08} & \num{0.07} & \num{0.14} & \num{0.11}\\
Administrative & \num{0.27} & \num{0.21} & \num{0.34} & \num{0.22} & \num{0.36} & \num{0.19} & \num{0.37} & \num{0.18} & \num{0.42} & \num{0.19}\\
Managers & \num{0.01} & \num{0.01} & \num{0.01} & \num{0.02} & \num{0.02} & \num{0.03} & \num{0.03} & \num{0.05} & \num{0.04} & \num{0.03}\\
Traders & \num{0.47} & \num{0.38} & \num{0.38} & \num{0.36} & \num{0.33} & \num{0.34} & \num{0.31} & \num{0.28} & \num{0.18} & \num{0.23}\\
Rural & \num{0.04} & \num{0.07} & \num{0.02} & \num{0.04} & \num{0.02} & \num{0.04} & \num{0.02} & \num{0.04} & \num{0.00} & \num{0.02}\\
Factory & \num{0.16} & \num{0.26} & \num{0.18} & \num{0.29} & \num{0.20} & \num{0.32} & \num{0.16} & \num{0.34} & \num{0.14} & \num{0.37}\\
\addlinespace
Mean experience (years) & \num{2.552} & \num{2.510} & \num{2.645} & \num{2.530} & \num{3.034} & \num{2.993} & \num{3.517} & \num{3.765} & \num{4.190} & \num{4.230}\\
Mean Log-Wage & \num{1.395} & \num{1.433} & \num{1.519} & \num{1.550} & \num{1.624} & \num{1.692} & \num{1.770} & \num{1.955} & \num{2.016} & \num{2.028}\\
Variance of Log-Wage & \num{0.032} & \num{0.037} & \num{0.052} & \num{0.062} & \num{0.061} & \num{0.174} & \num{0.248} & \num{0.463} & \num{0.057} & \num{0.054}\\
Worker-years observations & \num{742040} & \num{429828} & \num{1233122} & \num{834399} & \num{2216912} & \num{1829454} & \num{951776} & \num{999212} & \num{1378256} & \num{1865438}\\
Number of Workers & \num{634290} & \num{383994} & \num{1006451} & \num{710878} & \num{1517927} & \num{1330372} & \num{808545} & \num{854727} & \num{990969} & \num{1316173}\\
Fraction of Women & \num{0.63} & \num{0.37} & \num{0.60} & \num{0.40} & \num{0.55} & \num{0.45} & \num{0.49} & \num{0.51} & \num{0.42} & \num{0.58}\\
\bottomrule
\end{tabular}}
\end{table}

\begin{table}[htbp!]
\centering
\caption{Descriptive Statistics of Upper Worker Types}
\resizebox{\linewidth}{!}{
\begin{tabular}{lrrrrrrrrrr}
\toprule
\multicolumn{1}{c}{ } & \multicolumn{2}{c}{type 6} & \multicolumn{2}{c}{type 7} & \multicolumn{2}{c}{type 8} & \multicolumn{2}{c}{type 9} & \multicolumn{2}{c}{type 10} \\
\cmidrule(l{3pt}r{3pt}){2-3} \cmidrule(l{3pt}r{3pt}){4-5} \cmidrule(l{3pt}r{3pt}){6-7} \cmidrule(l{3pt}r{3pt}){8-9} \cmidrule(l{3pt}r{3pt}){10-11}
\multicolumn{1}{c}{ } & \multicolumn{1}{c}{Female} & \multicolumn{1}{c}{Male} & \multicolumn{1}{c}{Female} & \multicolumn{1}{c}{Male} & \multicolumn{1}{c}{Female} & \multicolumn{1}{c}{Male} & \multicolumn{1}{c}{Female} & \multicolumn{1}{c}{Male} & \multicolumn{1}{c}{Female} & \multicolumn{1}{c}{Male} \\
\cmidrule(l{3pt}r{3pt}){2-2} \cmidrule(l{3pt}r{3pt}){3-3} \cmidrule(l{3pt}r{3pt}){4-4} \cmidrule(l{3pt}r{3pt}){5-5} \cmidrule(l{3pt}r{3pt}){6-6} \cmidrule(l{3pt}r{3pt}){7-7} \cmidrule(l{3pt}r{3pt}){8-8} \cmidrule(l{3pt}r{3pt}){9-9} \cmidrule(l{3pt}r{3pt}){10-10} \cmidrule(l{3pt}r{3pt}){11-11}
& (1) & (2) & (3) & (4) & (5) & (6) & (7) & (8) & (9) & (10) \\
\midrule
Number of Firms & \num{132151} & \num{167588} & \num{87574} & \num{115373} & \num{51576} & \num{62286} & \num{29157} & \num{39610} & \num{14021} & \num{21175}\\
Firms with $\geq$10 Workers & \num{40967} & \num{40967} & \num{32009} & \num{32009} & \num{15816} & \num{15816} & \num{11785} & \num{11785} & \num{6470} & \num{6470}\\
Firms with $\geq$50 Workers & \num{7489} & \num{7489} & \num{6674} & \num{6674} & \num{3481} & \num{3481} & \num{2998} & \num{2998} & \num{1676} & \num{1676}\\
Mean Firm Size & \num{13} & \num{13} & \num{15} & \num{15} & \num{15} & \num{15} & \num{21} & \num{21} & \num{22} & \num{22}\\
Median Firm Size & \num{3} & \num{3} & \num{3} & \num{3} & \num{2} & \num{2} & \num{3} & \num{3} & \num{3} & \num{3}\\
\addlinespace
Dropout & \num{0.12} & \num{0.26} & \num{0.05} & \num{0.18} & \num{0.03} & \num{0.08} & \num{0.01} & \num{0.04} & \num{0.00} & \num{0.01}\\
High School Graduates & \num{0.43} & \num{0.48} & \num{0.28} & \num{0.40} & \num{0.14} & \num{0.25} & \num{0.07} & \num{0.15} & \num{0.04} & \num{0.07}\\
Some College & \num{0.45} & \num{0.26} & \num{0.67} & \num{0.42} & \num{0.83} & \num{0.67} & \num{0.92} & \num{0.82} & \num{0.96} & \num{0.92}\\
\addlinespace
Age ($<$30) & \num{0.38} & \num{0.32} & \num{0.35} & \num{0.29} & \num{0.28} & \num{0.24} & \num{0.19} & \num{0.17} & \num{0.08} & \num{0.08}\\
Age 31-50 & \num{0.53} & \num{0.54} & \num{0.56} & \num{0.57} & \num{0.62} & \num{0.61} & \num{0.68} & \num{0.66} & \num{0.73} & \num{0.69}\\
\addlinespace
Age ($\geq$51) & \num{0.07} & \num{0.13} & \num{0.07} & \num{0.12} & \num{0.09} & \num{0.13} & \num{0.11} & \num{0.15} & \num{0.16} & \num{0.21}\\
\addlinespace
Primary Sector & \num{0.01} & \num{0.02} & \num{0.00} & \num{0.01} & \num{0.01} & \num{0.01} & \num{0.00} & \num{0.01} & \num{0.00} & \num{0.01}\\
Manufacturing & \num{0.16} & \num{0.28} & \num{0.15} & \num{0.29} & \num{0.16} & \num{0.28} & \num{0.18} & \num{0.29} & \num{0.18} & \num{0.26}\\
Construction & \num{0.01} & \num{0.02} & \num{0.01} & \num{0.02} & \num{0.01} & \num{0.02} & \num{0.02} & \num{0.02} & \num{0.01} & \num{0.02}\\
Trade & \num{0.19} & \num{0.23} & \num{0.14} & \num{0.18} & \num{0.12} & \num{0.14} & \num{0.11} & \num{0.13} & \num{0.12} & \num{0.13}\\
Services & \num{0.64} & \num{0.45} & \num{0.69} & \num{0.50} & \num{0.71} & \num{0.55} & \num{0.69} & \num{0.56} & \num{0.67} & \num{0.58}\\
\addlinespace
Scientific and Liberal Arts & \num{0.17} & \num{0.09} & \num{0.31} & \num{0.18} & \num{0.42} & \num{0.31} & \num{0.44} & \num{0.38} & \num{0.38} & \num{0.36}\\
Technicians & \num{0.22} & \num{0.15} & \num{0.22} & \num{0.20} & \num{0.15} & \num{0.19} & \num{0.11} & \num{0.17} & \num{0.07} & \num{0.09}\\
Administrative & \num{0.35} & \num{0.19} & \num{0.29} & \num{0.19} & \num{0.22} & \num{0.15} & \num{0.19} & \num{0.13} & \num{0.15} & \num{0.10}\\
Managers & \num{0.06} & \num{0.06} & \num{0.08} & \num{0.08} & \num{0.12} & \num{0.13} & \num{0.20} & \num{0.20} & \num{0.38} & \num{0.41}\\
Traders & \num{0.13} & \num{0.16} & \num{0.07} & \num{0.11} & \num{0.08} & \num{0.08} & \num{0.05} & \num{0.05} & \num{0.02} & \num{0.03}\\
Rural & \num{0.00} & \num{0.01} & \num{0.00} & \num{0.00} & \num{0.00} & \num{0.00} & \num{0.00} & \num{0.00} & \num{0.00} & \num{0.00}\\
Factory & \num{0.07} & \num{0.33} & \num{0.03} & \num{0.24} & \num{0.01} & \num{0.12} & \num{0.01} & \num{0.06} & \num{0.00} & \num{0.02}\\
\addlinespace
Mean experience (years) & \num{4.943} & \num{5.413} & \num{5.684} & \num{6.145} & \num{6.201} & \num{6.661} & \num{7.124} & \num{7.294} & \num{8.331} & \num{8.247}\\
Mean Log-Wage & \num{2.347} & \num{2.400} & \num{2.789} & \num{2.765} & \num{3.196} & \num{3.272} & \num{3.624} & \num{3.662} & \num{4.157} & \num{4.226}\\
Variance of Log-Wage & \num{0.200} & \num{0.202} & \num{0.074} & \num{0.080} & \num{0.232} & \num{0.213} & \num{0.080} & \num{0.085} & \num{0.103} & \num{0.131}\\
Worker-years observations & \num{1058198} & \num{1467569} & \num{901644} & \num{1216454} & \num{486706} & \num{666147} & \num{373527} & \num{612301} & \num{161052} & \num{362669}\\
Number of Workers & \num{724998} & \num{996335} & \num{529627} & \num{728982} & \num{315684} & \num{437215} & \num{216089} & \num{351364} & \num{91719} & \num{200886}\\
Fraction of Women & \num{0.42} & \num{0.58} & \num{0.43} & \num{0.57} & \num{0.42} & \num{0.58} & \num{0.38} & \num{0.62} & \num{0.31} & \num{0.69}\\
\bottomrule
\end{tabular}}
\end{table}

\begin{table}[htp!]
\centering
\begin{threeparttable}
\centering
\caption{Wage Levels for Males and Females under Different Scenarios}
\label{tab:base_wages}
\begin{tabular}{l cccccccc}
\hline
& \multicolumn{2}{c}{Baseline} & \multicolumn{2}{c}{Separable} & \multicolumn{2}{c}{Constant} & \multicolumn{2}{c}{Constant Firm} \\
& \multicolumn{2}{c}{} & \multicolumn{2}{c}{Market} & \multicolumn{2}{c}{Returns} & \multicolumn{2}{c}{Allocation} \\
\cmidrule(lr){2-3} \cmidrule(lr){4-5} \cmidrule(lr){6-7} \cmidrule(lr){8-9}
Group  & Female & Male & Female & Male & Female & Male & Female & Male \\
& (1) & (2) & (3) & (4) & (5) & (6) & (7) & (8) \\
\hline
All & 2.10 & 2.33 & 2.11 & 2.30 & 2.16 & 2.25 & 2.19 & 2.22 \\
\hline
\multicolumn{9}{l}{\textit{Education}} \\
No HS & 1.62 & 1.92 & 1.65 & 1.92 & 1.72 & 1.84 & 1.77 & 1.80 \\
HS & 1.90 & 2.14 & 1.92 & 2.13 & 1.97 & 2.07 & 2.01 & 2.04 \\
College & 2.92 & 3.27 & 2.87 & 3.17 & 2.94 & 3.10 & 3.00 & 3.03 \\
\hline
\multicolumn{9}{l}{\textit{Age}} \\
<30 & 1.97 & 2.06 & 1.98 & 2.06 & 2.01 & 2.03 & 2.01 & 2.03 \\
31-50 & 2.18 & 2.49 & 2.18 & 2.44 & 2.25 & 2.38 & 2.30 & 2.33 \\
50> & 2.09 & 2.42 & 2.08 & 2.36 & 2.15 & 2.29 & 2.21 & 2.24 \\
\hline
\multicolumn{9}{l}{\textit{Firm Size}} \\
Firms <10 & 1.66 & 1.78 & 1.65 & 1.77 & 1.70 & 1.72 & 1.70 & 1.73 \\
Firms 10-50 & 1.77 & 1.90 & 1.77 & 1.90 & 1.82 & 1.85 & 1.83 & 1.84 \\
Firms 51> & 1.78 & 1.99 & 1.79 & 1.96 & 1.83 & 1.91 & 1.86 & 1.88 \\
\hline
\multicolumn{9}{l}{\textit{Occupations}} \\
Hotel and Restaurants & 1.62 & 1.74 & 1.62 & 1.74 & 1.71 & 1.67 & 1.67 & 1.70 \\
Engineers \& Economists & 2.91 & 3.30 & 2.88 & 3.23 & 2.98 & 3.13 & 3.03 & 3.08 \\
Managers & 3.03 & 3.35 & 2.90 & 3.12 & 2.96 & 3.05 & 2.99 & 3.02 \\
\bottomrule
\end{tabular}
\begin{tablenotes}
\small
\item \textit{Notes:} \textsuperscript{1}All values represent base wages in log scale. \textsuperscript{2}Baseline is observed wages. \textsuperscript{3}Separable Market assumes no complementarity in worker-firm interactions. \textsuperscript{4}Constant Returns equalizes means and variances of realized worker-firm interactions. \textsuperscript{5}Constant Firm Allocation equalizes firm-specific probabilities.
\end{tablenotes}
\end{threeparttable}
\end{table}

\clearpage



\end{appendices}

\end{document}